\algnewcommand{\IfThenElse}[3]{
	\State \algorithmicif\ #1\ \algorithmicthen\ #2\ \algorithmicelse\ #3}
\algnewcommand{\StateFor}[2]{
	\State  #1\ \algorithmicfor\ #2}
\newcommand{\rgiven}{\, \right| \,}            
\newcommand{\given}{\, | \,}                     
\newcommand{\bbone}{\mathbbm{1}}
\newcommand{\bbE}{\mathbb{E}}
\newcommand{\bbR}{\mathbb{R}}
\newcommand{\bA}{\boldsymbol{A}}
\newcommand{\bI}{\boldsymbol{I}}  
\newcommand{\bX}{\boldsymbol{X}}
\newcommand{\by}{\boldsymbol{y}}
\newcommand{\bz}{\boldsymbol{z}}
\newcommand{\mN}{\mathcal{N}}
\newcommand{\mO}{\mathcal{O}}
\newcommand{\bmO}{\boldsymbol{\mathcal{O}}}
\newcommand{\mP}{\mathcal{P}}
\newcommand{\bmP}{\boldsymbol{\mathcal{P}}}
\newcommand{\mR}{\mathcal{R}}
\newcommand{\bmR}{\boldsymbol{\mathcal{R}}}
\DeclareMathOperator{\diag}{diag}
\DeclareMathOperator{\argmin}{argmin}
\DeclareMathOperator{\argmax}{argmax}
\DeclareMathOperator{\sign}{sign}
\DeclareMathOperator{\AND}{AND}
\DeclareMathOperator{\OR}{OR}
\DeclareMathOperator{\equi}{equi}
\DeclareMathOperator{\sdp}{sdp}
\DeclareMathOperator{\re}{re}
\DeclareMathOperator{\swap}{swap}
\DeclareMathOperator{\FDR}{FDR}
\DeclareMathOperator{\mFDR}{mFDR}
\DeclareMathOperator{\FDP}{FDP}
\DeclareMathOperator{\TPP}{TPP}
\newtheorem{proposition}{Proposition}
\newtheorem{lemma}{Lemma}[section]
\newtheorem{definition}{Definition}[section]
\newtheorem{remark}{Remark}[section]
\title{GGM knockoff filter: False Discovery Rate Control for Gaussian Graphical Models}
\author{Jinzhou Li \\
	Seminar f{\"u}r Statistik, ETH Z{\"u}rich, Switzerland\\
	and \\
	Marloes H. Maathuis \\
	Seminar f{\"u}r Statistik, ETH Z{\"u}rich, Switzerland}
\date{\today}
\begin{document}

\maketitle


\begin{abstract}
We propose a new method to learn the structure of a Gaussian graphical model with finite sample false discovery rate control. Our method builds on the knockoff framework of Barber and Cand\`{e}s for linear models. We extend their approach to the graphical model setting by using a local (node-based) and a global (graph-based) step: we construct knockoffs and feature statistics for each node locally, and then solve a global optimization problem to determine a threshold for each node. We then estimate the neighborhood of each node, by comparing its feature statistics to its threshold, resulting in our graph estimate. 
Our proposed method is very flexible, in the sense that there is freedom in the choice of knockoffs, feature statistics, and the way in which the final graph estimate is obtained. For any given data set, it is not clear a priori what choices of these hyperparameters are optimal. We therefore use a sample-splitting-recycling procedure that first uses half of the samples to select the hyperparameters, and then learns the graph using all samples, in such a way that the finite sample FDR control still holds. We compare our method to several competitors in simulations and on a real data set.
\end{abstract}


\section{Introduction}\label{sec:intro}

Gaussian graphical models are used to model multivariate distributions in many scientific fields, including biology, economics, health science, social science
\citep[e.g.,][]{lafit2019partial, shin2014atlas, giudici2016graphical, ahmed2009recovering, kalisch2010understanding}.
Formally, the model is defined as follows. To a multivariate Gaussian random vector $X= (X_1, \dots, X_p)^T$ with mean $\mu$ and covariance matrix $\Omega^{-1}$, we assign an undirected graph $G=(V,E)$, with node set $V = [p] = \{ 1,\dots,p \}$ and edge set $E = \left\{ (i,j) \in [p]^2: \left. X_i \not\!\perp\!\!\!\perp X_j \rgiven X_{ [p] \backslash \{i,j\} } \text{ and } i \neq j  \right\} $, where for any set $H \subseteq [p]$, $X_{ H } = (X_i : i\in H)$ and $ \left. X_i \not\!\perp\!\!\!\perp X_j \rgiven X_{ H }$ indicates that $X_i$ and $X_j$ are dependent given  $X_{ H }$. We refer to \cite{HandbookGraphicalModels19} for an overview of recent developments in graphical models.  

Structure learning of Gaussian graphical models aims to obtain an estimator $\widehat{E}$ of the true edge set $E$, based on $n$ independent and identically distributed (i.i.d.) observations of $X$ \citep[see, e.g.,][]{drton2017structure}.
To obtain a reliable outcome and alleviate reproducibility issues \citep[see, e.g.,][]{ioannidis2005most, begley2012drug, baker20161}, it is important to control the number of false positive edges.
In this paper, we propose a structure learning approach that controls the finite sample false discovery rate (FDR) as defined by \cite{benjamini1995controlling}. That is, our estimated edge set $\widehat{E}$ satisfies 
\begin{align}\label{FDRgraphGoal}
\FDR = \bbE \left( \frac{ | \widehat{F} | }{ | \widehat{E} | \vee 1} \right) \leq q,
\end{align}
where $q \in [0,1]$ is a pre-set nominal FDR level, $\widehat{F} = \widehat{E} \backslash E$ is the set of the falsely discovered edges,  $|\widehat{E}| \vee 1 = \max \left( |\widehat{E}|, 1 \right) $, and $|\cdot|$ is the cardinality of a set (where in the case of an edge set, we count $(i,j)$ and $(j,i)$ as a single edge). The quantity inside the expectation in \eqref{FDRgraphGoal} is also known as the false discovery proportion (FDP).


Research towards FDR control in this context includes works based on multiple testing approaches. In low-dimensional cases, \cite{drton2007multiple} suggested testing pairwise partial correlations. After obtaining the corresponding p-values, the Benjamini-Yekutieli procedure \citep{benjamini2001control} can be applied to recover the graph structure with finite sample FDR control without any additional assumptions. The Benjamini-Hochberg procedure \citep{benjamini1995controlling} is also commonly used to control the finite sample FDR based on p-values, but it requires the so-called ``Positive Regression Dependency on each one from a Subset" (PRDS) assumption \citep{benjamini2001control}. In high-dimensional settings, \cite{liu2013gaussian} proposed a structure learning algorithm based on a certain test statistic and its asymptotic distribution, providing \emph{asymptotic} FDR control under some regularity conditions. \cite{lee2019structure} used sorted $l_1$-norm regularization in a node-wise manner to obtain an estimated graph with finite sample \emph{node-wise} FDR control under some assumptions. In Appendix \ref{appendix: NodeGraphWiseFDRexample}, we give a simple example illustrating that node-wise and graph-wise FDR control can be very different.

Recently, \cite{barber2015controlling} developed an interesting and innovative framework, called fixed-X knockoffs. This framework was originally designed for low-dimensional Gaussian linear models with fixed design, and it achieves finite sample FDR control for variable selection without resorting to p-values. The underlying idea is to construct artificial variables (knockoffs) that act as negative controls and mimic the correlation structure of the original predictors. The fixed-X knockoff framework has been extended in many ways. For example, \cite{dai2016knockoff} generalized it to group-variable selection and multi-task learning, and  \cite{janson2016familywise} adapted it to control the $k$-familywise error rate. \cite{barber2016knockoff} extended it to high-dimensional linear models through a sample-splitting-recycling procedure and developed novel theory showing that the directional FDR can also be controlled. \cite{katsevich2017multilayer} proposed a modification for multilayer controlled variable selection tasks and showed that this procedure controls the FDR at both the individual-variable and the group-variable level. 

Another closely related line of work is based on the so-called model-X knockoff framework proposed by \cite{candes2016panning}. Here the distribution of the dependent variable given the predictors is fully flexible, but the joint distribution of the predictors is assumed to be known. 
\cite{huang2020relaxing} relaxed this assumption by conditioning on a sufficient statistic. In particular, they proved that one can construct valid model-X knockoffs without knowing the joint distribution of the covariates in low-dimensional multivariate Gaussian settings. Their proposed model-X knockoffs can be used in our method, but in this paper we will restrict ourselves to the fixed-X knockoff framework of \cite{barber2015controlling} for simplicity. 


%
%

The knockoff idea is starting to be used in the context of Gaussian graphical models. \cite{zheng2018recovering} used it for Gaussian graphical model learning with \emph{node-wise} FDR control. \cite{YuKaufmannLederer19} claim graph-wise FDR control for Gaussian graphical models using the knockoff idea, but there is an issue with their proof. Therefore, to the best of our knowledge, the problem of using the knockoff framework to obtain finite sample graph-wise FDR control for Gaussian graphical models (see \eqref{FDRgraphGoal}) is still open. 

In this paper, we contribute to this problem in two ways. First, we propose the GGM knockoff filter. This method is based on the knockoff framework and consists of a local (node-based) and a global (graph-based) step: we construct knockoffs and feature statistics for each node locally (see \eqref{relationOfGGMandLM}), and then solve a global optimization problem to determine a threshold for each node (see \eqref{ANDopt} and \eqref{ORopt}). We then estimate the neighborhood of each node by comparing its feature statistics to its threshold, resulting in a graph estimate. We prove that this procedure provides finite sample graph-wise FDR control.

As discussed in \cite{barber2015controlling} and \cite{candes2016panning}, the knockoff framework is very flexible, since there are many valid ways to construct knockoffs and feature statistics, and in our case, the parameters in the optimization problem and two different ways to obtain the final graph estimate. We refer to these choices as hyperparameters of the method. Each choice of the hyperparameters leads to a valid FDR control procedure. The power of these procedures can vary widely depending on the underlying distribution of the data. 
Therefore, the following practical question arises: 
how should we choose the hyperparameters for a given data set to ensure good power while maintaining FDR control? A simple and tempting approach could be to implement many different procedures and report the procedure that yields the most discoveries. It should be clear, however, that such an approach loses FDR control. 




Our second contribution is a partial solution to this problem. First, we note that the issue can be solved by sample-splitting, where one part of the sample is used to choose the hyperparameters, and the procedure with the chosen hyperparameters is then applied to the other part of the sample. This approach has two disadvantages: (i) the outcome is random, as it depends on the split, and (ii) we suffer a loss of power, since we only apply the chosen procedure to the second part of the sample. 

We can alleviate the latter issue by adapting the sample-splitting-recycling procedure of \cite{barber2016knockoff}. This approach was originally introduced for a somewhat different problem, namely for handling high-dimensional settings. They applied sample splitting and used the first part of the sample to reduce the dimensionality. Then, instead of selecting variables in the low-dimensional problem based only on the second half of the sample, they used the entire sample, in such a way that they still ensured FDR control. This method tends to improve the power when compared to a simple splitting method, and we adapt it to the problem of hyperparameter selection. 

Coming back to the first issue of randomness, we note that the paper by \cite{gimenez2018improving} considers a similar problem: they aim to alleviate randomness induced by sampling model-X knockoffs, in order to obtain a more stable and powerful result. Their approach, however, is tailored to the model-X knockoff setting and cannot directly be applied to our case where the randomness is induced by sample splitting. We were not able to solve this issue and recommend applying the sample-splitting-recycling procedure several times with different random splits, to assess the variation in the resulting output.

The remainder of the paper is organized as follows. 
In Section \ref{sec:prelim}, we review the fixed-X knockoff framework for linear models. 
In Section \ref{sec: GGM knockoff filter}, we introduce the GGM knockoff filter and prove its finite sample FDR control property.  
In Section \ref{sec: sample-splitting-recycling}, we present the sample-splitting-recycling approach for GGM knockoff filter.
The numerical performance of the proposed procedure is evaluated in simulation studies and on a real data set in Section \ref{sec:simu}. 
We close the paper with potential directions for future research in Section \ref{sec:disc}.

\section{Recap: The fixed-X knockoff framework for linear models} \label{sec:prelim}

We consider the linear model
\begin{equation}\label{eq: linear model}
\by = \bX \beta + \bz,
\end{equation}
where $\by \in \bbR^n$, $\bX \in \bbR^{n \times p}$, $\beta \in \bbR^p$ and $\bz \in \bbR^n$, with $n\ge 2p$.\footnote{The method can also be used when $p \leq n<2p$ with some small adaptations, see \cite{barber2015controlling} for details.} Here $\bX$ is a fixed and known design matrix. We assume that $ (\bX^T \bX )^{-1}$ exists and that $\bX$ is standardized, in the sense that all columns of $\bX$ have length $1$. The vector $\bz$ is a noise vector with multivariate Gaussian distribution $\mN_n(0, \sigma^2 I)$. 
We denote the index set of the non-null variables by $S = \{ i\in[p]: \beta_i \neq 0 \}$ and the index set of the null variables by $S^c = \{ i\in[p]: \beta_i = 0 \}$. The goal is to estimate $S$ with FDR control.
 
\begin{algorithm}[t]
	\caption[] {\textbf{: Knockoff procedure for linear models \citep[][]{barber2015controlling}} } \label{oriKnock}
		
	\textbf{Input}: $(\bX, \by, q, \delta, \mO, \mP)$, where $\bX \in \bbR^{n \times p}$ is the design matrix, $\by \in \bbR^n$ is the response vector, $q \in [0,1]$ is the nominal FDR level, $\delta \in \{0,1\}$ indicates FDR control $(\delta=1)$ or mFDR control $(\delta=0)$, $\mO \in \{ \mO_{\equi}, \mO_{\sdp} \}$ is used to construct knockoffs (specifically, it is the optimization strategy used to compute $s$), and $\mP$ is the procedure used to construct feature statistics.
	
	\textbf{Output}: Estimated set of the non-null variables $\widehat{S}$.
	\begin{algorithmic}[1]
		\State \textbf{Step 1.} Construct the knockoff matrix $\widetilde{\bX} \in \bbR^{n \times p}$ by
		\begin{equation} \label{ConstructKnockoffs}
			\widetilde{\bX}(\bX, \mO) = \bX \left( I - (\bX^T\bX)^{-1} \diag\{ s \} \right) + \widetilde{U} C.
		\end{equation} 
		\State \textbf{Step 2.} Create feature statistics $W(\bX, \widetilde{\bX}, \by, \mP ) = \mP(\bX, \widetilde{\bX}, \by) = (W_1, \dots, W_p) $.
		\State \textbf{Step 3.} Calculate the positive threshold $\widehat{T}$ by
		\begin{equation}\label{OriKnockThreEquation}
			\widehat{T}(W, q, \delta) = \min \left\{ T \in \{|W_i|: i \in [p] \} \backslash \{ 0 \} : \frac{ \delta + \left| \{ i \in [p]: W_i \leq -T \} \right| } { \left| \{  i \in [p]: W_i \geq T \} \right| \vee 1 }  \leq q \right\}  
		\end{equation}
		and set $\widehat{T}=+\infty$ if the above set is empty. Finally, estimate the set of the non-null variables by $\widehat{S} = \left\{  i \in [p]: W_i \geq \widehat{T} \right\} $.
	\end{algorithmic}
\end{algorithm}

The knockoff procedure is summarized in Algorithm \ref{oriKnock}.
Step 1 of the algorithm constructs a knockoff matrix $\widetilde{\bX}\in \bbR^{n\times p}$. The knockoffs are constructed so that they act as negative controls, in the sense that all $\widetilde{X}_i$, $i\in [p]$, are null variables in the following model:
\begin{align}\label{eq: linear model extended}
\by = [\bX \widetilde{\bX}] \left( \begin{array}{c} \beta\\ 0 \end{array}\right) + \bz,  
\end{align}
with $0 \in \bbR^p$. Moreover, the knockoffs must mimic the correlation structure of the original variables. Formally, 
\begin{equation}\label{knockConstruct}
\left[ \bX \ \widetilde{\bX} \right]^T \left[ \bX \ \widetilde{\bX} \right]  = 
\begin{pmatrix}
\bX^T \bX & \bX^T \widetilde{\bX}  \\
\widetilde{\bX}^T \bX  & \widetilde{\bX}^T \widetilde{\bX}
\end{pmatrix}
=
\begin{pmatrix}
\bX^T \bX & \bX^T \bX - \diag\{ s \} \\
\bX^T \bX - \diag\{ s \} & \bX^T \bX
\end{pmatrix},
\end{equation}
where $s\in \bbR_{\geq 0}^{p}$ is a non-negative vector, and $\diag\{ s \}$ denotes a diagonal matrix with $s$ on the diagonals. 
The correlations between the original non-null variables and their knockoff counterparts should be small, so that the selection procedure can select the non-null variables. This means that the elements in $s$ corresponding to the non-null variables should be close to $1$. Such $s$, however, cannot be obtained as the non-null variables are unknown. \cite{barber2015controlling} have suggested two approaches to obtain $s$, which we denote by $\mO_{\sdp}$ and $\mO_{\equi}$. Here $\mO_{\sdp}$ calculates $s$ by solving the following convex  semidefinite program (SDP):
\begin{equation}\label{SDP}
\underset{s=(s_1,\dots,s_p)}{\text{minimize}}
\sum_{i=1}^{p} (1-s_i) 
\quad 
\text{subject to}
\quad
\diag \{ s \} \preceq 2\bX^T\bX, \
0 \leq s_i \leq 1, \ i \in [p],
\end{equation}
where the constraints are necessary conditions for the existence of $\widetilde{\bX}$ in equation \eqref{knockConstruct}. This optimization problem can be solved efficiently. The second method $\mO_{\equi}$ adds the additional restriction that all $s_i$ are identical, leading to the following analytical solution:
\begin{equation}\label{Equi}
s_i = \min(2 \lambda_{\min} (\bX^T\bX), 1), \quad i \in [p],
\end{equation}
where $\lambda_{\min} (\bX^T\bX)$ denotes the minimal eigenvalue of $\bX^T\bX$. There is no dominance relationship between $\mO_{\sdp}$ and $\mO_{\equi}$ with respect to statistical power. From our experience, $\mO_{\equi}$ is often less powerful than $\mO_{\sdp}$. 


Regarding the two matrices $\widetilde{U}$ and $C$ in $\eqref{ConstructKnockoffs}$,  $\widetilde{U} \in \bbR^{ n \times p}$ must have column space orthogonal to that of $\bX$, and $C \in \bbR^{ p \times p}$ must satisfy $ C^T C =  2 \diag\{ s \} - \diag\{ s \} (\bX^T\bX)^{-1} \\ \diag\{ s \}$ and can be obtained by a Cholesky decomposition.

In Step 2 of Algorithm \ref{oriKnock}, we construct for each predictor $X_i$ a feature statistic $W_i$ which measures the importance of $X_i$ to the response in model \eqref{eq: linear model}. A large and positive $W_i$ indicates that $X_i$ is likely to be a non-null variable. The construction of $W \in \bbR^p$ is very flexible. The only two conditions that $W$ should satisfy are the so-called sufficiency and antisymmetry properties. The sufficiency property requires $W$ to depend only on the Gram matrix $[ \bX \ \widetilde{\bX} ]^T [ \bX \ \widetilde{\bX} ]$ and the feature-response inner product $[\bX \ \widetilde{\bX} ]^T \by$. The antisymmetry property requires that swapping the columns of an original variable $X_i$ and its knockoff counterpart in $[ \bX \ \widetilde{\bX} ]$ results in a sign switch of $W_i$. 
Please see Appendix \ref{appendix:ExamplesOfW} for some concrete examples of $W$.

The sufficiency and antisymmetry properties of the feature statistics vector $W$ imply the so-called sign-flip property:
\begin{definition}\citep[\textbf{Sign-flip property on index set $H$},][]{barber2015controlling}
	~\\
	Let $A = (A_1, \dots, A_p)$ be any random vector and $H \subseteq [p]$ be any index set. Let $\epsilon_1, \dots, \epsilon_p$ be a sign sequence independent of $A$ with $\epsilon_i = +1$ for $i \not\in H$ and $\epsilon_i$ i.i.d.\ from a Rademacher distribution for $i \in H$. Then, we say that $A$ possesses the sign-flip property on $H$ if
	\begin{equation*}
	(A_1, \dots, A_p) \stackrel{d}{=} (A_1 \cdot \epsilon_1, \dots, A_p\cdot \epsilon_p),
	\end{equation*}
	where $\stackrel{d}{=}$ denotes equality in distribution.
\end{definition}

\begin{lemma} \citep[\textbf{Sign-flip property on $S^c$ with fixed design matrix},][]{barber2015controlling} \label{SignFlipFixDesign}
	~\\
	Let $\bX \in \bbR^{n \times p}$ be a fixed design matrix in a linear model, $S^c$ be the index set of null variables and $W$ be a feature statistic vector satisfying both the sufficiency and the antisymmetry properties. Then $W$ possesses the sign-flip property on $S^c$.
\end{lemma}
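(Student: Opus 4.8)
The plan is to prove the sign-flip property on $S^c$ by reducing it, via the independence of the sign sequence from $W$, to a statement about a single fixed swap, and then to establish that statement using sufficiency, antisymmetry, and the correlation structure \eqref{knockConstruct} of the knockoffs. Since $\epsilon_1,\dots,\epsilon_p$ are independent of $W$, it suffices to show that for every fixed sign pattern $e \in \{\pm 1\}^p$ with $e_i = +1$ for $i \notin S^c$ we have $(W_1,\dots,W_p) \stackrel{d}{=} (e_1 W_1, \dots, e_p W_p)$; averaging these equalities against the Rademacher law of $\epsilon$ then yields the claim. Writing $H = \{ i \in S^c : e_i = -1\} \subseteq S^c$, I would reformulate the target as $W \stackrel{d}{=} W(\swap(H))$, where $\swap(H)$ exchanges the columns $X_i$ and $\widetilde{X}_i$ for each $i \in H$: by the antisymmetry property, $W(\swap(H))$ equals $W$ with its entries on $H$ negated, i.e.\ $e \odot W$.

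Next I would exploit sufficiency. Let $G = [\bX\ \widetilde{\bX}]^T[\bX\ \widetilde{\bX}]$ and $v = [\bX\ \widetilde{\bX}]^T \by$, so that $W = f(G, v)$ for some fixed map $f$. Swapping the columns indexed by $H$ amounts to right-multiplying $[\bX\ \widetilde{\bX}]$ by a permutation matrix $P$ transposing $i \leftrightarrow p+i$ for $i \in H$; hence the swapped data have Gram matrix $P^T G P$ and inner-product vector $P^T v$. The first key computation is that $G$ is invariant, $P^T G P = G$. This follows directly from the block form in \eqref{knockConstruct}: both diagonal blocks equal $\bX^T\bX$ and both off-diagonal blocks equal $\bX^T\bX - \diag\{s\}$, and since the only entries touched by the transposition $i \leftrightarrow p+i$ are governed by these symmetric blocks (the correction $\diag\{s\}$ sits on the fixed diagonal), every entry is preserved. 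Consequently $W(\swap(H)) = f(G, P^T v)$, so it remains to show $f(G, v) \stackrel{d}{=} f(G, P^T v)$, for which it is enough to prove $P^T v \stackrel{d}{=} v$ (recall $G$ is a fixed matrix under the fixed-design assumption).

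This last step is the probabilistic heart. Since $\by \sim \mN_n(\bX\beta, \sigma^2 I)$, the vector $v$ is Gaussian with covariance $\sigma^2 G$ and mean $[\bX\ \widetilde{\bX}]^T \bX\beta = (\,(\bX^T\bX)\beta,\ (\bX^T\bX - \diag\{s\})\beta\,)^T$. The covariance of $P^T v$ is $\sigma^2 P^T G P = \sigma^2 G$, matching by the invariance just shown, so the two Gaussians agree in distribution iff their means agree. The transposition $i \leftrightarrow p+i$ exchanges the mean entry $((\bX^T\bX)\beta)_i$ with $((\bX^T\bX - \diag\{s\})\beta)_i$, and these are equal precisely when $(\diag\{s\}\beta)_i = s_i \beta_i = 0$. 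For $i \in H \subseteq S^c$ we have $\beta_i = 0$, so this holds, the means coincide, and $P^T v \stackrel{d}{=} v$. Chaining back, $W(\swap(H)) \stackrel{d}{=} W$, hence $e \odot W \stackrel{d}{=} W$ for every admissible $e$, and averaging over $\epsilon$ finishes the proof.

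I expect the main obstacle to be the two invariance computations and, above all, pinpointing where the null assumption enters: the Gram-matrix invariance is exactly the property for which \eqref{knockConstruct} was engineered and holds for arbitrary $H$, whereas the mean-matching step is the only place that uses $\beta_i = 0$, and hence the only reason the property is restricted to $H \subseteq S^c$ rather than all of $[p]$. I would take care that $f$ depends on the data only through $(G, v)$, so that the deterministic antisymmetry identity and the distributional equality $P^T v \stackrel{d}{=} v$ combine legitimately, and that the reduction from a fixed $H$ to i.i.d.\ Rademacher signs is justified by the assumed independence of $\epsilon$ from $W$.
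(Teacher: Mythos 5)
Your proof is correct and follows essentially the same route as the paper and as \cite{barber2015controlling}, whose argument the paper imports: antisymmetry turns sign flips into column swaps, sufficiency reduces the swapped statistic to the pair $\left( [\bX\ \widetilde{\bX}]^T[\bX\ \widetilde{\bX}],\ [\bX\ \widetilde{\bX}]^T\by \right)$, and pairwise exchangeability of that pair under swapping null columns (Lemmas \ref{FirstLemmaFromBC} and \ref{SecondLemmaFromBC}, which the paper cites without proof) gives the distributional identity, with your fix-$e$-then-average reduction playing the same role as the random swap set $K$ in the paper's proof of the analogous Lemma \ref{SignFlipRandomDesign}. The only difference is that you prove the two exchangeability facts inline rather than citing them --- the Gram invariance directly from the block structure \eqref{knockConstruct}, and the Gaussian mean-matching computation, correctly isolating $s_i\beta_i = 0$ as the sole point where $i \in S^c$ is used and hence why the property holds only on $S^c$.
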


The sign-flip property is used to choose a threshold $\widehat{T}$ in Step 3 of Algorithm \ref{oriKnock} (see \eqref{OriKnockThreEquation}, and please see Appendix \ref{appendix:intuitionLinearKnockoff} for some intuition behind this equation) that leads to finite sample FDR control. 
Specifically, based on the sign-flip property of $W$, \cite{barber2015controlling} showed that $\widehat{T}$ can be viewed as a stopping time with respect to some super-martingale. Then, by using the optional stopping time theorem, one has
\begin{align}\label{OriknockoffLessThan1Inequality}
	\bbE \left[ \frac{ \left| \left\{ j \in S^c: W_j\ge \widehat{T} \right\} \right|  } {1 + \left| \left\{ j \in S^c: W_j\le - \widehat{T} \right\}   \right| } \right] \leq 1,
\end{align}
which can be used to prove the finite sample FDR control guarantee of the knockoff procedure (Algorithm \ref{oriKnock}) with $\delta=1$:
\begin{equation*}
\FDR = \bbE[\FDP] = \bbE \left[ \frac{ \left| \left\{ i \in S^c: W_i \geq \widehat{T}  \right\}  \right| } { \left| \left\{i \in[p]: W_i \geq \widehat{T} \right\} \right| \vee 1} \right] \leq q.
\end{equation*}

Without adding the $1$ in the denominator (which corresponds to the case that $\delta=0$), inequality \eqref{OriknockoffLessThan1Inequality} does not hold anymore, nor does the FDR control guarantee. In this case, \cite{barber2015controlling} proved that the knockoff procedure  (Algorithm \ref{oriKnock}) with $\delta=0$ controls a modified version of the FDR:
\begin{equation*}
\mFDR = \bbE \left[ \frac{ \left| \left\{ i \in S^c: W_i \geq \widehat{T} \right\} \right|  } { \left| \left\{ i \in[p]: W_i \geq \widehat{T} \right\}  \right| + q^{-1}} \right] \leq q.
\end{equation*}
The mFDR is smaller than the FDR, so controlling the mFDR may not control the FDR. The two quantities are close if $\left| \left\{i \in[p]: W_i \geq \widehat{T} \right\} \right| \gg q^{-1}$.

\section{GGM knockoff filter} \label{sec: GGM knockoff filter}

We now turn to the Gaussian graphical model setting. Let $\bX \in \bbR^{n \times p}$ be the data matrix with $n$ i.i.d.\ observations from a $p$-dimensional Gaussian distribution $\mN_p(0, \Omega^{-1})$ (we assume that the mean $\mu=0$ for simplicity
\footnote{When $\mu \neq 0$, the node-wise linear model (see \ref{relationOfGGMandLM}) has an intercept term. Please see footnote $6$ in \cite{barber2015controlling} for details on handling this. }),
and let $G=(V,E)$ be the corresponding undirected graph. 
Throughout this paper, we assume that $n \geq 2p$
\footnote{As the fixed-X knockoff method, the GGM knockoff filter can still be used when $p\leq n<2p$ with some small adaptations. We refer to \cite{barber2015controlling} for details.}.
It is well-known \citep[see, e.g.,][]{lauritzen1996graphical} that $(i,j) \not\in E$ if and only if $\beta^{(i)}_j = \beta^{(j)}_i = 0$, where $\beta^{(i)}_j$ is the regression coefficient of $X_j$ in the regression of $X_i$ on $X_{[p] \backslash \{i\}}$, i.e.,
\begin{equation}\label{relationOfGGMandLM}
\bX^{(i)} = \sum_{j \neq i} \beta^{(i)}_j \bX^{(j)} + \bz^{(i)} = \bX^{(-i)} \ \beta^{(i)} + \bz^{(i)}, \quad i \in [p],
\end{equation}
where $\beta^{(i)} = \left( \beta^{(i)}_j: j\in [p]\backslash \{i\} \right) \in \bbR^{p-1}$, $\bz^{(i)} \in \bbR^{n} $, $\bX^{(i)} \in \bbR^{n}$ is the $i$th column of $\bX$, and $\bX^{(-i)} \in \bbR^{n \times (p-1)}$ denotes the matrix obtained after deleting the $i$th column of $\bX$. Equation \eqref{relationOfGGMandLM} builds the relationship between a Gaussian graphical model and linear models. It is therefore natural to consider whether we can make use of the knockoff framework for FDR controlled graph estimation in Gaussian graphical models. 
For simplicity, we will use the fixed-X rather than the model-X knockoff framework.
We note, however, that the design matrix in equation \eqref{relationOfGGMandLM} is random. This is not a problem, since it is easy to show that the sign-flip property also holds for random design matrices. For completeness, we state this as Lemma \ref{SignFlipRandomDesign} in Appendix \ref{appendix: lemmaRandomDesign}.

We already mentioned that node-wise FDR control at level $q$ does not imply graph-wise FDR control at level $q$. We may get graph-wise FDR control at level $q$, however, if we obtain node-wise FDR control at level $q/p$. We will briefly pursue this idea. We denote by $NE_i$ and $NE_i^c$ the true neighborhood and non-neighborhood of node $i, i\in [p]$:
\begin{align*}
NE_i & = \{ j\in [p]\backslash\{i\}: (i,j) \in E \} = \{ j\in [p]\backslash\{i\}: \beta^{(i)}_j \neq 0 \}, \\
NE_{i}^c & = \{ j\in [p]\backslash\{i\}: (i,j) \not\in E \} = \{ j\in [p]\backslash\{i\}: \beta^{(i)}_j = 0 \}.
\end{align*} 
Now suppose that we use the fixed-X knockoff procedure (Algorithm \ref{oriKnock}) at nominal FDR level $q/p$ for each node $i \in [p]$, by treating variable $X_i$ as response and the remaining variables as predictors (see equation \eqref{relationOfGGMandLM}), to obtain an estimated neighborhood $\widehat{NE}_{i}$. Then, by implementing either the AND rule or the OR rule defined below, 
\begin{equation}\label{ANDandORrules}
\begin{aligned}
\text{AND rule:} &\quad \widehat{E}_{\AND} = \left\{ (i,j) \in [p]^2: i \in \widehat{NE}_j \ \text{ and } \ j \in \widehat{NE}_{i} \right\}, \\
\text{OR rule:} &\quad \widehat{E}_{\OR} =  \left\{ (i,j) \in [p]^2: i \in \widehat{NE}_j \ \text{ or } \ j \in \widehat{NE}_{i} \right\}, \\
\end{aligned}
\end{equation}
we obtain estimated graphs $\widehat E_{\AND}$ or $\widehat E_{\OR}$. 
We denote these two rules by $\mR_{\AND}$ and $\mR_{\OR}$ respectively, and let $\widehat{F}_{\AND} = \widehat{E}_{\AND} \setminus E$ and $\widehat{F}_{\OR} = \widehat{E}_{\OR} \setminus E$ be the corresponding falsely discovered edge sets. Using that $|\widehat{F}_{\OR}| \leq \sum_{i=1}^p |\widehat{NE}_{i} \backslash NE_i|$ and $ |\widehat{E}_{\OR}| \geq |\widehat{NE}_{i} | $, it follows that we have graph-wise FDR control for $\widehat{E}_{\OR}$ at level $q$: 
\begin{equation*}
\FDR 
= \bbE \left[ \frac{ |\widehat{F}_{\OR}| }{ |\widehat{E}_{\OR}| \vee 1} \right]
\leq \sum_{i=1}^p \bbE \left[ \frac{ |\widehat{NE}_{i} \backslash NE_i|}{ |\widehat{NE}_{i} | \vee 1} \right]
\leq \sum_{i=1}^p \frac{q}{p}
= q,
\end{equation*}
where the last inequality holds because the FDR of $\widehat{NE}_{i} $ is controlled at level $q/p$. This approach, however, has zero power, since for any $i \in [p]$, $T > 0$ and $q\in[0,1]$, we have
\begin{equation*}
\frac{ 1+ | \{ j \in[p]\backslash\{i\}: W^{(i)}_j \leq -T \} | } { | \{ j \in[p]\backslash\{i\}: W^{(i)}_j \geq T \} | \vee 1 }
> \frac{1}{p}
\geq \frac{q}{p},
\end{equation*}
where $W^{(i)}_1, \dots, W^{(i)}_p $ denote the feature statistics corresponding to linear model (\ref{relationOfGGMandLM}). Hence, the threshold calculated by formula \eqref{OriKnockThreEquation} (with $\delta=1$) of Algorithm \ref{oriKnock} will always be $+\infty$, which implies that $\widehat{NE}_{i}  = \emptyset$ for all $i\in [p]$ and hence $\widehat{E}_{\OR} = \emptyset$.


Having seen the problems with simple node-wise approaches, we now propose a different route. Our approach involves two main steps: a node-wise construction of knockoffs and feature statistics, and a global procedure to obtain thresholds for different nodes. 


\subsection{Node-wise construction of knockoffs and feature statistic matrix}

For each $i\in [p]$, we construct knockoffs by treating variable $X_i$ as response and the remaining variables as predictors. Specifically, based on the linear relationship (\ref{relationOfGGMandLM}) and equation (\ref{ConstructKnockoffs}), we construct the knockoffs by
\begin{equation*}
\widetilde{\bX}^{(-i)}= \bX^{(-i)} \left( \bI - ({\bX^{(-i)}}^T \bX^{(-i)})^{-1} \diag\{ s^{(i)} \} \right) + \widetilde{U}^{(i)} C^{(i)} , \quad i \in [p],
\end{equation*}
where $s^{(i)}$ is calculated using the pre-decided optimization strategy $\mO_{\sdp}$ or $\mO_{\equi}$ (see \eqref{SDP} and \eqref{Equi}). 

Then, we construct the feature statistics based on $\bX^{(i)}$ and $[\bX^{(-i)} \, \widetilde{\bX}^{(-i)}]$ for each $i \in [p]$.
As we will see in Figure \ref{OppoPower} of Section \ref{section4.1}, feature statistics with good power in one setting can be powerless in another setting. Hence, in order to allow for good power in a wide range of settings, we consider the flexible elastic net regularization \citep{zou2005regularization}, which is a generalization of the Lasso \citep{tibshirani1996regression} that was used in \cite{barber2015controlling}:
\begin{equation}\label{ElasticNet}
\left( \hat{\beta}^{(i)}, \tilde{\beta}^{(i)} \right) = \underset{b \in \bbR^{2 (p-1)}}{\operatorname{\argmin}} \ \left( \frac{1}{2} \left \lVert \bX^{(i)} - [\bX^{(-i)} \, \widetilde{\bX}^{(-i)}] b \right \rVert^2_2
	+ \lambda \left[ (1-\alpha) \lVert b \rVert_2^2/2 + \alpha \lVert b \rVert_1 \right] \right), 
\end{equation}
for some $\alpha \in [0,1]$ and $\lambda \geq 0$. 
For each $j\in[p]\backslash\{i\}$, we denote the estimated coefficient of $X_j$ and $\widetilde{X}_j$ by $\hat{\beta}^{(i)}_j$ and $\tilde{\beta}^{(i)}_{j}$, respectively. As in the previous section,  we can use
\begin{equation}\label{LambdaMax}
Z^{(i)}_j = \sup \{ \lambda \geq 0: \hat{\beta}^{(i)}_j(\lambda, \alpha) \neq 0 \}
\quad \text{and} \quad
\widetilde{Z}^{(i)}_j = \sup \{ \lambda \geq 0: \tilde{\beta}^{(i)}_{j}(\lambda, \alpha) \neq 0\}, 
\end{equation}
for a pre-set $\alpha$, or 
\begin{equation}\label{Coef}
Z^{(i)}_j  = | \hat{\beta}^{(i)}_j(\lambda, \alpha) |
\quad \text{and} \quad
\widetilde{Z}^{(i)}_j = |  \tilde{\beta}^{(i)}_{j}(\lambda, \alpha) |,
\end{equation}
for pre-set $\alpha$ and $\lambda$. 
Then, setting $Z^{(i)}_i = \widetilde{Z}^{(i)}_i = 0$ for notational convenience, we can construct the feature statistics by
\begin{equation}\label{WfromZ1}
W^{(i)}_j = Z^{(i)}_j - \widetilde{Z}^{(i)}_j
\end{equation}
or
\begin{equation}\label{WfromZ2}
W^{(i)}_j = (Z^{(i)}_j \vee \widetilde{Z}^{(i)}_j) \cdot \sign(Z^{(i)}_j - \widetilde{Z}^{(i)}_j)
\end{equation}
for all $ j \in [p]$. Ultimately, we obtain a $p\times p$ feature statistic matrix $W$:
\begin{equation*}
W
= \left( W^{(1)}, \dots,  W^{(p)} \right)
=
\begin{pmatrix}
0 & W^{(2)}_{1}  & \dots & W^{(p-1)}_{1} & W^{(p)}_{1} \\
W^{(1)}_{2} & 0 & \dots & W^{(p-1)}_{2} & W^{(p)}_{2} \\
\vdots & \vdots  & \ddots & \vdots & \vdots \\
W^{(1)}_{p-1} & W^{(2)}_{p-1}  & \dots & 0 & W^{(p)}_{p-1} \\
W^{(1)}_{p} & W^{(2)}_{p}  & \dots & W^{(p-1)}_{p} & 0
\end{pmatrix},
\end{equation*}
where a large and positive $W^{(i)}_{j}$ indicates that there is likely an edge between node $i$ and $j$.


\subsection{Obtaining the global threshold vector}

Given the feature statistic matrix $W$ and a positive threshold vector $T=( T_1, \dots, T_p ) $, we define
\begin{equation}\label{NotationV}
\begin{split}
\widehat{V}^{+}_{i} & = \{ j \in [p]: W^{(i)}_j \geq T_i \},\\ 
\widehat{V}^{-}_{i} & = \{ j \in [p]: W^{(i)}_j \leq -T_i \}, \\
\end{split}
\qquad \quad
\begin{split}
\widehat{V}^{+}_{N_i} &  = \{ j \in NE_{i}^c: W^{(i)}_j \geq T_i \},  \\
\widehat{V}^{-}_{N_i} &  = \{ j \in NE_{i}^c: W^{(i)}_j \leq -T_i \}. \\
\end{split}
\end{equation}
Here, $\widehat{V}^{+}_{i}$ will be the estimated neighborhood of node $i$, taking the role of the previously used $\widehat{NE}_{i}$. 
The set $\widehat{V}^{+}_{N_i} \subseteq \widehat{V}^{+}_{i}$ is the set of the falsely discovered neighbors of node $i$, where the subscript $N$ stands for ``null". The $\widehat{V}^{+}_{i}$, $i\in [p]$, can be combined using $\mR_{\AND}$ or $\mR_{\OR}$ to get an estimated edge set $\widehat{E}$. The roles of $\widehat{V}^{-}_{i}$ and $\widehat {V}^{-}_{N_i}$ will become clear later (see \eqref{IntuitionOfOurMehotd}). Both $\widehat{V}^{+}_{N_i}$ and $\widehat{V}^{-}_{N_i}$ are unobservable, while $\widehat{V}^{+}_{i}$ and $\widehat{V}^{-}_{i}$ are observable.


Obtaining a proper threshold vector $T=(T_1,\dots,T_p)$ is the key to control the FDR without losing all power. Since simple node-wise thresholds do not seem to work, our approach determines the threshold globally, by considering the entire feature statistic matrix $W$. On the matrix level, however, $W$ does not possess the sign-flip property, an essential ingredient in the proof of FDR control in the knockoff framework. Fortunately, $W$ does satisfy the sign-flip property at the column level. Therefore, we can split the FDR (or FDP) into column-wise terms, so that we can make use of the column-wise sign-flip property. For example, if we consider using $\mR_{\AND}$, then for any positive fixed threshold vector $T=(T_1,\dots,T_p)$ and $a>0$, we have
\begin{equation}\label{IntuitionOfOurMehotd}
	\begin{aligned}
	\FDP(T) &= \frac{ |\widehat{F}_{\AND}(T)| }{ |\widehat{E}_{\AND}(T)| \vee 1}
	\leq \frac{1}{2} \sum_{i=1}^p \frac{  |\widehat{V}^{+}_{N_i}(T_i)| }{ |\widehat{E}_{\AND}(T)| \vee 1} \\
	& \leq \frac{1}{2} \sum_{i=1}^p \frac{ |\widehat{V}^{+}_{N_i}(T_i)| }  { a + |\widehat{V}^{-}_{N_i}(T_i)| }
	\cdot \frac{ a + |\widehat{V}^{-}_{i}(T_i)| }{ |\widehat{E}_{\AND}(T)| \vee 1} \\
	& \approx \frac{1}{2} \sum_{i=1}^p \frac{ |\widehat{V}^{-}_{N_i}(T_i)| }  { a + |\widehat{V}^{-}_{N_i}(T_i)| }
	\cdot \frac{ a + |\widehat{V}^{-}_{i}(T_i)| }{ |\widehat{E}_{\AND}(T)| \vee 1} \\
	& \leq \frac{1}{2} \sum_{i=1}^p \frac{ a + |\widehat{V}^{-}_{i}(T_i)| }{ |\widehat{E}_{\AND}(T)| \vee 1},
	\end{aligned}
\end{equation}
where the approximation in line $3$ is due to the sign-flip property of $W^{(i)}$. The last line of \eqref{IntuitionOfOurMehotd} motivates the following optimization problem to calculate a data-dependent threshold vector $\widehat{T}=(\widehat{T}_1,\dots,\widehat{T}_p)$:
\begin{equation}\label{ANDopt}
\begin{aligned}
\widehat{T}=
&\underset{  T=( T_1, \dots, T_p ) } {\argmax}
& & |\widehat{E}_{\AND}| \\
& \text{subject to}
& &  \frac{ a +|\widehat{V}^{-}_{i}(T_i)|}{|\widehat{E}_{\AND}(T)| \vee 1} \leq q_i := \frac{2}{c_a p}q \quad \text{and} \\
&&&  T_i \in  \left\{ |W^{(i)}_j|, j \in [p]  \right\} \cup \{ +\infty \} \backslash \{ 0 \}, \,\text{for all}\, i\in [p]
\end{aligned}
\end{equation}
where $c_a>0$ is a constant depending on $a$ (we will discuss them in more detail later), and we set $\widehat{T} = (+\infty, \dots, +\infty)$ if there is no feasible point satisfying the above constraints. Here, $q_1,\dots,q_p$ can be set to any values as long as $\sum_{i=1}^{p} q_i = 2q/c_a$.
We make the choice to set $q_i=2q/(c_a p)$ for all $i\in [p]$ for two reasons: it seems a natural choice that treats each node equally, and it allows us to find the global optimizer of this combinatorial optimization problem via a simple and fast algorithm (Algorithm \ref{algo2}) which will be discussed later.
We use separate constraints for each $i$ rather than one summed constraint (i.e., $\sum_{i=1}^{p} \frac{ a +|\widehat{V}^{-}_{i}(T_i)|}{|\widehat{E}_{\AND}(T)| \vee 1} \leq 2q/c_a$ ) due to a technical reason in proving the FDR control, please see Remark \ref{ReasonForNotUsingSummationInOpt} in Appendix \ref{appendix: proofGGMfilter FDR} for details.

Comparing \eqref{ANDopt} to the threshold formula \eqref{OriKnockThreEquation} in the linear model case, one can see that their forms are quite similar, because \eqref{OriKnockThreEquation} can be viewed as an optimization problem with objective $|\widehat{S}|$ and a constraint that is similar to the one in \eqref{ANDopt}. 
We note, however, that the numerator $a+|\widehat{V}^{-}_{i}(T_i)|$ in the constraint of \eqref{ANDopt} is a local term that involves only the estimated neighborhood of node $i$ for the given threshold $T_i$, while the corresponding denominator $|\widehat{E}_{\AND}(T)| \vee 1$ is a global term involving the entire estimated graph. The constant $c_a$ is an upper bound such that
\begin{align}\label{c_a bound}
\bbE \left[ \frac{ | \widehat{V}^{+}_{N_i}(\widehat{T}_i) | } {a + | \widehat{V}^{-}_{N_i}(\widehat{T}_i)|} \right] \leq c_a.
\end{align}
Inequality \eqref{c_a bound} in the GGM setting is similar to the inequality \eqref{OriknockoffLessThan1Inequality} in the linear model setting. However, although $\widehat{V}^{+}_{N_i}(T_i)$ and $\widehat{V}^{-}_{N_i}(T_i)$ depend only on column $W^{(i)}$ for a given $T_i$, the actually used data-dependent threshold $\widehat{T}_i$ is calculated using the entire matrix $W$, for which the sign-flip property does not hold. This prevents us from using the martingale arguments as in \cite{barber2015controlling} to obtain $c_a$ for a given $a$. \cite{katsevich2017multilayer} encountered a similar issue in their multilayer controlled variable selection problem. They only considered the case of $a=1$ and proposed to bound this term by taking a supremum inside the expectation with respect to the threshold, which results in an upper bound of $c_a=1.93$. Their proof arguments, in fact, hold as long as $a>0$ (see Lemma \ref{Lemma: Bound of c_a} in Appendix \ref{appendix: Bound of c_a}). Lemma \ref{Lemma: Bound of c_a} allows us to calculate $c_a$ for any given $a>0$. The product $a c_a$ can be smaller than $1.93$ when $a<1$, and this can be useful to obtain a better result for the optimization problem \eqref{ANDopt}. To see this, the constraint of \eqref{ANDopt} can be rewritten as
\begin{align*}
\frac{ 1 + \frac{1}{a} |\widehat{V}^{-}_{i}(T_i)|}{|\widehat{E}_{\AND}(T)| \vee 1} \leq \frac{2}{a c_a p}q.
\end{align*}
A smaller $a$ would penalize $|\widehat{V}^{-}_{i}(T_i)|$ more heavily (thus this term would tend to $0$), but at the same time result in a larger term on the right hand side as $(a c_a)^{-1}$ would be larger. This implies that there is no optimal $a$ for the optimization problem \eqref{ANDopt} in general. In simulations, we found that  $a=0.01$ (with $c_a=102$, see Proposition \ref{Proposition: get c_a} in Appendix \ref{appendix: Bound of c_a}) tends to work well. In this case, the product $ac_a$ is $1.02$, which is smaller than $1.93$ (corresponds to $a=1$ and $c_a=1.93$ from \cite{katsevich2017multilayer}). This can be useful when there is no feasible point for $(a, c_a)=(1, 1.93)$ in the optimization problem \eqref{ANDopt}. 
Therefore, for the choice of $(a, c_a)$, we consider two options $(1, 1.93)$ and $(0.01, 102)$. 
The sign-flip property of $W^{(i)}$ is crucial in order to use Lemma \ref{Lemma: Bound of c_a} to obtain $c_a$.


Similarly, if we use $\mR_{\OR}$ to recover the graph from the estimated neighborhoods $\widehat{V}^+_i$, the threshold vector can be obtained by
\begin{equation}\label{ORopt}
\begin{aligned}
\widehat{T}=
&\underset{ T=( T_1, \dots, T_p ) } {\argmax}
& & |\widehat{E}_{\OR}| \\
& \text{subject to}
& &  \frac{ a + |\widehat{V}^{-}_{i}(T_i)|}{|\widehat{E}_{\OR}(T)| \vee 1} \leq \tilde q_i := \frac{1}{c_a p}q, \\
&&&  T_i \in  \left\{ |W^{(i)}_j|, j \in [p] \right\} \cup \{ +\infty \} \backslash \{ 0 \}, \ i\in[p].
\end{aligned}
\end{equation}
Again, we set $\widehat{T} = (+\infty, \dots, +\infty)$ if no feasible point is found. 

As mentioned before, both optimization problems (\ref{ANDopt}) and (\ref{ORopt}) can be solved via a simple and fast algorithm (Algorithm \ref{algo2}). This algorithm is computationally very cheap as there are at most $m_{\max}+1$ points to be checked and $m_{\max}$ is typically small. Thus, the most computationally expensive part of the GGM knockoff filter (Algorithm \ref{Algo: GGMknock}) consists of constructing the feature statistic matrix in a node-wise manner (Step 1). This can be easily parallelized, and this seems a fair price to pay for dealing with the more involved structure learning problem when compared to variable selection. For the details about Algorithm \ref{algo2}, please see Appendix \ref{appendix: optimization problem}.

\begin{algorithm}
	\caption{\textbf{: Computing the threshold vector $\widehat{T}$ }} \label{algo2}
	\textbf{Input}: $(W, q, \delta, (a,c_a), \mR)$, where $W \in \bbR^{p \times p}$ is the feature statistics matrix, $q$ is the nominal FDR level, $\delta \in \{0,1\}$ indicates FDR control $(\delta=1)$ or mFDR control $(\delta=0)$, $(a,c_a) \in \{(1,1.93), (0.01, 102)\}$ are parameters in the optimization problem, and $\mR \in \{ \mR_{\AND}, \mR_{\OR} \}$ is the rule used to recover the estimated edge set from the estimated neighborhoods.
	
	\textbf{Output}: The threshold vector $\widehat{T}=( \widehat{T}_1, \dots, \widehat{T}_p )$.
	
	\begin{algorithmic}[1]
		\State Fix the optimization problem to be solved: \eqref{ANDopt} if $(\mR, \delta)=(\mR_{\AND},1)$, \eqref{ORopt} if $(\mR, \delta)=(\mR_{\OR},1)$, \eqref{ANDopt}  with replacement \eqref{mFDRoptAND} if $(\mR, \delta)=(\mR_{\AND},0)$, and \eqref{ORopt} with replacement \eqref{mFDRoptOR} if $(\mR, \delta)=(\mR_{\OR},0)$
		\State Let $m_{\max}$ be the largest integer that is smaller than or equal to: $q(p-1)/c_a-a \delta$ if $\mR=\mR_{\AND}$, and $q(p-1)/(2c_a)-a \delta$ if $\mR=\mR_{\OR}$.
		\If {$m_{\max}<0$} {return $\widehat{T}=(+\infty, \dots,+\infty)$.}
		\EndIf
		\For {$m=m_{\max},m_{\max}-1,\dots,0$,}
		\newline 
		\-\hspace{0.5cm} Let $\widehat{T}=( \widehat{T}_1, \dots, \widehat{T}_p )$ with 
		\newline 
		\-\hspace{0.5cm}
		$\widehat{T}_i = \min  \left\{ T_i \in  \left\{ |W^{(i)}_j|, j \in [p] \right\} \cup \{+\infty\}  \backslash \{0\}:  |\widehat{V}^{-}_{i}(T_i)| \leq m \right\}$.
		\newline
		\-\hspace{1cm} \textbf{if} $\widehat{T}$ satisfies the constraints of the given optimization problem, \textbf{then} return $\widehat{T}$.
		\EndFor
		\If { there is no feasible point, } {return $\widehat{T}=(+\infty, \dots,+\infty)$.}
		\EndIf
	\end{algorithmic}
\end{algorithm}


\subsection{Finite sample graph-wise FDR and mFDR control guarantees}

The GGM knockoff filter for FDR control is summarized in Algorithm \ref{Algo: GGMknock} (with $\delta=1$), and the theoretical FDR control guarantee is given in Theorem \ref{Thm: GGMfilter FDR}

\begin{algorithm}[t]
	\caption{\textbf{: GGM knockoff filter}} \label{Algo: GGMknock}
	\textbf{Input}: $(\bX, q, \delta, (a,c_a), \mO, \mP, \mR)$, where $\bX$ is a data matrix, $q \in [0,1]$ is the nominal FDR level, $\delta \in \{0,1\}$ indicates FDR control $(\delta=1)$ or mFDR control $(\delta=0)$, $(a,c_a) \in \{(1,1.93), (0.01, 102)\}$ are parameters in the optimization problem, $\mO \in \{ \mO_{\equi}, \mO_{\sdp} \}$ is the optimization strategy used for the construction of knockoffs, $\mP$ is the procedure used to construct feature statistics, and $\mR \in \{ \mR_{\AND}, \mR_{\OR} \}$ is the rule used to recover the estimated edge set from the estimated neighborhoods.
	
	\textbf{Output}: Estimated edge set $\widehat{E}$.
	
	\begin{algorithmic}[1]
		\State \textbf{Step 1.} Construct the feature statistics $W = (W^{(1)}, \dots, W^{(p)})$, where for each $i\in [p]$,  $W^{(i)}$ is created by treating $\bX^{(i)}$ as a response vector and $\bX^{(-i)}$ as a design matrix, 
		and applying the first two steps of Algorithm \ref{oriKnock} with input $(\bX^{(-i)}, \bX^{(i)}, q, \delta, \mO, \mP)$.		
			\State \textbf{Step 2.} Compute thresholds $\widehat{T}=( \widehat{T}_1, \dots, \widehat{T}_p ) $ based on $W$, $q$, $\delta$, $(a, c_a)$, and $\mR$ via Algorithm \ref{algo2}.
		\State \textbf{Step 3.} Obtain $\widehat{V}^{+}_{i}  = \{ j \in [p]: W^{(i)}_j \geq \widehat{T}_i \}$ for all $i\in[p]$, and recover the estimated edge set $\widehat{E}$ based on $\widehat{V}^{+}_1,\dots,\widehat{V}^{+}_p$ and $\mR$.
	\end{algorithmic}
\end{algorithm}

\begin{restatable}{theorem}{FDRcontrol}(\textbf{Finite sample graph-wise FDR control of the GGM knockoff filter}) \label{Thm: GGMfilter FDR}
	~\\
	Let $\bX_1, \dots, \bX_n$ be $n$ i.i.d.\ observations from a $p$-dimensional Gaussian distribution $\mN_p(0,\Omega^{-1})$ with undirected graph $G=(V,E)$, and assume that $n\geq 2p$. Let $\widehat{E}$ be the estimated edge set obtained using Algorithm \ref{Algo: GGMknock} with $\delta=1$ and let $ \widehat{F} = \widehat{E} \backslash E$ be the falsely discovered edge set. Then, for any $q \in [0,1]$, we have
	\begin{equation*}
	\FDR = \bbE \Bigg[ \frac{ |{\widehat{F}}| }{ |\widehat{E}| \vee 1} \Bigg] \leq q.
	\end{equation*}
\end{restatable}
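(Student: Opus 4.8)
The plan is to bound $\FDP(\widehat T)$ pointwise by a sum of column-wise ratios whose expectations are each controlled by \eqref{c_a bound}, and only then take expectations; the constant $c_a$ is precisely what turns the heuristic ``$\leq 1$'' step of \eqref{IntuitionOfOurMehotd} into a rigorous bound. I treat the AND rule in detail, the OR rule being entirely analogous with the factor $\frac{1}{2}$ replaced by $1$ and $q_i$ replaced by $\tilde q_i$. If Algorithm \ref{algo2} returns $\widehat T=(+\infty,\dots,+\infty)$, then $\widehat E=\emptyset$ and $\FDP=0\leq q$ trivially, so I may assume $\widehat T$ is a genuine feasible point of \eqref{ANDopt}.

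The first step is the combinatorial inequality $2|\widehat F_{\AND}|\leq \sum_{i=1}^p |\widehat V^{+}_{N_i}(\widehat T_i)|$: every falsely discovered edge $(i,j)\in\widehat F_{\AND}$ has $j\in NE_i^c$, $i\in NE_j^c$ and, by the AND rule, $j\in\widehat V^+_i$ and $i\in\widehat V^+_j$; hence it contributes one element to $\widehat V^+_{N_i}$ and one to $\widehat V^+_{N_j}$, so it is counted exactly twice on the right. This yields the first line of \eqref{IntuitionOfOurMehotd}. Next, since $\widehat V^-_{N_i}\subseteq\widehat V^-_i$, I split each summand as $\frac{|\widehat V^+_{N_i}(\widehat T_i)|}{a+|\widehat V^-_{N_i}(\widehat T_i)|}\cdot\frac{a+|\widehat V^-_i(\widehat T_i)|}{|\widehat E_{\AND}(\widehat T)|\vee 1}$ and bound the second factor using the \emph{constraint} of \eqref{ANDopt}, which holds deterministically at a feasible $\widehat T$ and reads $\frac{a+|\widehat V^-_i(\widehat T_i)|}{|\widehat E_{\AND}(\widehat T)|\vee 1}\leq q_i=\frac{2q}{c_a p}$. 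Combining these gives the pointwise bound $\FDP(\widehat T)\leq \frac{q}{c_a p}\sum_{i=1}^p \frac{|\widehat V^+_{N_i}(\widehat T_i)|}{a+|\widehat V^-_{N_i}(\widehat T_i)|}$. Taking expectations and applying \eqref{c_a bound} to each of the $p$ terms then gives $\FDR\leq \frac{q}{c_a p}\cdot p\cdot c_a=q$.

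The main obstacle is therefore \eqref{c_a bound} itself, i.e.\ $\bbE\big[\,|\widehat V^+_{N_i}(\widehat T_i)|/(a+|\widehat V^-_{N_i}(\widehat T_i)|)\,\big]\leq c_a$. The difficulty is that, for a \emph{fixed} threshold, the counts $|\widehat V^\pm_{N_i}(T_i)|$ depend only on the $i$th column $W^{(i)}$, which enjoys the sign-flip property on $NE_i^c$ (Lemma \ref{SignFlipFixDesign}, extended to the random design of \eqref{relationOfGGMandLM} by Lemma \ref{SignFlipRandomDesign}); but the realized threshold $\widehat T_i$ is computed from the \emph{entire} matrix $W$, so the super-martingale / optional-stopping argument of \cite{barber2015controlling} cannot be invoked directly on column $i$. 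I would resolve this, following the appendix, by the uniform (supremum) bound $\frac{|\widehat V^+_{N_i}(\widehat T_i)|}{a+|\widehat V^-_{N_i}(\widehat T_i)|}\leq \sup_{T>0}\frac{|\widehat V^+_{N_i}(T)|}{a+|\widehat V^-_{N_i}(T)|}$, whose right-hand side depends only on $W^{(i)}$ and is insensitive to how $\widehat T_i$ was selected. The sign-flip property of $W^{(i)}$ on $NE_i^c$ then bounds the expectation of this supremum by $c_a$ via Lemma \ref{Lemma: Bound of c_a} (in the spirit of \cite{katsevich2017multilayer}), which closes the argument.
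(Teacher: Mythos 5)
Your proposal is correct and follows essentially the same route as the paper's proof: the same double-counting bound $2|\widehat{F}_{\AND}| \leq \sum_{i=1}^p |\widehat{V}^{+}_{N_i}(\widehat{T}_i)|$ (the paper phrases it via the identity with $|\widehat{F}_{\OR}\backslash\widehat{F}_{\AND}|$ and drops the negative term), the same multiplicative split through $a + |\widehat{V}^{-}_{N_i}|$ with $\widehat{V}^{-}_{N_i}\subseteq\widehat{V}^{-}_{i}$, the same deterministic use of the constraint of \eqref{ANDopt} (the paper carries the indicator $\bbone_{\{|\widehat{E}_{\AND}|\geq 1\}}$ where you condition on feasibility, which is equivalent), and the same supremum-over-thresholds device combined with the column-wise sign-flip property (Lemma \ref{SignFlipRandomDesign}) and Lemma \ref{Lemma: Bound of c_a} to establish \eqref{c_a bound}. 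You also correctly identify the key obstacle — that $\widehat{T}_i$ depends on the full matrix $W$, defeating the martingale argument of \cite{barber2015controlling} — and its resolution, exactly as in the paper.
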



Similarly to the original fixed-X knockoff method in \cite{barber2015controlling}, the modified FDR (defined in \eqref{mFDRgraph}) of the estimated graph can also be controlled with a slight change of the constraints in optimization problems \eqref{ANDopt} and \eqref{ORopt}. 
\begin{equation} \label{mFDRgraph}
\mFDR_{\AND} = \bbE \Bigg[ \frac{ |\widehat{F}_{\AND}| }{ |\widehat{E}_{\AND}| + a c_a p/(2q) } \Bigg],
\quad
\mFDR_{\OR} = \bbE \Bigg[ \frac{ |\widehat{F}_{\OR}| }{ |\widehat{E}_{\OR}| + a c_a p/q } \Bigg].
\end{equation}
Specifically, by replacing
\begin{equation}\label{mFDRoptAND}
\frac{ a+|\widehat{V}^{-}_{i}(\widehat{T}_i)|}{|\widehat{E}_{\AND}(\widehat{T})| \vee 1} \leq \frac{2q}{c_a p}
\quad \text{by} \quad 
\frac{ |\widehat{V}^{-}_{i}(\widehat{T}_i)|}{|\widehat{E}_{\AND}(\widehat{T})| \vee 1} \leq \frac{2q}{c_ap}
\end{equation}
and 
\begin{equation}\label{mFDRoptOR}
\frac{ a+|\widehat{V}^{-}_{i}(\widehat{T}_i)|}{|\widehat{E}_{\OR}(\widehat{T})| \vee 1} \leq \frac{q}{c_a p}
\quad \text{by} \quad 
\frac{ |\widehat{V}^{-}_{i}(\widehat{T}_i)|}{|\widehat{E}_{\OR}(\widehat{T})| \vee 1} \leq \frac{q}{c_ap}
\end{equation}
in \eqref{ANDopt} and \eqref{ORopt}, respectively, the $\mFDR_{\AND}$ and $\mFDR_{\OR}$ can be controlled, respectively.
By controlling the mFDR instead of the FDR, more discoveries will be made. However, these two modified FDRs can be much smaller than the FDR, and are close to the FDR if $|\widehat{E}_{\AND}| \gg a c_a p/(2q) $ and $|\widehat{E}_{\OR}| \gg a c_a p/q $. There is a trade-off for different values of $(a,c_a)$: for a smaller $a$, the product $a c_a$ would be smaller, thus the mFDR would be closer to the FDR. The constraints \eqref{mFDRoptAND} and \eqref{mFDRoptOR} of the optimization problem, however, would have  smaller right hand sides as $c_a$ would be larger, and thus a smaller number of discoveries. 

The GGM knockoff filter for mFDR control is summarized in Algorithm \ref{Algo: GGMknock} (with $\delta=0$), and the theoretical mFDR control guarantee is given in Theorem \ref{Thm: GGMfilter mFDR}.

\begin{restatable}{theorem}{mFDRcontrol}(\textbf{Finite sample graph-wise mFDR control of GGM knockoff filter}) \label{Thm: GGMfilter mFDR}
	~\\
	Let $\bX_1, \dots, \bX_n$ be $n$ i.i.d.\ observations from a $p$-dimensional Gaussian distribution $\mN_p(0, \Omega^{-1})$ with undirected graph $G=(V,E)$, and assume that $n\geq 2p$. Let $\widehat{E}_{ \AND / \OR }$ be the estimated edge set obtained using Algorithm \ref{Algo: GGMknock} with $\delta=0$ and hyperparameters $(a,c_a)$. Let $ \widehat{F}_{ \AND / \OR } = \widehat{E}_{ \AND / \OR } \backslash E$ be the falsely discovered edge set.
	Then, for any $q \in [0,1]$, we have
	\begin{equation*}
	\mFDR_{\AND} = \bbE \Bigg[ \frac{ |\widehat{F}_{\AND}| }{ |\widehat{E}_{\AND}| + a c_a p/(2q) } \Bigg] \leq q 
	\quad \text{and} \quad
	\mFDR_{\OR} = \bbE \Bigg[ \frac{ |\widehat{F}_{\OR}| }{ |\widehat{E}_{\OR}| + a c_a p/q } \Bigg] \leq q.
	\end{equation*}
\end{restatable}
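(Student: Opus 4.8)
The plan is to mirror the proof of the FDR result (Theorem~\ref{Thm: GGMfilter FDR}), exploiting the fact that the two modifications defining the mFDR setting---shifting the denominator from $|\widehat{E}| \vee 1$ to $|\widehat{E}_{\AND}| + a c_a p/(2q)$ (resp.\ $+\, a c_a p/q$) and dropping the additive $a$ from the constraint numerator in \eqref{mFDRoptAND} (resp.\ \eqref{mFDRoptOR})---are tuned to cancel exactly, reproducing the same effective bound that drives the FDR proof. I treat the AND rule first; the OR rule is identical up to replacing the factor $1/2$ by $1$ and the offset $a c_a p/(2q)$ by $a c_a p/q$.

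First I would dispose of the trivial case: if $|\widehat{E}_{\AND}| = 0$ then $\widehat{F}_{\AND} = \emptyset$, so the integrand of $\mFDR_{\AND}$ vanishes, and it suffices to bound it on the event $\{ |\widehat{E}_{\AND}| \geq 1 \}$. Writing $M = a c_a p/(2q)$, so that $a/M = 2q/(c_a p)$, the crux is the algebraic inequality
\begin{equation*}
\frac{ a + |\widehat{V}^{-}_{i}(\widehat{T}_i)| }{ |\widehat{E}_{\AND}(\widehat{T})| + M } \leq \frac{2q}{c_a p}
\qquad \text{on } \{ |\widehat{E}_{\AND}| \geq 1 \}.
\end{equation*}
Indeed, the modified constraint \eqref{mFDRoptAND} gives $|\widehat{V}^{-}_{i}| \leq \tfrac{2q}{c_a p} |\widehat{E}_{\AND}| = \tfrac{a}{M} |\widehat{E}_{\AND}|$ when $|\widehat{E}_{\AND}| \geq 1$, hence $a + |\widehat{V}^{-}_{i}| \leq a\,(M + |\widehat{E}_{\AND}|)/M$, and dividing by $|\widehat{E}_{\AND}| + M$ yields the claim. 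This is precisely the bound that the FDR constraint in \eqref{ANDopt} supplies directly, so the offset $M$ in the mFDR denominator is exactly what absorbs the $a$ removed from the constraint.

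With this in hand I would reuse the decomposition \eqref{IntuitionOfOurMehotd}: using $|\widehat{F}_{\AND}| \leq \tfrac12 \sum_{i} |\widehat{V}^{+}_{N_i}|$, the inclusion $\widehat{V}^{-}_{N_i} \subseteq \widehat{V}^{-}_{i}$, and multiplying and dividing each summand by $a + |\widehat{V}^{-}_{N_i}|$, I obtain on $\{ |\widehat{E}_{\AND}| \geq 1 \}$
\begin{equation*}
\frac{ |\widehat{F}_{\AND}| }{ |\widehat{E}_{\AND}| + M }
\leq \frac12 \sum_{i=1}^{p} \frac{ |\widehat{V}^{+}_{N_i}| }{ a + |\widehat{V}^{-}_{N_i}| } \cdot \frac{ a + |\widehat{V}^{-}_{i}| }{ |\widehat{E}_{\AND}| + M }
\leq \frac{q}{c_a p} \sum_{i=1}^{p} \frac{ |\widehat{V}^{+}_{N_i}| }{ a + |\widehat{V}^{-}_{N_i}| }.
\end{equation*}
Taking expectations, dropping the nonnegative indicator, and applying the column-wise bound \eqref{c_a bound}---which holds for the data-dependent threshold $\widehat{T}_i$ by Lemma~\ref{Lemma: Bound of c_a} regardless of which optimization problem produced it, since its proof takes a supremum over thresholds---gives $\mFDR_{\AND} \leq \frac{q}{c_a p} \sum_{i=1}^{p} c_a = q$. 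For the OR rule I would replace the counting bound by $|\widehat{F}_{\OR}| \leq \sum_i |\widehat{V}^{+}_{N_i}|$, set $M = a c_a p/q$ so that $a/M = q/(c_a p)$, and invoke \eqref{mFDRoptOR}; the same steps then yield $\mFDR_{\OR} \leq \frac{q}{c_a p} \sum_i c_a = q$.

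The counting bounds on $|\widehat{F}|$ and the decomposition are routine and already established in the FDR argument. The one genuinely new point---and the step I expect to require the most care---is the algebraic inequality above: one must verify that the specific offsets $a c_a p/(2q)$ and $a c_a p/q$ appearing in the definitions \eqref{mFDRgraph} of the two modified FDRs are exactly the values for which the shifted denominator compensates for the missing $a$ in the relaxed constraints \eqref{mFDRoptAND}--\eqref{mFDRoptOR}, and that this compensation holds on all of $\{ |\widehat{E}| \geq 1 \}$ and not merely in a limiting regime.
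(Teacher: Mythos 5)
Your proposal is correct and follows essentially the same route as the paper's own proof: restrict to the event $\{|\widehat{E}| \geq 1\}$, bound $|\widehat{F}_{\AND/\OR}|$ by the sum of $|\widehat{V}^{+}_{N_i}|$, multiply and divide by $a + |\widehat{V}^{-}_{N_i}|$, use the relaxed constraints \eqref{mFDRoptAND}--\eqref{mFDRoptOR} on the second factor, and finish with the supremum bound \eqref{SupBoundTerm} from Lemma~\ref{Lemma: Bound of c_a}. Your ``crux'' inequality $\bigl(a + |\widehat{V}^{-}_{i}|\bigr)/\bigl(|\widehat{E}_{\AND}| + M\bigr) \leq a/M$ with $M = a c_a p/(2q)$ is exactly the paper's computation in which $a + \tfrac{2q}{c_a p}|\widehat{E}_{\AND}|$ is factored as $\tfrac{2q}{c_a p}\bigl(\tfrac{a c_a p}{2q} + |\widehat{E}_{\AND}|\bigr)$, so the cancellation you highlight as the new step is precisely how the paper's proof proceeds.
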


\section{GGM knockoff filter with sample-splitting-recycling} \label{sec: sample-splitting-recycling}

\subsection{Motivation} \label{section4.1}
For a given $(\bX, q, \delta)$, the GGM knockoff filter (Algorithm \ref{Algo: GGMknock}) requires hyperparameters $(a,c_a)$, $\mO$, $\mP$ and $\mR$ which can be chosen very flexibly. It is not clear a priori, however, what choice is optimal for a given data set. 
In this paper, we consider $880$ different choices of $ ((a,c_a), \mO, \mP, \mR)$ (see Section \ref{sec:simu5.1} for details).
To investigate their performances, we apply them (with Algorithm \ref{Algo: GGMknock}) to two different settings taken from the simulation settings in Section \ref{sec:simu} (see Appendix \ref{appendix: SimulationSettingSection4} for more details). Each point in Figure \ref{OppoPower} represents the power of one procedure in the two different settings. One can see that no procedure is uniformly the best in both settings, and some powerful procedures in setting 1 are even powerless in setting 2, and vice versa. The procedures with hyperparameter $\mP$ used as in the simulation part of \cite{barber2015controlling} (that is, $\mP \in \{ \eqref{LambdaMax}+\eqref{WfromZ1}, \eqref{LambdaMax}+\eqref{WfromZ2}\}$ with $\alpha=1$) are shown in red. They have good powers in setting 1, but not in setting 2. The main message conveyed by Figure \ref{OppoPower} is that different choices of $((a,c_a), \mO, \mP, \mR)$ can result in very different statistical powers in different settings, and no particular choice is dominant.
\begin{figure}[t!]
	\centering
	\includegraphics[width=8cm, height=8cm]{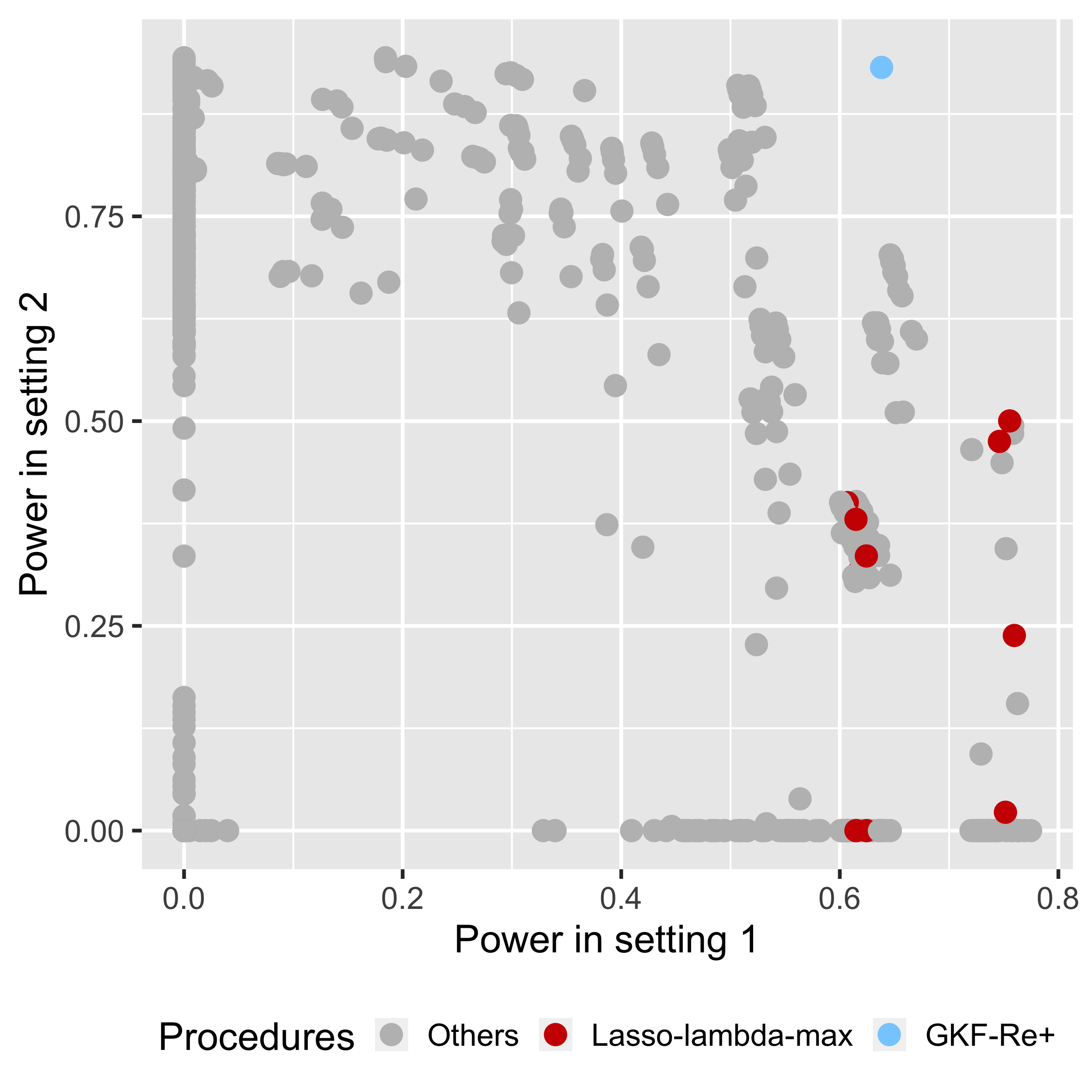}
	\caption{The power of $880$ different FDR control procedures resulting from different choices of $ ((a,c_a), \mO, \mP, \mR)$ and the power of the GGM knockoff filter with sample-splitting-recycling in settings 1 and 2. The red points denote the power of the procedures with $\mP$ as in \cite{barber2015controlling}. The blue point denotes the power of the GGM knockoff filter with sample-splitting-recycling.}
	\label{OppoPower}
\end{figure}


Hence, one practical and important question naturally arises: given many valid FDR control procedures, 
which one should we choose for a given problem to ensure good power while maintaining the FDR control guarantee?
This problem has not received much attention in the FDR control literature so far, but seems a key issue for practical applicability and reproducibility when there are many FDR control procedures available (including both knockoff based and p-value based methods).

\subsection{Sample-splitting-recycling with FDR and mFDR control guarantees} 

As mentioned in Section \ref{sec:intro}, the naive idea of choosing the procedure returning the maximal number of discoveries loses FDR control. On the other hand, sample-splitting does guarantee FDR control, but has two  issues: (i) the statistical power is lower because only half of the sample is used to obtain the final result, and (ii) the final result is random, as the splitting is random.
We are not able to solve the second issue, but the first issue can be alleviated to some extent by using the sample-splitting-recycling approach proposed by \cite{barber2016knockoff}. 
Their paper considers using knockoffs to select variables in a high-dimensional linear model, and sample-splitting is used to reduce the dimensionality to obtain a low-dimensional linear model based on the first part of sample. Then, instead of selecting variables based only on the second half of sample, they use all samples in a particular way that ensures FDR control. The key point of reusing all samples lies in the construction of knockoffs corresponding to the used sample. In our case, we will use the same sample-splitting-recycling approach to select the FDR control procedure. 

Formally, let $\bX \in \bbR^{n \times p}$ be the original sample matrix, and $\bX_{1} \in \bbR^{n_1 \times p}$ and $\bX_{2} \in \bbR^{n_2 \times p}$ be two subsample matrices obtained by randomly splitting $\bX$ with $n_1+n_2=n$. In this paper, we will use $n_1 = \lfloor n /2 \rfloor$ and $n_2 = n - n_1$, where $\lfloor n /2 \rfloor$ denotes the biggest integer that is smaller than $n/2$. We denote the collection of these two subsamples by $ \bX^{\re} = \begin{pmatrix} \bX_{1} \\ \bX_{2} \end{pmatrix} \in \bbR^{n \times p}$.

First, we use $\bX_{1}$ to select one procedure among $m$ candidate procedures. Concretely, we run Algorithm \ref{Algo: GGMknock} with all $m$ choices of $((a,c_a), \mO, \mP, \mR)$ to obtain $m$ estimated edge sets $\widehat{E}$. Then, we choose the combination $((a^*,c_a^*), {\mO}^*, \mP^*, \mR^*)$ that maximize $|\widehat{E}|$ (we randomly choose one procedure if there is a tie). For the vanilla sample-splitting approach, one would then apply the chosen $((a^*,c_a^*), {\mO}^*, \mP^*, \mR^*)$ to $\bX_{2}$ to get the final estimated graph. Sample-splitting-recycling, however, also makes use of the used data $\bX_{1}$, but in a particular way. Specifically, for each $i\in[p]$, we construct the knockoff matrix by
\begin{equation}\label{ReKnockoffs}
	\widetilde{\bX}^{\re(-i)} =  \begin{pmatrix} \bX_{1}^{(-i)} \\ \widetilde{\bX}_{2}^{(-i)}\end{pmatrix} \in \bbR^{n \times (p-1)},
\end{equation}
where $\widetilde{\bX}_{2}^{(-i)}$ is the knockoff matrix constructed using formula \eqref{ConstructKnockoffs} based on $ \bX_{2}^{(-i)}$. It is easy to verify that the matrix $\widetilde{\bX}^{\re(-i)}$ in \eqref{ReKnockoffs} satisfies \eqref{knockConstruct}, so that it is indeed a valid knockoff matrix for $\bX^{\re(-i)}$. Note that the knockoff matrix $\widetilde{\bX}^{\re(-i)}$ is not constructed directly based on $ \bX^{\re (-i)}$ using \eqref{ConstructKnockoffs}. This is the only key modification, and the remaining procedure is the same as in Algorithm \ref{Algo: GGMknock}. We show by simulations that the sample-splitting-recycling approach indeed helps to improve the power compared to sample-spitting, see Appendix \ref{appendix: RecyclingPowerGain} for details.

The GGM knockoff filter with sample-splitting-recycling is summarized in Algorithm \ref{Algo: GGMknockWithSplittingRecycling}.
As the definition of the modified FDR is different for $\mR_{\AND}$ and $\mR_{\OR}$, the controlled modified FDR of Algorithm \ref{Algo: GGMknockWithSplittingRecycling} depends on the chosen rule. Hence, we define
\begin{equation}\label{mFDR: DefineforRecycleAlgo}
\mFDR^{\re}
= \bbE \left[ \frac{ |\widehat{F}| }{ |\widehat{E}| + a c_a p/(q\bbone_{ \{ \mR_{\AND}\text{ is chosen based on } \bX_{1} \} } + q)} \right].
\end{equation}
The following theorem guarantees the FDR and the modified FDR control of Algorithm \ref{Algo: GGMknockWithSplittingRecycling}.

\begin{algorithm}[h]
	\caption{\textbf{: GGM knockoff filter with sample-splitting-recycling}}\label{Algo: GGMknockWithSplittingRecycling}
	\textbf{Input}: $(\bX, q, \delta,  \bA, \bmO, \bmP, \bmR)$, where $\bX$ is a data matrix, $q \in [0,1]$ is the nominal FDR level, $\delta \in \{0,1\}$ indicates FDR control $(\delta=1)$ or mFDR control $(\delta=0)$, 
	$\bA$ is the set of $(a,c_a)$ used in the optimization problem, 
	$\bmO$ is the set of optimization strategies used for the construction of knockoffs, 
	$\bmP$ is the set of candidate procedures to construct feature statistics,
	and $\bmR$ is the set of rules used to recover the estimated edge set from the estimated neighborhoods.
	
	\textbf{Output}: Estimated edge set $\widehat{E}$.
	
	\begin{algorithmic}[1]
		\State \textbf{Step 1.} Randomly split the data matrix $\bX$ into two parts $\bX_{1}$ and $\bX_{2}$.
		\State \textbf{Step 2.} Implement Algorithm \ref{Algo: GGMknock} with input $ (\bX_{1}, q, \delta, (a,c_a), \mO, \mP, \mR)$ for all combinations of $(a,c_a) \in \bA, \mO \in \bmO, \mP \in \bmP, \mR \in \bmR$ to obtain $\widehat{E}(\bX_{1}, q, \delta, (a,c_a), \mO, \mP, \mR)$.
		Let $((a^*,c_a^*), {\mO}^*, \mP^*, \mR^*)$ be the combination leading to the maximum number of discoveries.
		\State {\textbf{Step 3.} Generate the feature statistic matrix $W=(W^{(1)}, \dots,W^{(p)})$ using $\bX^{\re} = \begin{pmatrix} \bX_{1} \\ \bX_{2} \end{pmatrix}$, ${\mO}^*$ and $\mP^*$:}
		\newline
		\-\hspace{0.5cm}
		\textbf{for} $i=1, \dots, p$, \textbf{do}
		\newline
		\-\hspace{1cm}
		Construct the knockoff matrix $\widetilde{\bX}^{\re(-i)}$ as in (\ref{ReKnockoffs}) using $\bX^{\re(-i)}$ and ${\mO}^*$. Then
		\newline
		\-\hspace{1cm}
		apply Step 2 of Algorithm \ref{oriKnock} with input $(\bX^{\re(-i)}, \widetilde{\bX}^{\re(-i)}, \bX^{\re(i)}, \mP^*)$ to obtain 
		\newline
		\-\hspace{1cm}
		the feature statistics $W^{(i)}$.
		\State \textbf{Step 4.} Apply Step 2 and Step 3 of Algorithm \ref{Algo: GGMknock} with $W$, $q$, $\delta$, $(a^*,c_a^*)$ and $\mR^*$, to obtain $\widehat{E}$.
	\end{algorithmic}
\end{algorithm}

\begin{restatable}{theorem}{SampleSplittingRecyclinfFDRmFDRresuklts}(\textbf{Finite sample FDR/mFDR control of GGM knockoff filter with sample-splitting-recycling}) \label{Thm: FDRmFDRSplittingRecycling}
	~\\
	Let $\bX_1, \dots, \bX_n$ be $n$ i.i.d.\ observations from a $p$-dimensional Gaussian distribution $\mN_p(0,\Omega^{-1})$ with undirected graph $G=(V,E)$, and assume that $n\geq 4p$. For any $q \in [0,1]$,
	Algorithm \ref{Algo: GGMknockWithSplittingRecycling} with $\delta=1$ controls the $\FDR$ at level $q$, and Algorithm \ref{Algo: GGMknockWithSplittingRecycling} with $\delta=0$ controls the $\mFDR^{\re}$ defined as \eqref{mFDR: DefineforRecycleAlgo} at level $q$.
\end{restatable}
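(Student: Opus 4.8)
The plan is to reduce the sample-splitting-recycling guarantee to the already-established single-sample guarantees (Theorems \ref{Thm: GGMfilter FDR} and \ref{Thm: GGMfilter mFDR}) by conditioning on the first subsample $\bX_{1}$. The key observation is that the hyperparameter choice $((a^*,c_a^*),\mO^*,\mP^*,\mR^*)$ made in Step 2 depends \emph{only} on $\bX_{1}$, so that once we condition on $\bX_{1}$, the selected procedure is a fixed (deterministic) set of hyperparameters. Then I would argue that, conditionally on $\bX_{1}$, the recycled procedure in Steps 3--4 is exactly an instance of the GGM knockoff filter applied with those fixed hyperparameters, whose FDR (resp.\ mFDR) is controlled by the earlier theorems. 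Since the conditional bound $q$ does not depend on $\bX_{1}$, taking expectation over $\bX_{1}$ via the tower property yields the unconditional bound.

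The heart of the argument, and the step I expect to be the main obstacle, is verifying that the column-wise sign-flip property of $W^{(i)}$ (Lemma \ref{SignFlipFixDesign}, in its random-design version Lemma \ref{SignFlipRandomDesign}) still holds \emph{after} recycling, conditionally on $\bX_{1}$. First I would check, as the excerpt already asserts, that $\widetilde{\bX}^{\re(-i)}$ defined in \eqref{ReKnockoffs} satisfies the knockoff conditions \eqref{knockConstruct} for the full design $\bX^{\re(-i)}$: this follows because the top block $\bX_{1}^{(-i)}$ appears identically in both $\bX^{\re(-i)}$ and $\widetilde{\bX}^{\re(-i)}$, contributing $\bX_{1}^{(-i)T}\bX_{1}^{(-i)}$ to every block of the Gram matrix, while the bottom block uses the genuine fixed-X knockoff $\widetilde{\bX}_{2}^{(-i)}$ of $\bX_{2}^{(-i)}$, so the off-diagonal blocks differ from $\bX^{\re(-i)T}\bX^{\re(-i)}$ by exactly $\diag\{s^{(i)}\}$. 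The subtle point is that, because the top halves are \emph{not} swapped when we exchange $\bX_j^{\re(-i)}$ with its recycled knockoff, the sign-flip of the null coordinates of $W^{(i)}$ must be justified for the recycled construction rather than quoted directly; I would appeal to the reasoning in \cite{barber2016knockoff} showing that the recycled feature statistics retain the sign-flip property on the null set, conditionally on $\bX_{1}$ and on the non-null columns. This requires that the noise in the regression \eqref{relationOfGGMandLM} restricted to the recycled sample is still exchangeable in the relevant sense after conditioning, which is where the independence of the two halves and the Gaussianity of $X$ are used.

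Once the conditional sign-flip property is in hand, the remainder is bookkeeping. For the FDR case ($\delta=1$), conditioning on $\bX_{1}$ fixes $\mR^*\in\{\mR_{\AND},\mR_{\OR}\}$ and $(a^*,c_a^*)$, and the threshold vector $\widehat{T}$ computed in Step 4 solves exactly the optimization problem \eqref{ANDopt} or \eqref{ORopt} for the recycled matrix $W$; the proof of Theorem \ref{Thm: GGMfilter FDR} then applies verbatim to give $\bbE[\FDP\mid \bX_{1}]\le q$, and the tower property finishes. For the mFDR case ($\delta=0$) the same conditioning argument applies, but I would need to check that the normalizing constant in \eqref{mFDR: DefineforRecycleAlgo} matches the rule actually selected: the indicator $\bbone_{\{\mR_{\AND}\text{ is chosen based on }\bX_{1}\}}$ is $\bX_{1}$-measurable, so conditionally on $\bX_{1}$ the denominator reduces to $|\widehat{E}|+a c_a p/(2q)$ when $\mR_{\AND}$ is chosen and to $|\widehat{E}|+a c_a p/q$ when $\mR_{\OR}$ is chosen, matching exactly the two mFDR definitions in \eqref{mFDRgraph} controlled by Theorem \ref{Thm: GGMfilter mFDR}. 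Hence the conditional mFDR is bounded by $q$ in either case, and taking expectation over $\bX_{1}$ completes the proof. The assumption $n\ge 4p$ (rather than $n\ge 2p$) enters precisely here: it guarantees $n_2=n-\lfloor n/2\rfloor\ge 2p$, so that the fixed-X knockoff $\widetilde{\bX}_{2}^{(-i)}$ in \eqref{ReKnockoffs} can be constructed for each node from the second subsample alone.
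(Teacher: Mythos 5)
Your proposal is correct and takes essentially the same route as the paper: condition on $\bX_{1}$ so that $((a^*,c_a^*),\mO^*,\mP^*,\mR^*)$ is fixed, establish the conditional sign-flip property on $NE_i^c$ of the recycled statistics $W^{(i)}$ (the paper's Lemma \ref{Lemma: SplitRecyclingColumnSignFlip}, proved exactly by the mechanism you sketch --- the $\bX_{1}$-blocks of $[\bX^{\re(-i)}\ \widetilde{\bX}^{\re(-i)}]$ are identical and hence swap-invariant, while the $\bX_{2}$-blocks satisfy Lemmas \ref{FirstLemmaFromBC} and \ref{SecondLemmaFromBC}), rerun the proofs of Theorems \ref{Thm: GGMfilter FDR} and \ref{Thm: GGMfilter mFDR} conditionally on $\bX_{1}$, and conclude by the tower property using the $\bX_{1}$-measurability of the indicator of which rule is chosen, which is also how the paper matches the denominator in \eqref{mFDR: DefineforRecycleAlgo} to the two cases of \eqref{mFDRgraph}. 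Your account of why $n\geq 4p$ is needed (so that $n_2\geq 2p$ and the fixed-X knockoffs $\widetilde{\bX}_{2}^{(-i)}$ in \eqref{ReKnockoffs} exist) likewise agrees with the paper.
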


As a quick illustration of the performance of the sample-splitting-recycling approach, we implemented Algorithm \ref{Algo: GGMknockWithSplittingRecycling} in the previous setting 1 and setting 2. Its power is shown in blue in Figure \ref{OppoPower}. This shows a successful result of this approach, since the blue point lies near the top right corner with good powers in both settings.


\section{Simulations and a real data example} \label{sec:simu}
We now examine the performance of our proposed GGM knockoff filter with sample-splitting-recycling in a range of settings and compare it to other methods. All analyses were carried out in R and the code is made available at \url{https://github.com/Jinzhou-Li/GGMKnockoffFilter-R}.

\subsection{Simulation set-up and compared methods} \label{sec:simu5.1}

For each replication, we generate $n$ independent samples from $\mN_p(0,\Omega^{-1})$, where the precision matrix $\Omega$ corresponds to one of four graph types that are commonly used in Gaussian graphical model selection \citep[see, e.g.,][]{liu2017tiger}: band graphs, block graphs, Erd\H{o}s–R\'{e}nyi graphs and cluster graphs.

In particular, we let $\Omega := \Omega^{o} +(|\lambda_{\min}(\Omega^{o} )| + 0.5)I $, where $\lambda_{\min} (\Omega^{o} )$ is the minimum eigenvalue of $\Omega^{o}$. This construction ensures that the precision matrix is positive definite. The respective   $\Omega^{o}$ matrices are generated as follows:
\begin{itemize}
	\item[(i)] Band graph: $\Omega^{o}_{i,i}=1$ for $i=1,\dots, p$ and $\Omega^{o}_{i,j} = sign(b)\cdot |b|^{|i-j|/10} \cdot \bbone_{|i-j|\leq 10}$ for all $i \neq j$.
	\item[(ii)] Block graph: $\Omega^{o}$ is a block diagonal matrix consisting of $10$ blocks. Each block represents a fully connected graph of size $20$ with diagonal entries $1$ and off-diagonal entries $b$. 
	\item[(iii)] Erd\H{o}s–R\'{e}nyi graph: $\Omega^{o}_{i,i}=1$ and $\Omega^{o}_{i,j} = \Omega^{o}_{j,i} = \omega_{i,j} \cdot \phi_{i,j} $, for $i=1,\dots, p$ and $j<i$, where $\omega_{i,j}$ is drawn independently and uniformly from $[-0.6, -0.2]\cup[0.2,0.6]$ and $\phi_{i,j}$ is drawn independently from Bernoulli$(1/10)$.
	\item[(iv)] Cluster graph: $\Omega^{o}$ is a block diagonal matrix consisting of $5$ blocks. Each block is of size $40$ and is generated as the above Erd\H{o}s–R\'{e}nyi graph but with Bernoulli$(1/2)$. 
\end{itemize}
In all cases, we consequently permute the ordering of the variables, to break the pattern of the matrix. All graphs are comparable in the sense that the proportion of edges is about $0.1$. Throughout, we use $p=200$ variables, while the sample size $n$ is varied between $1500$ and $4000$, the edge parameter $b$ is varied between $-0.9$ and $0.9$, and the nominal FDR level $q$ is varied between $0.1$ and $0.4$.

%

We compare the following six methods for structure learning while aiming for FDR control: 
\begin{itemize}
	\item[(a)] BH: Partial correlations and the Benjamini-Hochberg procedure \citep{benjamini1995controlling}. We first obtain  p-values from two-sided tests for zero partial correlations $\rho_{ij \given [p] \backslash \{i,j\}}$, for $i=1,\dots, p$ and $j<i$. We then apply the Benjamini-Hochberg procedure to these p-values. 
	\item[(b)] BY: Partial correlations and the Benjamini-Yekutieli procedure \citep{benjamini2001control}. The p-values are obtained as in (a). We then apply the Benjamini-Yekutieli procedure.
	\item[(c)] GFC-L and GFC-SL: two methods proposed by \cite{liu2013gaussian} for high-dimensional Gaussian graphical models. We use the R-package ``SILGGM" developed by \cite{zhang2018silggm} with default values of the tuning parameters.  
	\item[(d)] KO2: Knockoff-based method proposed by \cite{YuKaufmannLederer19}. We use the R code from ``https://github.com/LedererLab/GGM-FDR" for its implementation.
	\item[(e)] GKF-Re+: the GGM knockoff filter with sample-splitting-recycling. We run Algorithm \ref{Algo: GGMknockWithSplittingRecycling} with $\delta=1$. The Hyperparameter space for GKF-Re+ is chosen as follows:
	$\bA = \{(1,1.93), (0.01, 102)\}$,
	$\bmO = \{ \mO_{\equi}, \mO_{\sdp} \}$,
	$\bmR = \{ \mR_{\AND}, \mR_{\OR} \}$, 
	and
	$\bmP=\{ \mP(\alpha) \text{ with } \mP \in \{\eqref{LambdaMax}+\eqref{WfromZ1}, \eqref{LambdaMax}+\eqref{WfromZ2} \}, 
	\mP(\alpha,\lambda) \text{ with } \mP \in \{ \eqref{Coef}+\eqref{WfromZ1}, \eqref{Coef}+\eqref{WfromZ2} \} \}$.
	For $\alpha$, we use $\alpha \in \{ 0.2, 0.4, 0.6, 0.8, 1 \}$. For $\lambda$, it is not practical to directly pick a set of possible values for $\lambda$ in (\ref{Coef}), as the range of $\lambda$'s in which the regression coefficients are nonzero is generally unknown in advance. Therefore, for each value of $\alpha$, we take the $\{ 0.1, 0.2, 0.3, 0.4, 0.5, 0.6, 0.7, 0.8, 0.9, 1\}$  quantiles of the $\lambda$-vector returned by the R package ``glmnet" \citep{friedman2010regularization}. One can verify that the feature statistics constructed in this way satisfy the required antisymmetry and sufficiency properties. 
	We consider all combinations of the above choices, yielding $880$ different choices of $ ((a,c_a), \mO, \mP, \mR) $ in total. 
\end{itemize}

We use 100 replications for each setting. For each estimated graph $\widehat E$ of $E$, we store the resulting false discovery proportion $\FDP = |\widehat E  \setminus E| / ( |\widehat E| \vee 1)$ and the true positive proportion $\TPP = | \widehat E \cap E| / (|E| \vee 1)$ to compute the empirical FDR and power.


\subsection{Simulation results}

The simulation results with varying sample size $n$, nominal FDR level $q$ and edge parameter $b$ are displayed in Figures \ref{MainFig:VaryN}, \ref{MainFig:VaryFDR} and \ref{MainFig:VaryA}, respectively. Each plot shows the empirical FDR, obtained by averaging the 100 FDP realizations, as well as the empirical power, obtained by averaging the 100 TPP realizations, for the given settings and the six different methods. The plots also include dashed vertical bars, indicating plus/minus one empirical standard deviation of the FDP and TPP respectively, to visualize the concentration of the FDP and TPP around their means.

\begin{figure}[t!]
	\centering
	\includegraphics[width=13cm, height=13cm]{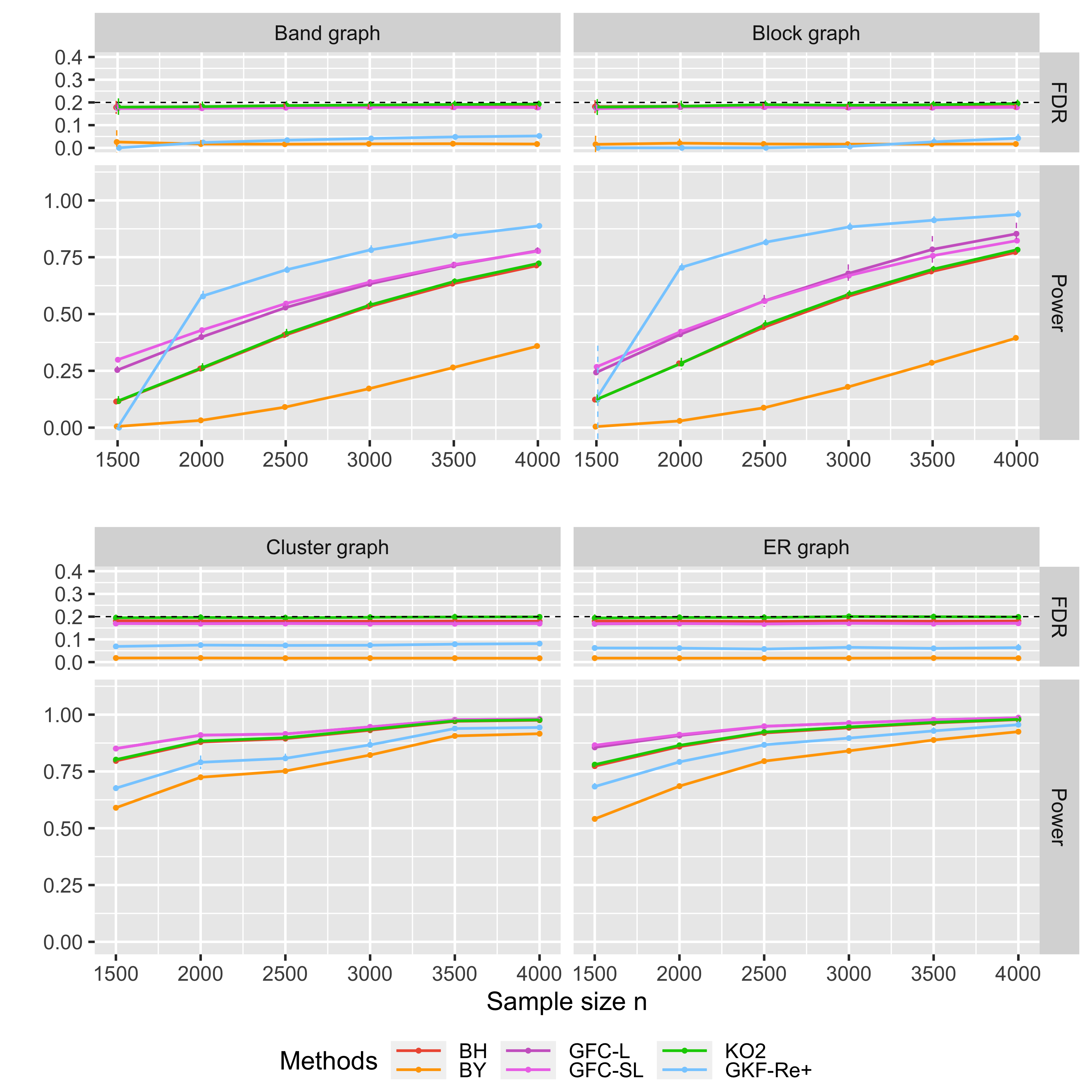}
	\caption{Simulation results: the empirical FDR and power of the six considered methods on the four graph types when varying the sample size $n$, while the nominal FDR level $q=0.2$ and the edge parameter $b=-0.6$ for band and block graphs. The dashed vertical bars indicate plus/minus one empirical standard deviation of the FDP and TPP.}
	\label{MainFig:VaryN}
\end{figure}
\begin{figure}[t!]
	\centering
	\includegraphics[width=13cm, height=13cm]{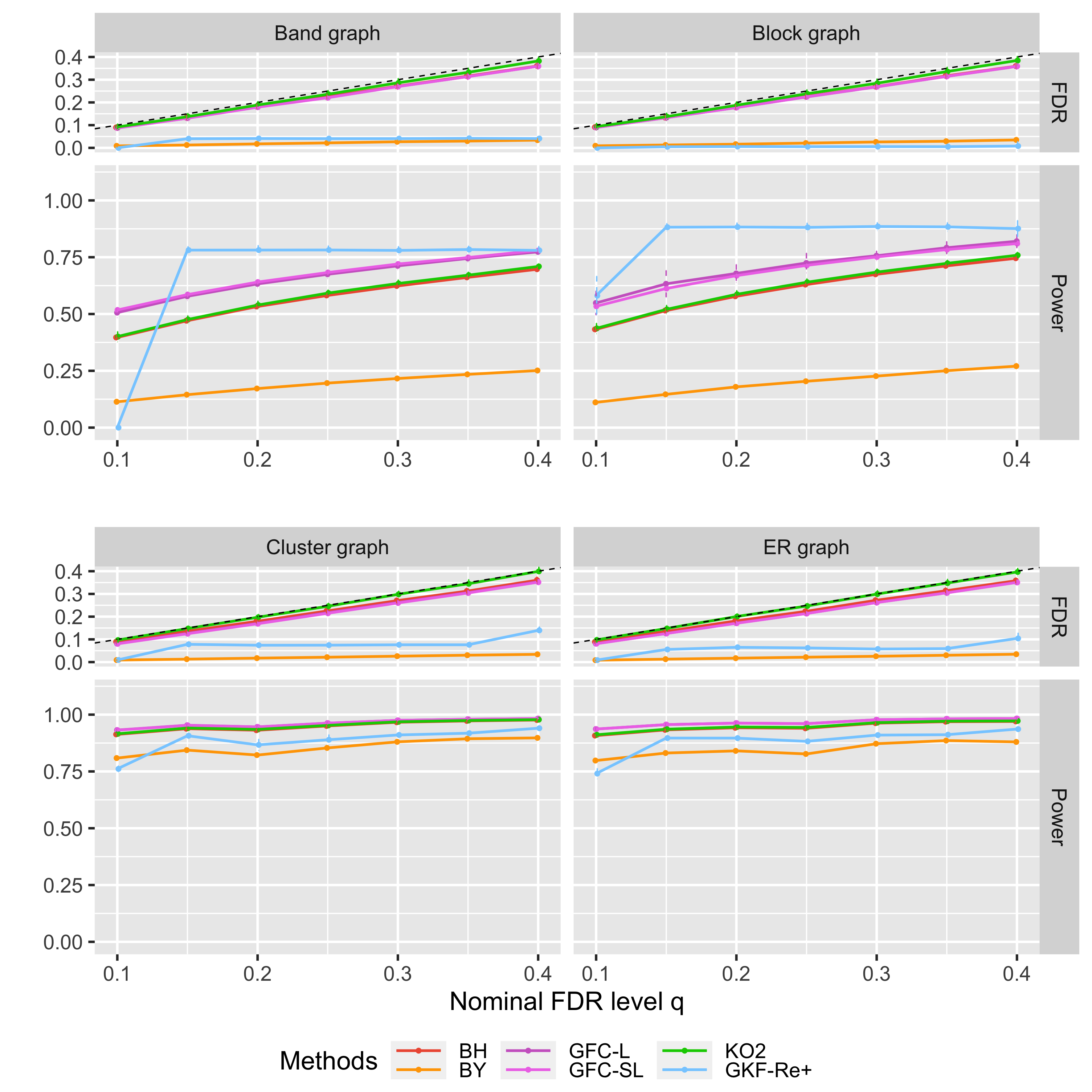}
	\caption{Simulation results: the empirical FDR and power of the six considered methods on four graph types when varying nominal FDR level $q$, while the sample size $n=3000$ and the edge parameter $b=-0.6$ for band and block graphs. The dashed vertical bars indicate plus/minus one empirical standard deviation of the FDP and TPP.}
	\label{MainFig:VaryFDR}
\end{figure}
\begin{figure}[t!]
	\centering
	\includegraphics[width=14cm, height=8cm]{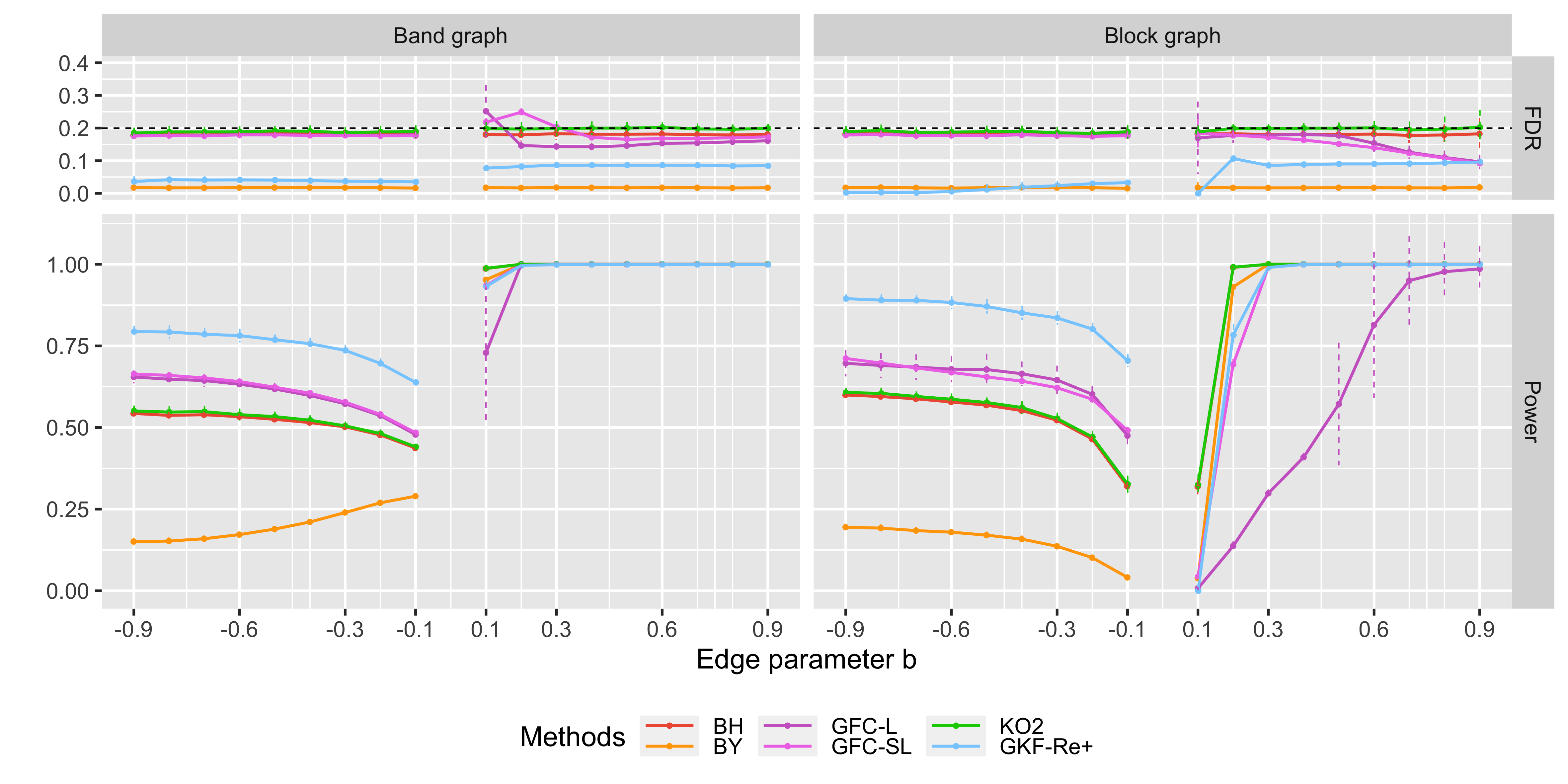}
	\caption{Simulation results: the empirical FDR and power of the six considered methods on band and block graphs when varying the edge parameter $b$, while the nominal FDR level $q=0.2$ and the sample size $n=3000$. The dashed vertical bars indicate plus/minus one empirical standard deviation of the FDP and TPP.}
	\label{MainFig:VaryA}
\end{figure}

We first look at the empirical FDR results. As expected, the two methods with finite sample FDR control guarantee, BY and GKF-Re+, control the FDR in all settings. Both methods are conservative, in the sense that their empirical FDRs are significantly smaller than the nominal FDR. 
GFC-L and GFC-SL control the FDR in most settings, but they lose control for the band graph with $b=0.1$ or $b=0.2$ (see Figure \ref{MainFig:VaryA}). This is not surprising, because they only provide asymptotic FDR control under some regularity conditions.
Although there is no theoretical finite sample FDR control guarantee for BH and KO2, these methods seem to control the FDR successfully in our simulation settings. This could be related to results of \cite{clarke2009robustness}, who showed that difficulties with dependence tend to be less severe in settings with a large number of tests and light-tailed test statistics. It would be interesting to investigate this further, to see if these methods lose control in some settings, or if theoretical guarantee can be shown under some assumptions. 

We now turn to the results about empirical power. When the sample size $n$ is small (see the band and block graphs with $n=1500$ in Figure \ref{MainFig:VaryN}), or the nominal FDR level $q$ is small (see the band and block graphs with $q=0.1$ in Figure \ref{MainFig:VaryFDR}), or the signal to noise ratio is small (see the block graph with $b=0.1, 0.2$ in Figure \ref{MainFig:VaryA} ), the TPPs of GKF-Re+ tend to vary widely and the empirical power of GKF-Re+ decreases. In particular,  the TPP may equal zero, which happens if there is no feasible point for the optimization problem in Algorithm \ref{algo2}. 
In the other settings, GKF-Re+ performs quite well and outperforms BY, while it is sometimes better and sometimes worse than the methods without proven finite sample FDR control (BH, K02, GFC-L and GFC-SL).

GKF-Re+ is especially good at handling the band and block graphs with $b<0$, in which its empirical power can be more than $3$ times larger than that of BY, and it also greatly outperforms BH, GFC-L, GFC-SL and KO2 (see the results related to band and block graphs in Figure \ref{MainFig:VaryN}, \ref{MainFig:VaryFDR} and \ref{MainFig:VaryA}). Such settings are special cases of the so-called multivariate totally positive of order two ($\text{MTP}_2$) distribution. 
Under this distribution, positive conditional dependence of two variables given all of the remaining variables implies a positive conditional dependence given any subset of the remaining variables. This property makes it interesting and important (see \cite{fallat2017total}).

Additional simulation results can be found in the appendix. Appendix \ref{appendix: Oracle} compares the performance of GKF+ (Algorithm \ref{Algo: GGMknock} with $\delta=1$) with oracle hyperparameters to that of GKF-Re+, and Appendix \ref{appendix: mFDR} investigates the performance of the mFDR controlled counterpart of GKF-Re+.

\subsection{Real data application} \label{SecRealData}

We now apply the same six methods with nominal FDR level $q=0.2$ to a public single-cell RNA-seq dataset \citep[S2 Appendix]{zheng2017massively}.
These data were analyzed by \cite{zhang2018silggm} using Gaussian graphical models to recover the gene network. 
As done by \cite{zhang2018silggm}, we preprocess the data by performing a $\log_2$(counts+1) transformation and a nonparanormal transformation \citep{liu2009nonparanormal}.
Moreover, we reduce the number of considered genes to $50$ by keeping the $50$ genes with the largest sample variances. 
Although it is hard to verify multivariate Gaussianity, we verify marginal Gaussianity of each gene by looking at the corresponding normal QQ-plot (see Appendix \ref{appendix: QQplot}). 

To investigate the effect of the randomness coming from sample splitting in GKF-Re+, we randomly split $20$ times and aggregate these results to obtain a final estimation for GKF-Re+ by taking the edges that are discovered in more than $50\%$ of the splits. We denote the aggregation procedure by GKF-Re+(agg) and note that there is no theoretical FDR control guarantee for the aggregated result. 
We also present the result of GKF-Re+ for the first random split.

For visualization purposes, we assign each edge an index via a score defined by
\begin{equation*}
	\text{score} = \text{score}_{\text{BH}} + \text{score}_{\text{BY}} + \text{score}_{\text{GFC-L}} + \text{score}_{\text{GFC-SL}} + \text{score}_{\text{KO2}} + \text{score}_{\text{GKF-Re+}},
\end{equation*}
where $\text{score}_{\text{GKF-Re+}}$ is the percentage of splits in which this edge is discovered, while the other five terms are set to be $1$ if this edge is discovered by the respective method and $0$ otherwise. The edge index, which is used in the following two plots, is then obtained by ordering these scores in a decreasing manner. Thus, a small edge index indicates that the edge is discovered by many methods. Since the ground truth of the underlying graph is unknown, we compare the edge sets discovered by the different methods.

The results are shown in Figure \ref{fig:realdata2}. The methods are ordered by the number of edges they discovered (BY: $252$, GKF-Re+: $425$, GKF-Re+(agg): $443$, BH: $513$, GFC-L: $518$, KO2: $572$, GFC-SL: $1225$). 
We see that BY discoveres the smallest number of edges, most of which are also discovered by all other methods. 
GKF-Re+ discoveres $173$ edges more than BY, and most of the edges with index roughly between $250$ to $400$ are also discovered by the remaining methods.
Both BY and  GKF-Re+ provide finite sample FDR control guarantee.
GKF-Re+(agg) returns $18$ edges more than GKF-Re+.
BH and GFC-L return very similar edge sets, that are somewhat larger than the one from GKF-Re+(agg). KO2 also returns most of these edges, as well as roughly $50$ additional ones. Finally, GFC-SL returns a fully connected graph. 
We see that edges with index around $250$ are not discovered by the two knockoff based methods, while the remaining methods do discover them. 
On the other hand, some edges with index around $600$ are discovered by the two knockoff based methods, but are not discovered by most of the other methods.
It would be interesting to further investigate if such edges are biologically meaningful. 

\begin{figure}[t!]
	\centering
	\includegraphics[width=12cm, height=6cm]{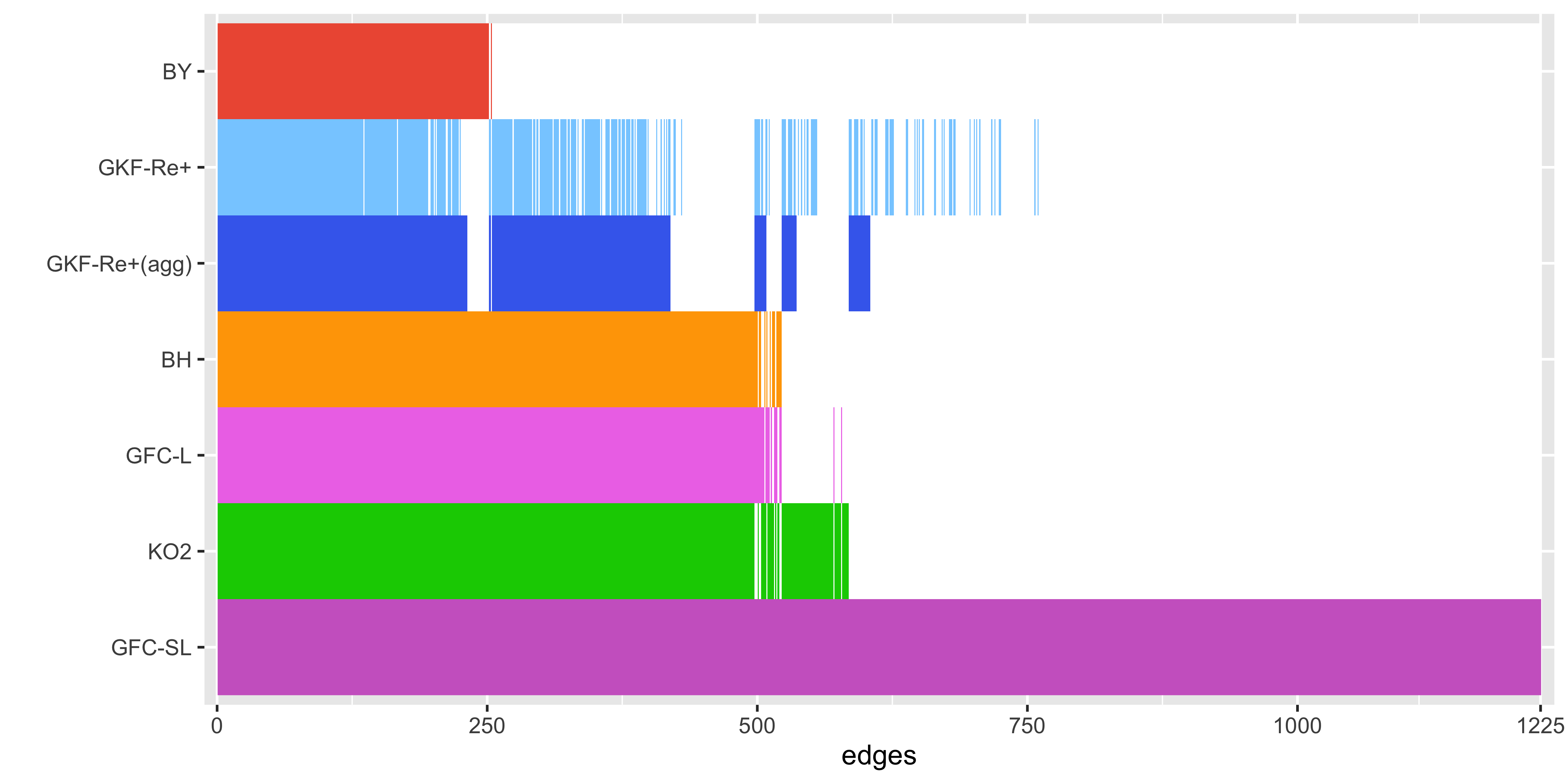}
	\caption{Results for real data: discovered edges of the seven methods. GKF-Re+ shows the results based on the first random split.}
	\label{fig:realdata2}
\end{figure}

The results of GKF-Re+ over the $20$ random splits are displayed in Figure \ref{fig:realdata1}. As expected, the discovered edges are different for different splits, but roughly have a similar pattern. 
\begin{figure}[t!]
	\centering
	\includegraphics[width=12cm, height=6cm]{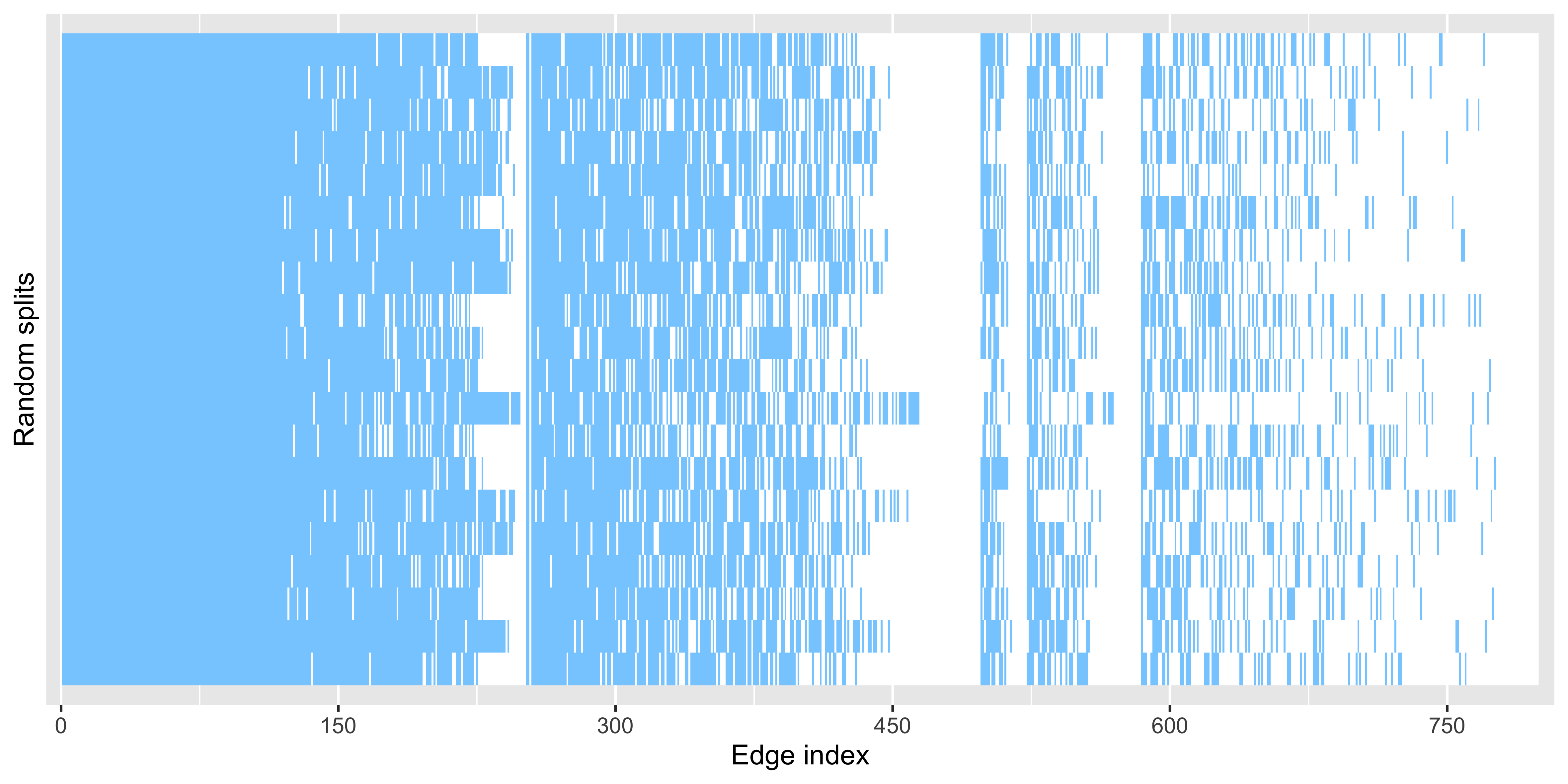}
	\caption{Results for real data: discovered edges of GKF-Re+ over $20$ random splits. Specifically, each line corresponds to the results of one random split, and the blue color indicates the discovered edges.}
	\label{fig:realdata1}
\end{figure}

\section{Discussion} \label{sec:disc}

In this paper, we consider the problem of controlling the finite sample FDR in Gaussian graphical model structure learning. 
To the best of our knowledge, the only existing method achieving this goal up to now relies on p-values obtained by testing for zero partial correlations, combined with the Benjamini-Yekutieli procedure. 
Our approach builds on the knockoff framework of \cite{barber2015controlling}. In particular, we introduce the GGM knockoff filter for fixed hyperparameter settings (Algorithm \ref{Algo: GGMknock}).
Knockoff based methods are very flexible. Although this flexibility is a desirable property, it also causes an important and practical problem: how should one set the hyperparameters to achieve the best statistical power while maintaining FDR control? 
In this paper, we use a sample-splitting-recycling approach, motivated by \cite{barber2016knockoff}, to address this issue and propose the GKF-Re+ procedure (Algorithm \ref{Algo: GGMknockWithSplittingRecycling} with $\delta=1$).

We would like to point out the limitation that the GGM knockoff filter is not suitable for very sparse graphs.
For example, consider optimization problem \eqref{ANDopt} (similar arguments hold for \eqref{ORopt} ) and the favorable case where there exists a threshold vector $T$ such that $|\widehat{E}_{\AND}(T)| = |E| = \gamma p(p-1) / 2$ and $|\widehat{V}^{-}_{i}(T_i)|=0$ for all $i\in[p]$, where $\gamma \in [0,1]$ is the sparsity level of the true graph. 
Then, a feasible point for the optimization problem can only exist if $\gamma \geq ac_a/((p-1)q)$. 
Hence, the true underlying graph can not be too sparse and the minimal sparsity level depends on the dimension $p$ and the nominal FDR level $q$.
Such issues exist generally for knockoff based approaches. For example, for a linear model with dimension $p$ and sparsity level $\gamma$ of the regression coefficient vector $\beta$, consider the favorable case where there exists a threshold $T$ such that $|\widehat{S}(T)| = |S| = \gamma p$ and $| \{ i \in [p]: W_i \leq -T\} | = 0$. 
Then, a feasible point for optimization problem \eqref{OriKnockThreEquation} in the original knockoff method (Algorithm \ref{oriKnock}) can only exist if $\gamma \geq 1/(pq)$.
One can see that the two lower bounds for the sparsity level $\gamma$ in GGMs and linear models are quite similar, especially when using $(a,c_a)=(0.01,102)$ (so $ac_a=1.02 \approx 1$) for the GGM knockoff filter. 

As we mentioned before, in addition to the fixed-X knockoffs used in this paper, one can also use the model-X knockoffs following the method in \cite{huang2020relaxing}. One extra benefit of using model-X knockoffs is that the sufficient property is no longer needed for constructing feature statistics, so there is more flexibility in the construction of feature statistics. In other words, we have even more options for the hyperparameters in our proposed method by incorporating model-X knockoffs. 

For the sample-splitting-recycling approach, one can freely use the information from the first part of sample, such as the estimated sign in \cite{barber2016knockoff}. This might improve the power if the used information is correct and might decrease power if the used information is incorrect.

We close by pointing out some possible directions for future work.

In addition to Gaussian settings, \cite{huang2020relaxing} also proposed a method to construct valid model-X knockoffs for discrete graphical models. Therefore, by combining their model-X knockoffs and the idea of our paper, it seems possible to extend the FDR controlled structure learning method to discrete graphical model settings. It would be interesting to investigate this.

Theoretical analysis of the power of the knockoff methods is an interesting problem. There is some work in this direction regarding variable selection \citep[e.g.,][]{weinstein2017power, liu2019power, fan2020rank, weinstein2020power, wang2020power}. It would be interesting to conduct such power analysis for our proposed method in the structure learning setting.

Similar to \cite{barber2016knockoff}, GKF-Re+ can be easily extended to high-dimensional settings by implementing a screening step in the first part of the sample, that is, the same part we use to choose the hyperparameters. The screened graph should satisfy that the number of neighbors of each node is less than a quarter of the sample size, so that knockoffs can be constructed in a node-wise manner.
If the true graph is a subgraph of the screened graph, then Lemma \ref{Lemma: SplitRecyclingColumnSignFlip} still holds, as well as the FDR control guarantee. 
One potential drawback of this approach is that there are still $p$ constraints in the corresponding optimization problem, and a feasible point may not exist when $p$ is large.
One possible idea to reduce the number of the constraints in the optimization problem is to adapt the constraints based on the screened graph. For example, if the screened graph is a star shaped graph in which one central node is connected to the remaining nodes, then one constraint obtained from treating this central node as response would be enough.

\cite{barber2016knockoff} showed that the knockoff procedure (Algorithm \ref{oriKnock}) can control the directional FDR for linear models, when the signs of $\beta$ are estimated as $\sign((\bX-\widetilde{\bX})^T \by)$. In the Gaussian graphical model setting, things are more tricky as one needs to deal with the case that $\widehat{\sign}(\beta^{(i)}_{j}) \neq \widehat{\sign}(\beta^{(j)}_{i})$. Moreover, Lemma \ref{Lemma: Bound of c_a}, a crucial element in our FDR control proof, does not apply in this setting. Hence, new results are needed to prove directional FDR control for Gaussian graphical models.

\cite{katsevich2020simultaneous} proposed a way to provide high-probability bounds on the false discovery proportion for many FDR control procedures such as BH and the knockoff based method. It would be interesting to investigate if such bounds can be applied to our proposed GGM knockoff filter for a fixed hyperparameter setting. 

Finally, the choice of hyperparameters is a fundamental problem for all knockoff based methods. 
One can even ask a sightly more general question: given many FDR control procedures (including knockoff based and p-values based approaches), how should one choose the best FDR control procedure for a given data set to ensure good power while maintaining FDR control? Sample-splitting-recycling is one possible answer, but there are two undesired drawbacks of this approach: randomness of splits and the loss of power. Better solutions to this problem are desired.

\section*{Acknowledgments} \label{sec:ack}
We thank Yuansi Chen and Leonard Henckel for comments on earlier versions of this paper and helpful discussions. We especially thank Yuansi for suggesting the idea related to Lemma \ref{Lemma: Bound of c_a}.
We thank the AE and the reviewers for their valuable comments.


\nocite{*}         
\bibliographystyle{chicago}
\bibliography{Sections/Bibliography}

\addtocontents{toc}{\vspace{.5\baselineskip}}
\addtocontents{toc}{\protect\setcounter{tocdepth}{1}}
\appendix
\newpage
\begin{center}
{\large\bf SUPPLEMENTARY MATERIAL}
\end{center}

The supplementary material consists of the following five appendices.
	\begin{description}
		\item[A] An example illustrating the difference between node-wise and graph-wise FDR control
		\item[B] Some concrete examples of $W$
		\item[C] Some intuition behind equation \eqref{OriKnockThreEquation}
		\item[D] Algorithm to compute the threshold vector $\widehat{T}$
		\item[E] Technical details
		\item[F] Details of the simulations in Section \ref{sec: sample-splitting-recycling}
		\item[G] Additional simulation results
	\end{description}

\section{An example illustrating the difference between node-wise and graph-wise FDR control} \label{appendix: NodeGraphWiseFDRexample}
Let $G=(V,E)$ be an empty graph with node set $V = [p]$ and edge set $E=\emptyset$. For a given nominal FDR level $q \in [0,1]$, consider the following procedure: if $q \leq 2/p$, return $\widehat{E} = \emptyset$; if $q > 2/p$, return $\widehat{E}$ by randomly choosing a set from $\{(1,2), (2,1)\}, \{(2,3), (3,2)\}, \dots, \{(p-1,p), (p,p-1)\}, \{(p,1),(1,p)\}$. Then, in the first case, the node-wise FDR of each node and the graph-wise FDR are both $0\leq q$. In the second case, all of the node-wise FDRs are equal to $2/p < q$, whereas the graph-wise FDR equals $1\geq q$. Therefore, this procedure controls the node-wise FDR for all $q\in[0,1]$ but loses the graph-wise FDR for $q > 2/p$. This example shows that node-wise and graph-wise FDR control are two distinct tasks, and the method guarantees to control the node-wise FDR may not be able to control the graph-wise FDR.

\section{Some concrete examples of $W$} \label{appendix:ExamplesOfW}
We now present some concrete examples of $W$. Consider the Lasso estimator for model \eqref{eq: linear model extended}:
\begin{equation*}
\hat{\beta}(\lambda) = \underset{b \in \bbR^{2p}}{\operatorname{\argmin}} \ \left( \frac{1}{2} \left \lVert \by - [\bX \ \widetilde{\bX}] b \right \rVert^2_2 
	+ \lambda \lVert b \rVert_1   \right) , \quad \ \lambda \geq 0.
\end{equation*}
As an intermediate step, we first construct $Z_i$ and $\widetilde{Z}_i$ to measure the importance of $X_i$ and $\widetilde{X}_i$ to the response in model \eqref{eq: linear model extended}. For a fixed $\lambda \geq 0$, it is expected that most of the non-null variables have larger coefficients (in absolute value) than their knockoff counterparts. Hence, we can use
\begin{equation*}
Z_i = |\hat{\beta}_i(\lambda)|
\quad \text{and} \quad
\widetilde{Z}_i = |\hat{\beta}_{i+p}(\lambda) |.
\end{equation*}
Alternatively, one could use 
\begin{equation*}
Z_i = \sup \{ \lambda \geq 0: \hat{\beta}_i(\lambda) \neq 0 \}
\quad \text{and} \quad
\widetilde{Z}_i = \sup \{ \lambda \geq 0: \hat{\beta}_{i+p}(\lambda) \neq 0 \}, 
\end{equation*}
which is based on the idea that non-null variables tend to enter the Lasso path before their knockoff counterparts. 
In either case, $W_i$ can be obtained by
\begin{equation*}
W_i = ( Z_i \vee \widetilde{Z}_i ) \cdot \sign(Z_i - \widetilde{Z}_i)
\quad \text{or} \quad
W_i = Z_i - \widetilde{Z}_i.
\end{equation*}
By the above construction, a positive $W_i$ indicates that $X_i$ is chosen over $\widetilde{X}_i$, while a negative $W_i$ indicates the opposite. Hence, since the $\widetilde{X}_i$'s are null variables in model \eqref{eq: linear model extended}, the knockoff method selects variables $X_i$ with a large and positive $W_i$, i.e., $\widehat{S} = \{ i \in[p]: W_i \geq T \}$ for some positive threshold $T$. For more examples on the construction of $W$, we refer to \cite{barber2015controlling}.

\section{Some intuition behind equation \eqref{OriKnockThreEquation}} \label{appendix:intuitionLinearKnockoff}
We now present some intuition behind equation \eqref{OriKnockThreEquation}. Since $W$ possesses the sign-flip property on $S^c$, we have for any fixed threshold $T>0$:
\begin{equation*}
| \{j \in S^c: W_j\ge T\} |
\approx
| \{j \in S^c: W_j\le -T\} |.
\end{equation*}
Hence, the false discovery proportion satisfies
\begin{equation*}
\begin{aligned}
\FDP & = \frac{|\widehat S \cap S^c|}{|\widehat S|\vee 1} = \frac{ | \{j \in S^c: W_j\ge T\} | } {|\widehat S| \vee 1} \\
& \leq
\frac{ |\{j \in S^c: W_j\ge T\}} | {1 + | \{j \in S^c: W_j\le - T\} | } \cdot
\frac{1 + |\{j \in [p]: W_j\le - T\} | } {|\widehat S| \vee 1} \\
& \approx 
\frac{ | \{j \in S^c: W_j\le -T\} | } {1 + | \{j \in S^c: W_j\le - T\} | } \cdot
\frac{1 + | \{j \in [p]: W_j\le - T\} | } {|\widehat S| \vee 1} \\
& <
\frac{1 +| \{j \in [p]: W_j\le - T\} | } {|\widehat S| \vee 1} = \frac{ 1 + | \{ j \in [p]: W_j \leq -T \} | } { | \{  j \in [p]: W_j \geq T \} | \vee 1 }.
\end{aligned}
\end{equation*}
The last line is exactly what we used in formula \eqref{OriKnockThreEquation}.

\section{Algorithm to compute the threshold vector $\widehat{T}$} \label{appendix: optimization problem}
At first glance, \eqref{ANDopt} and \eqref{ORopt} (as well as \eqref{ANDopt} with replacement \eqref{mFDRoptAND} and \eqref{ORopt} with replacement \eqref{mFDRoptOR}) are both combinatorial optimization problems, and it is seemingly infeasible to solve them through a brute-force search when $p$ is large. Due to the structure of the constraints, however, the search spaces can be significantly restricted and there is in fact a simple algorithm that can efficiently find the optimal solutions. 

In following, we focus on \eqref{ANDopt}, but similar arguments hold for other cases.
First, since the maximum number of edges of an undirected graph is $|E_{\max}| = p(p-1)/2$, a necessary condition for the feasibility of $T=( T_1, \dots, T_p )$ in \eqref{ANDopt} is that
\begin{equation*}
|\widehat{V}^{-}_{i}(T_i)| 
\leq |E_{\max}| \cdot 2q/(c_a p)-a
= q(p-1)/c_a-a,
\quad \text{for all } i \in [p].
\end{equation*}
Let $m_{\max}$ be the largest integer that is smaller than or equal to $q(p-1)/c_a-a$. If $m_{\max}<0$, then there is no feasible point for this optimization problem. Now assume that $m_{\max} \geq 0$. If there is an element $T_i$ of $T$ such that $|\widehat{V}^{-}_{i}(T_i)| >  m_{\max}$, we can conclude that any $T^*=( T^*_1, \dots, T^*_p )$ with $T_i^* \leq T_i$ is infeasible, since $ |\widehat{V}^{-}_i(T_i^*)| \geq |\widehat{V}^{-}_{i}(T_i)| >  m_{\max}$, and stop searching in this direction. 

Second, note that the left hand side of the constraint
\begin{equation*}
\frac{ a + |\widehat{V}^{-}_{i}(T_i)|}{|\widehat{E}_{\AND}(T)| \vee 1} \leq \frac{2q}{c_ap} 
\end{equation*}
is decreasing in $|\widehat{E}_{\AND}(T)|$, and for different $i\in[p]$, the only difference lies in $|\widehat{V}^{-}_{i}(T_i)|$. Hence, if $T=( T_1, \dots, T_p )$ is feasible, that is,
\begin{equation*}
\frac{ a + |\widehat{V}^{-}(T)_{\max}|}{|\widehat{E}_{\AND}(T)| \vee 1} \leq \frac{2q}{c_a p},
\end{equation*}
where $|\widehat{V}^{-}(T)_{\max}| =  \max \left\{ |\widehat{V}^{-}_{i}(T_i)|, i\in [p] \right\}$, we can immediately find a new feasible vector $T^* = (T_1^*, \dots, T_p^*)$ with $T_i^* = \min \Big\{ t_i \in \left\{  |W^{(i)}_j|, j \in [p] \right\} \cup \{+\infty\}  \backslash \{0\}: |\widehat{V}^{-}_{i}(t_i)| \leq |\widehat{V}^{-}(T)_{\max}| \Big\} \leq T_i$, which is a choice that at least as good as $T$, since $|\widehat{E}_{\AND}(T^*)| \geq |\widehat{E}_{\AND}(T)| $. Therefore, it is sufficient to consider only threshold vectors of the form of $T^* $.

Combining these two observations, \eqref{ANDopt} can be solved by checking the feasibility of $T=( T_1, \dots, T_p )$ with $T_i = \min \left\{ t_i \in \left\{ |W^{(i)}_j|, j \in [p] \right\} \cup \{+\infty\}  \backslash \{0\}: |\widehat{V}^{-}_{i}(t_i)| \leq m \right\}$, for $m=m_{\max},m_{\max}-1,\dots,0$, which leads to Algorithm \ref{algo2}. 

\section{Technical details} \label{appendix: Proof}
\subsection{Proof of Lemma \ref{SignFlipRandomDesign}} \label{appendix: lemmaRandomDesign}
\begin{restatable}{lemma}{RandomSignFlip}(\textbf{Sign-flip property on $S^c$ with random design matrix}) \label{SignFlipRandomDesign}
	~\\
	Let $\bX \in \bbR^{n \times p}$ be a random design matrix in a linear model, $S^c$ be the index set of null variables and $W$ be a feature statistic vector satisfying both the sufficiency and the antisymmetry properties. Then $W$ possesses the sign-flip property on $S^c$.
\end{restatable}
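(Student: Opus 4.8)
The plan is to derive the random-design sign-flip property from its fixed-design counterpart (Lemma~\ref{SignFlipFixDesign}) by conditioning on the realized design matrix and then integrating out. Let $\epsilon = (\epsilon_1, \dots, \epsilon_p)$ be a sign sequence with $\epsilon_i = +1$ for $i \notin S^c$ and $\epsilon_i$ i.i.d.\ Rademacher for $i \in S^c$, constructed on an enlarged probability space so that $\epsilon$ is independent of the pair $(\bX, W)$. Writing $\epsilon \odot W = (\epsilon_1 W_1, \dots, \epsilon_p W_p)$, the goal is to show $W \stackrel{d}{=} \epsilon \odot W$.

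First I would condition on a realization $\bX = x$ of the random design. The index set $S^c$ of null variables is determined by the (deterministic) coefficient vector and does not depend on the design, so it is left unchanged by the conditioning. Given $\bX = x$, the model becomes an ordinary fixed-design linear model: in the Gaussian graphical model application the residual is independent of the design and satisfies $\bz \mid (\bX = x) \sim \mN_n(0, \sigma^2 I)$, which is exactly the noise distribution required by Lemma~\ref{SignFlipFixDesign}. Since $W$ continues to satisfy the sufficiency and antisymmetry properties for every fixed design, Lemma~\ref{SignFlipFixDesign} applies conditionally and yields $\bbE[f(W) \mid \bX = x] = \bbE[f(\epsilon \odot W) \mid \bX = x]$ for every bounded measurable $f \colon \bbR^p \to \bbR$, where I use that $\epsilon$ remains independent of $W$ after conditioning on $\bX$.

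The final step is to average over $\bX$ by the tower property: $\bbE[f(W)] = \bbE_{\bX}\big[ \bbE[f(W) \mid \bX] \big] = \bbE_{\bX}\big[ \bbE[f(\epsilon \odot W) \mid \bX] \big] = \bbE[f(\epsilon \odot W)]$. As $f$ ranges over all bounded measurable functions, this establishes $W \stackrel{d}{=} \epsilon \odot W$, i.e.\ the sign-flip property on $S^c$.

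I expect the only genuine subtlety — rather than a true obstacle — to lie in justifying the conditional use of Lemma~\ref{SignFlipFixDesign}: one must check that, conditionally on the random design, the noise still carries the Gaussian law demanded by the fixed-design lemma (guaranteed here by the joint Gaussianity of $X$, which makes $\bz^{(i)}$ independent of $\bX^{(-i)}$ with conditional law $\mN_n(0, \sigma_i^2 I)$), and that the externally generated sign sequence $\epsilon$ stays independent of $W$ both before and after conditioning. Once these measure-theoretic points are in place, the argument is a routine conditioning-and-averaging step requiring no further computation.
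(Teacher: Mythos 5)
Your proof is correct and follows essentially the same route as the paper, which itself describes its argument as ``first conditioning on the random design matrix then marginalizing over it'': the paper simply carries out the conditional step via the two pairwise-exchangeability lemmas (for the Gram matrix and the feature--response inner products) and applies the sufficiency property of $W$ at the end, whereas you invoke the fixed-design sign-flip lemma (Lemma~\ref{SignFlipFixDesign}) directly at the level of $W$ before averaging. Your attention to the two measure-theoretic points---that $\bz^{(i)}$ remains $\mN_n(0,\sigma_i^2 I)$ and independent of the design conditionally (by joint Gaussianity), and that $\epsilon$ stays independent of $W$ after conditioning---covers exactly what is needed to make the conditional application of the fixed-design lemma legitimate.
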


This lemma trivially follows from the proof of Lemma 1 in \cite{barber2015controlling} by first conditioning on the random design matrix then marginalizing over it, here we write it in detail for completeness. We first introduce two useful lemmas from \cite{barber2015controlling}.

\begin{lemma}(\textbf{Pairwise exchangeability for the features}, \cite{barber2015controlling}) \label{FirstLemmaFromBC}
	~\\
	Let $\bX = [ \bX^{(1)}, \dots, \bX^{(p)}] \in \bbR^{n \times p}$ be a fixed design matrix in a linear model and $\widetilde{\bX} \in \bbR^{n \times p}$ be its knockoff counterpart. Then for any subset $H \subseteq [p]$, the Gram matrix of $ [ \bX \ \widetilde{\bX} ] $ is unchanged when we swap $\bX^{(i)}$ and $\widetilde{\bX}^{(i)}$ for each $ i \in H $. That is,
	\begin{equation*}
	[ \bX \ \widetilde{\bX} ]^T_{\swap(H)} [ \bX \ \widetilde{\bX} ]_{\swap(H)} = [ \bX \ \widetilde{\bX} ]^T [ \bX \ \widetilde{\bX} ].
	\end{equation*}
	Here $[ \bX \ \widetilde{\bX} ]^T_{\swap(H)}$ denotes the matrix obtained by first swapping then transposing matrix $[ \bX \ \widetilde{\bX} ]$.
\end{lemma}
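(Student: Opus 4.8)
The plan is to prove this directly from the defining correlation structure \eqref{knockConstruct} of the knockoffs, by treating the swap as a permutation of columns and performing a short linear-algebra computation. Writing $\Sigma = \bX^T\bX$ and $D = \diag\{s\}$, equation \eqref{knockConstruct} gives the full Gram matrix
\begin{equation*}
G := [\bX \ \widetilde{\bX}]^T [\bX \ \widetilde{\bX}] = \begin{pmatrix} \Sigma & \Sigma - D \\ \Sigma - D & \Sigma \end{pmatrix}.
\end{equation*}
The operation $\swap(H)$ amounts to right-multiplying $[\bX \ \widetilde{\bX}]$ by a permutation matrix $P_H$ that, for each $i \in H$, exchanges column $i$ with column $p+i$ and fixes all other columns. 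Hence $[\bX \ \widetilde{\bX}]_{\swap(H)} = [\bX \ \widetilde{\bX}] P_H$, and the claim reduces to the single identity $P_H^T G P_H = G$.

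To verify this cleanly, I would reorder the $2p$ columns so that, for each index $i \in [p]$, the two columns $\bX^{(i)}$ and $\widetilde{\bX}^{(i)}$ sit side by side. After this relabelling $G$ becomes a block matrix with $2\times 2$ blocks $B_{ij}$. For $i \neq j$, since the off-diagonal entries of $D$ vanish, all four inner products among $\{\bX^{(i)},\widetilde{\bX}^{(i)}\}$ and $\{\bX^{(j)},\widetilde{\bX}^{(j)}\}$ equal $\Sigma_{ij}$, so $B_{ij} = \Sigma_{ij} J$ with $J = \left(\begin{smallmatrix} 1 & 1 \\ 1 & 1 \end{smallmatrix}\right)$; the diagonal blocks are $B_{ii} = \left(\begin{smallmatrix} \Sigma_{ii} & \Sigma_{ii}-s_i \\ \Sigma_{ii}-s_i & \Sigma_{ii} \end{smallmatrix}\right)$. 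In the same relabelling the swap becomes block-diagonal, $Q = \bigoplus_{i=1}^p Q_i$, with $Q_i = \left(\begin{smallmatrix} 0 & 1 \\ 1 & 0 \end{smallmatrix}\right)$ for $i \in H$ and $Q_i = I_2$ otherwise.

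The computation then factorizes block-by-block as $(Q^T G Q)_{ij} = Q_i^T B_{ij} Q_j$. For $i \neq j$ we have $Q_i^T J Q_j = J$ for every choice of $Q_i,Q_j \in \{I_2, \left(\begin{smallmatrix} 0 & 1 \\ 1 & 0 \end{smallmatrix}\right)\}$, since permuting rows and columns of the all-ones matrix leaves it unchanged; thus $B_{ij}$ is preserved. For $i = j$, the matrix $B_{ii}$ is symmetric with equal diagonal entries, so it is invariant under conjugation by $\left(\begin{smallmatrix} 0 & 1 \\ 1 & 0 \end{smallmatrix}\right)$ (which merely interchanges its two equal diagonal entries) and trivially under $I_2$; thus $B_{ii}$ is preserved. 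Hence $Q^T G Q = G$, which after undoing the relabelling is exactly $P_H^T G P_H = G$, as desired.

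I expect no serious obstacle here: the only things to get right are the bookkeeping of the swap as a permutation matrix and the observation that the vanishing of the off-diagonal of $\diag\{s\}$ is precisely what forces each off-diagonal block $B_{ij}$ to have the constant all-ones pattern $\Sigma_{ij} J$. That structural fact, inherited directly from \eqref{knockConstruct}, is the heart of the argument; everything else is elementary. A reader who prefers to avoid the reordering can instead check the four index-pair types ($\bX$--$\bX$, $\bX$--$\widetilde{\bX}$, $\widetilde{\bX}$--$\bX$, $\widetilde{\bX}$--$\widetilde{\bX}$, each split by whether the indices lie in $H$) directly against the block form of $G$, arriving at the same conclusion.
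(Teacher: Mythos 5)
Your proof is correct. The paper itself does not prove this lemma --- it imports it directly from \cite{barber2015controlling}, where it is likewise verified by inspecting the Gram matrix structure in \eqref{knockConstruct} --- and your block-by-block computation (off-diagonal blocks $\Sigma_{ij}J$ invariant under any row/column swaps, diagonal blocks $\bigl(\begin{smallmatrix} \Sigma_{ii} & \Sigma_{ii}-s_i \\ \Sigma_{ii}-s_i & \Sigma_{ii} \end{smallmatrix}\bigr)$ invariant under conjugation by the $2\times 2$ swap) is exactly the standard direct verification, just spelled out with more careful bookkeeping.
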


\begin{lemma}(\textbf{Pairwise exchangeability for the response}, \cite{barber2015controlling}) \label{SecondLemmaFromBC}
	~\\ 
	Let $ \by $ be a response vector, $\bX = [ \bX^{(1)}, \cdots, \bX^{(p)}] \in \bbR^{n \times p}$ be a fixed design matrix and $S^c$ be the index set of the null variables in a linear model, and let $\widetilde{\bX} \in \bbR^{n \times p}$ be the knockoff counterpart of $\bX$. Then, for any subset $ H \subseteq S^c$, the distribution of $[ \bX \ \widetilde{\bX} ]^T \by$ is unchanged when we swap $\bX^{(i)}$ and $\widetilde{\bX}^{(i)}$ for each $ i \in H $. That is,
	\begin{equation*}
	[ \bX \ \widetilde{\bX} ]^T_{\swap(H)} \by \stackrel{d}{=} [ \bX \ \widetilde{\bX} ]^T \by. 
	\end{equation*}
\end{lemma}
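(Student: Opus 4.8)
The plan is to exploit that both $[\bX\ \widetilde{\bX}]^T\by$ and its swapped counterpart $[\bX\ \widetilde{\bX}]^T_{\swap(H)}\by$ are affine images of the Gaussian vector $\by=\bX\beta+\bz$ under fixed (deterministic) matrices, and are therefore themselves multivariate Gaussian. Since a Gaussian law is determined by its mean vector and covariance matrix, it suffices to verify that these two statistics share the same mean and the same covariance; equality in distribution then follows at once.

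First I would dispatch the covariance. Writing $M=[\bX\ \widetilde{\bX}]$ and recalling $\bz\sim\mN_n(0,\sigma^2 I)$, we have $\operatorname{Cov}(M^T\by)=\sigma^2 M^T M$ and $\operatorname{Cov}(M_{\swap(H)}^T\by)=\sigma^2 M_{\swap(H)}^T M_{\swap(H)}$. These two Gram matrices coincide by the pairwise exchangeability for the features (Lemma \ref{FirstLemmaFromBC}), which holds for \emph{any} $H\subseteq[p]$. Hence the covariances agree, and this step uses nothing about $H$ consisting of null indices.

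The crux is the mean. Here $\bbE(M^T\by)=M^T\bX\beta$ and $\bbE(M_{\swap(H)}^T\by)=M_{\swap(H)}^T\bX\beta$, and swapping column $i$ with column $i+p$ for each $i\in H$ interchanges exactly the $i$-th and $(i+p)$-th entries of the mean vector. Thus the two means agree if and only if $(\bX^{(i)})^T\bX\beta=(\widetilde{\bX}^{(i)})^T\bX\beta$, i.e. $(\bX^{(i)}-\widetilde{\bX}^{(i)})^T\bX\beta=0$, for every $i\in H$. Using the prescribed Gram structure \eqref{knockConstruct}, namely $\widetilde{\bX}^T\bX=\bX^T\bX-\diag\{s\}$, one computes $(\bX^{(i)}-\widetilde{\bX}^{(i)})^T\bX^{(j)}=s_i\,\bbone_{\{i=j\}}$, so that $(\bX^{(i)}-\widetilde{\bX}^{(i)})^T\bX\beta=s_i\beta_i$, the off-diagonal columns dropping out. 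The key point, and the reason the hypothesis $H\subseteq S^c$ is indispensable, is that $\beta_i=0$ for $i\in S^c$, whence $s_i\beta_i=0$ and the cross term vanishes for each $i\in H$. Equating means and covariances then gives $M_{\swap(H)}^T\by\stackrel{d}{=}M^T\by$.

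I expect the only genuine subtlety to be this mean computation: one must notice that the difference column $\bX^{(i)}-\widetilde{\bX}^{(i)}$ is, by the knockoff construction, orthogonal to every original column $\bX^{(j)}$ with $j\neq i$ and meets $\bX^{(i)}$ only through the diagonal slack $s_i$, so that nullity of $\beta$ on the swapped coordinates is precisely what annihilates the remaining term. Everything else is routine once the Gaussianity of $\by$ and the feature-exchangeability lemma are in hand.
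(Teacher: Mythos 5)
Your proof is correct and is essentially the canonical argument for this lemma (which the paper cites from \cite{barber2015controlling} rather than reproving): since $\by\sim\mN_n(\bX\beta,\sigma^2 I)$ and the swap matrices are deterministic, both statistics are Gaussian, the covariances agree by the Gram-matrix invariance of Lemma \ref{FirstLemmaFromBC}, and the means agree because $(\bX^{(i)}-\widetilde{\bX}^{(i)})^T\bX\beta=s_i\beta_i=0$ for $i\in H\subseteq S^c$. Your identification of the mean computation, and of where $H\subseteq S^c$ enters, matches the original proof exactly.
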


Now we prove Lemma \ref{SignFlipRandomDesign}. This proof follows the same idea as the proof of Lemma 1 in \cite{barber2015controlling}.
\begin{proof}[Proof of Lemma \ref{SignFlipRandomDesign}]
	Let $\epsilon = (\epsilon_1, \cdots, \epsilon_p)$ be a sign sequence independent of $W$ with $\epsilon_i = +1$ for $i \in S$ and $\epsilon_i$ i.i.d.\ from a Rademacher distribution for $i \in S^c$. Let $K = \{ i \in [p]: \epsilon_i=-1 \}$ be a set depending on $\epsilon$, note that we have $K \subseteq S^c$. Then, by the antisymmetry property of $W$ and the definition of $K$, we have
	\begin{equation*}
	W_{\swap(K)} = (W_1 \cdot \epsilon_1, \cdots, W_p\cdot \epsilon_p).
	\end{equation*}
	Thus, to prove Lemma \ref{SignFlipRandomDesign}, it suffices to show that
	\begin{equation*}
	W_{\swap(K)}  \stackrel{d}{=}  W.
	\end{equation*}
	
	Conditional on $\bX$, we have
	\begin{equation*}
	( [ \bX \ \widetilde{\bX} ]^T_{\swap(K)}  [ \bX \ \widetilde{\bX} ]_{\swap(K)}, \ [\bX \ \widetilde{\bX} ]^T_{\swap(K)}  \by ) \ | \ \bX
	\stackrel{d}{=} ( [ \bX \ \widetilde{\bX} ]^T [ \bX \ \widetilde{\bX} ], \ [\bX \ \widetilde{\bX} ]^T \by ) \ | \ \bX.
	\end{equation*}
	by Lemma \ref{FirstLemmaFromBC} and Lemma \ref{SecondLemmaFromBC}. Thus, for the joint distribution, we have
	\begin{equation*}
	( [ \bX \ \widetilde{\bX} ]^T_{\swap(K)}  [ \bX \ \widetilde{\bX} ]_{\swap(K)}, \ [\bX \ \widetilde{\bX} ]^T_{\swap(K)}  \by, \bX )
	\stackrel{d}{=} ( [ \bX \ \widetilde{\bX} ]^T [ \bX \ \widetilde{\bX} ], \ [\bX \ \widetilde{\bX} ]^T \by, \bX).
	\end{equation*}
	Then, by marginalizing $\bX$ out, we have
	\begin{equation*}
	( [ \bX \ \widetilde{\bX} ]^T_{\swap(K)}  [ \bX \ \widetilde{\bX} ]_{\swap(K)}, \ [\bX \ \widetilde{\bX} ]^T_{\swap(K)}  \by )
	\stackrel{d}{=} ( [ \bX \ \widetilde{\bX} ]^T [ \bX \ \widetilde{\bX} ], \ [\bX \ \widetilde{\bX} ]^T \by)
	\end{equation*}
    Finally, by the sufficiency property of $W$, we have
    \begin{align*}
    	W_{\swap(K)} &= f ( [ \bX \ \widetilde{\bX} ]^T_{\swap(K)}  [ \bX \ \widetilde{\bX} ]_{\swap(K)}, \ [\bX \ \widetilde{\bX} ]^T_{\swap(K)}  \by ) \\
    	&\stackrel{d}{=} f ( [ \bX \ \widetilde{\bX} ]^T [ \bX \ \widetilde{\bX} ], \ [\bX \ \widetilde{\bX} ]^T \by ) = W.
    \end{align*}
\end{proof}

\subsection{Derivation of the bound $c_a$} \label{appendix: Bound of c_a}

We first introduce a lemma which can be used to obtain the bound $c_a$ for any $a>0$. We do not claim any originality for the proof of this lemma. The proof is exactly as in \cite{katsevich2017multilayer} with replacing $1$ by $a$ in the denominator of the target term.
\begin{lemma} \label{Lemma: Bound of c_a}
	Let $\widehat{V}^{+}_{N_i}$ and $\widehat{V}^{-}_{N_i}$ be defined as in \eqref{NotationV} with $W^{(i)}$ possessing the sign-flip property on $NE_i^c$. 
	Let $S_k = \sum_{i=1}^k X_i$, where $X_i \stackrel{i.i.d.\ }{\sim} Ber(1/2)$ and $S_0=0$. For a fixed $k_0 \geq 1$ and $a > 0$, let 
	\begin{equation*}
		R_{k_0}(x,a) = \max_{k \leq k_0} \frac{\sum_{i \leq k} x_i}{a + k - \sum_{i \leq k} x_i}
		\quad \text{and} \quad
		P_{k_0}(x) = \sum_{i \leq k_0} x_i,
	\end{equation*}
	where $x=(x_1, \dots, x_{k_0})$ is the realization of the first $k_0$ steps of the random walk, which has $2^{k_0}$ possibilities in total. Let 
	\begin{equation*}
	\theta_t = \frac{1}{1+t} \Theta \left( \frac{t}{1+t} \right)
	\end{equation*}
	where $\Theta \left( \frac{t}{1+t} \right)$ is the unique positive root of the nonlinear equation 
	\begin{equation*}
	\exp (\theta/(1+t)) + \exp (-\theta t/(1+t)) = 2.
	\end{equation*}
	Then, for any $\widehat{T}_i$,
	\begin{align*}
	& \bbE \left[ \frac{ |\widehat{V}^{+}_{N_i}(\widehat{T}_i)| }{ a + |\widehat{V}^{-}_{N_i}(\widehat{T}_i)| } \right] \leq
	\bbE \Bigg[ \underset{ T_i > 0 } {\sup} \frac{ |\widehat{V}^{+}_{N_i}(T_i)| }{ a + |\widehat{V}^{-}_{N_i}(T_i)|} \Bigg]
	\leq 2^{-k_0} \sum_{x_1, \dots, x_{k_0}}
	\Bigg\{
	\max(R_{k_0}(x,a),1) + \\
	&\int_{0}^{2 - \exp \big(\theta_{\max(R_{k_0}(x,a),1)} \big) } u^{a+k_0-P_{k_0}(x)-1} (2-u)^{P_{k_0}(x)} \frac{1}{\log^2 (2-u)} \left[ (2-u) \log (2-u) + u \log u \right] du
	\Bigg\}.
	\end{align*}
\end{lemma}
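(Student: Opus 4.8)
The plan is to treat the two inequalities separately; the second is the real content and is, as the statement notes, the bound of \cite{katsevich2017multilayer} with the denominator constant $1$ replaced by $a$. The first inequality is immediate: pointwise, $\widehat{T}_i$ is one admissible positive threshold, so the ratio evaluated at $\widehat{T}_i$ is at most the supremum over all $T_i>0$, and the claim follows by monotonicity of the expectation. For the second inequality I would first convert the supremum into the maximum of a symmetric-random-walk ratio. By \eqref{NotationV}, $\widehat{V}^{+}_{N_i}(T_i)$ and $\widehat{V}^{-}_{N_i}(T_i)$ count the null indices $j\in NE_i^c$ with $W^{(i)}_j\ge T_i$ and $W^{(i)}_j\le -T_i$. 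Since $W^{(i)}$ has the sign-flip property on $NE_i^c$, conditionally on all magnitudes $|W^{(i)}_j|$ and on the signs of the non-null coordinates, the signs $\sign(W^{(i)}_j)$, $j\in NE_i^c$, are i.i.d.\ Rademacher. Ordering the null indices by decreasing $|W^{(i)}_j|$ and letting $X_\ell=\bbone\{\text{the }\ell\text{-th null statistic is positive}\}$, the $X_\ell$ are i.i.d.\ $Ber(1/2)$, and at the threshold for which exactly $k$ null magnitudes exceed $T_i$ we have $|\widehat{V}^{+}_{N_i}(T_i)|=S_k$ and $|\widehat{V}^{-}_{N_i}(T_i)|=k-S_k$, where $S_k=\sum_{\ell\le k}X_\ell$. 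The ratio is piecewise constant in $T_i$ with jumps only at these integer states, so
\[
\sup_{T_i>0}\frac{|\widehat{V}^{+}_{N_i}(T_i)|}{a+|\widehat{V}^{-}_{N_i}(T_i)|}
=\max_{0\le k\le|NE_i^c|}\frac{S_k}{a+k-S_k}
\le\max_{k\ge 0}\frac{S_k}{a+k-S_k},
\]
the last step extending the walk with fresh i.i.d.\ increments. It remains to bound the expectation of this last maximum, a statement purely about the random walk.

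I would then split at the horizon $k_0$. With $R_{k_0}(X,a)=\max_{k\le k_0}S_k/(a+k-S_k)$ and $M=\max_{k>k_0}S_k/(a+k-S_k)$, condition on the $2^{k_0}$ equally likely sign patterns $x=(x_1,\dots,x_{k_0})$; given $x$, both $R_{k_0}(x,a)$ and the state $P_{k_0}(x)=\sum_{\ell\le k_0}x_\ell$ are fixed, and the number of negatives among the first $k_0$ steps is $k_0-P_{k_0}(x)$. Using $\bbE[Y]=\int_0^\infty\bbP(Y>t)\,dt$ together with $\max(R,M)\le\max(\max(R,1),M)$, the conditional expectation is at most $\max(R_{k_0}(x,a),1)+\int_{\max(R_{k_0}(x,a),1)}^\infty\bbP(M>t)\,dt$; averaging over $x$ produces the leading term $2^{-k_0}\sum_x\max(R_{k_0}(x,a),1)$ of the claimed bound.

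The main obstacle is controlling the tail $\bbP(M>t)$ for the continuation $k>k_0$ and reducing the resulting integral to closed form. The event $S_k/(a+k-S_k)>t$ is the linear-boundary crossing $S_k>\frac{t}{1+t}(a+k)$, which only a positive step can effect. For the reset walk I would use the nonnegative martingale $L_j=\exp(\theta[(S_{k_0+j}-S_{k_0})-\frac{t}{1+t}j])$ with $\theta=\Theta(t/(1+t))$ chosen so that $\bbE[\exp(\theta(X_1-\tfrac{t}{1+t}))]=1$; a direct computation shows this tilt condition is exactly $\exp(\theta/(1+t))+\exp(-\theta t/(1+t))=2$, which is the defining equation of $\Theta(t/(1+t))$ and hence of $\theta_t=\frac{1}{1+t}\Theta(t/(1+t))$. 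Ville's maximal inequality applied to $L_j$ gives $\bbP(M>t\mid x)\le e^{-\theta b}$ with boundary offset $b=\frac{t}{1+t}(a+k_0)-P_{k_0}(x)$. The final step is the substitution $e^{\theta_t}=2-u$ (so $\theta_t=\log(2-u)$), which by the defining equation also gives $e^{-t\theta_t}=u$ and $t=-\log u/\log(2-u)$; as $t$ runs from $\max(R_{k_0}(x,a),1)$ to $\infty$, $u$ runs from $2-\exp(\theta_{\max(R_{k_0}(x,a),1)})$ down to $0$, matching the stated limits. Under this change of variables the tail bound becomes a power $u^{\,a+k_0-P_{k_0}(x)}(2-u)^{P_{k_0}(x)}$ in which $a$ enters only through $e^{-\theta a t/(1+t)}=u^a$, i.e.\ additively in the exponent — precisely why the argument transfers verbatim from $1$ to $a$ — while the Jacobian $|dt/du|$ supplies the factor $\frac{1}{\log^2(2-u)}[(2-u)\log(2-u)+u\log u]$, producing the integrand displayed in the lemma. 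The delicate points are the verification of the tilt identity and of Ville's inequality for the reset walk and the change-of-variables bookkeeping producing the $\log^2(2-u)$ denominator; none of these interact with $a$ beyond the additive shift just noted, so the computation is identical to \cite{katsevich2017multilayer}.
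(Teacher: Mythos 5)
Your proposal is correct and follows essentially the same route as the paper: the paper likewise uses the sign-flip property to reduce the supremum to $\sup_{0\le k\le m} S_k/(a+k-S_k)$, extends to $\sup_{k\ge 0}$, invokes the argument of Lemma 3 of \cite{katsevich2017multilayer} verbatim with $1$ replaced by $a$ (your horizon split at $k_0$, conditioning on the $2^{k_0}$ sign patterns, and exponential tilting with Ville's inequality is exactly that argument, which the paper cites rather than reproduces), and finally applies the substitution $u=2-e^{\theta_t}$. One harmless bookkeeping remark: carrying out that substitution exactly (tail bound $u^{a+k_0-P_{k_0}(x)}(2-u)^{P_{k_0}(x)}$ times your Jacobian $|dt/du|$) produces the factor $(2-u)^{P_{k_0}(x)-1}$ rather than the displayed $(2-u)^{P_{k_0}(x)}$, so your computation in fact yields a slightly stronger bound that implies the stated one, since $2-u\ge 1$ on the integration range.
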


\begin{proof}
	Let $m = |NE_i^c|$ be the total number of nulls. Because $W^{(i)}$ possesses the sign-flip property on $NE_i^c$, we have
	\begin{equation*}
	\underset{ T_i > 0 } {\sup} \frac{ |\widehat{V}^{+}_{N_i}(T_i)| }{ a+ |\widehat{V}^{-}_{N_i}(T_i)| }
	\stackrel{d}{=}
	\underset{ 0 \leq k \leq m } {\sup} \frac{ S_k }{ a+ k - S_k}.
	\end{equation*}
	Hence, 
	\begin{equation*}
	\bbE \left[  \frac{ |\widehat{V}^{+}_{N_i}(\widehat{T}_i)| }{ a + |\widehat{V}^{-}_{N_i}(\widehat{T}_i)| } \right]  \leq
	\bbE \Bigg[ \underset{ T_i > 0 } {\sup} \frac{ |\widehat{V}^{+}_{N_i}(T_i)| }{ a+|\widehat{V}^{-}_{N_i}(T_i)|} \Bigg]
	= \bbE \Bigg[ \underset{ 0 \leq k \leq m } {\sup} \frac{ S_k }{ a + k - S_k} \Bigg] 
	\leq \bbE \Bigg[ \underset{ k \geq 0 } {\sup} \frac{ S_k }{ a+k-S_k} \Bigg].
	\end{equation*}
	Using the same arguments as the proof of Lemma 3 in \cite{katsevich2017multilayer}, we obtain that 
	\begin{align*}
	&  \bbE \Bigg[ \underset{ k \geq 0 } {\sup} \frac{ S_k }{ a+k-S_k} \Bigg]
	\leq
	2^{-k_0} \sum_{x_1, \dots, x_{k_0}}
	\Bigg\{
	\max(R_{k_0}(x,a),1) + \\
	&\int_{\theta_{\max(R_{k_0}(x,a),1)}}^{\log 2} \exp \left( (a + k_0 - P_{k_0}(x) - 1) \log (2-e^{\theta_t}) + \theta_t P_{k_0}(x) \right) 
	\frac{ \theta_t e^{\theta_t} + (2-e^{\theta_t}) \log (2-e^{\theta_t})} {\theta_t^2} d\theta_t
	\Bigg\}.
	\end{align*}
	The final result then follows from taking the transformation $u=2-e^{\theta_t}$ in the integration. The reason for taking this transformation is that it can help with calculating the high precision numerical integration when $a$ is small.
\end{proof} 

By using Lemma \ref{Lemma: Bound of c_a}, one can obtain pairs $(a, c_a)$ that satisfy the following inequality:
\begin{align}\label{SupBoundTerm}
\bbE \left[ \frac{ |\widehat{V}^{+}_{N_i}(\widehat{T}_i)| } {a + |\widehat{V}^{-}_{N_i}(\widehat{T}_i)| } \right] \leq c_a.
\end{align}

\begin{proposition} \label{Proposition: get c_a}
	Let $\widehat{V}^{+}_{N_i}$ and $\widehat{V}^{-}_{N_i}$ be defined as in \eqref{NotationV} with $W^{(i)}$ possessing the sign-flip property on $NE_i^c$. Then for any $\widehat{T}_i$,
	\begin{equation*}
	\bbE \left[ \frac{ |\widehat{V}^{+}_{N_i}(\widehat{T}_i)| }{ a + |\widehat{V}^{-}_{N_i}(\widehat{T}_i)| } \right]  \leq
	\bbE \Bigg[ \underset{ T_i > 0 } {\sup} \frac{ |\widehat{V}^{+}_{N_i}(T_i)| }{ a+|\widehat{V}^{-}_{N_i}(T_i)|} \Bigg] \leq c_a,
	\end{equation*}
	where $c_a=1.93$ for $a=1$ and $c_a=102$ for $a=0.01$.
\end{proposition}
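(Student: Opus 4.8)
The plan is to obtain Proposition \ref{Proposition: get c_a} directly from Lemma \ref{Lemma: Bound of c_a}, treating the two inequalities separately. The first inequality is a purely pointwise-domination statement that requires no probabilistic input, while the second reduces to a numerical evaluation of the explicit bound furnished by the lemma at the two values $a=1$ and $a=0.01$. Note that the hypothesis of the proposition, namely that $W^{(i)}$ possesses the sign-flip property on $NE_i^c$, is exactly the hypothesis needed to apply the lemma, so no extra work is required to set up its application.

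For the first inequality I would argue pointwise, before taking expectations. Fix a realization of $W^{(i)}$ and the resulting data-dependent threshold $\widehat{T}_i$. By construction $\widehat{T}_i \in \{|W^{(i)}_j| : j \in [p]\} \cup \{+\infty\} \setminus \{0\}$, so it is either a finite positive number or $+\infty$. If $\widehat{T}_i$ is finite and positive, the ratio evaluated at $\widehat{T}_i$ is at most the supremum over all $T_i > 0$ by definition of the supremum. If $\widehat{T}_i = +\infty$, then $\widehat{V}^{+}_{N_i}(\widehat{T}_i) = \emptyset$, so the numerator vanishes and the ratio equals $0$, which is again bounded by the (nonnegative) supremum. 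Since the supremum is a maximum over the finitely many distinct values of $|W^{(i)}_j|$, it is finite, and the ratio at $\widehat{T}_i$ is dominated by it in every realization. Taking expectations yields the first inequality.

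For the second inequality I would invoke Lemma \ref{Lemma: Bound of c_a}, which supplies, for every fixed $k_0 \geq 1$, an explicit upper bound on $\bbE[\sup_{T_i>0} |\widehat{V}^{+}_{N_i}(T_i)| / (a + |\widehat{V}^{-}_{N_i}(T_i)|)]$ as a sum over the $2^{k_0}$ sign patterns $x$, each term combining $\max(R_{k_0}(x,a),1)$ with the stated integral. The remaining task is then simply to evaluate this bound numerically for a suitable $k_0$ and to verify that the value is at most $1.93$ when $a=1$ and at most $102$ when $a=0.01$; for $a=1$ this recovers the constant of \cite{katsevich2017multilayer}.

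The main obstacle is the numerical evaluation for the small value $a=0.01$. There the integrand carries the factor $u^{a+k_0-P_{k_0}(x)-1}$, which is nearly singular at the lower endpoint $u=0$ when $a$ is small, so naive quadrature loses precision; this is exactly why the lemma is stated after the substitution $u = 2 - e^{\theta_t}$, which regularizes the integral near the endpoint and enables high-precision computation. I would therefore carry out the quadrature in the $u$-variable with a high-precision routine, choosing $k_0$ large enough that the computed bound stabilizes, and report the resulting certified values. The only conceptual point to keep in mind is that, by the lemma, any valid computation at a fixed $k_0$ already certifies the corresponding constant, so the choice of $k_0$ governs only how tight the reported number is rather than the correctness of the bound.
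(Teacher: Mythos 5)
Your proposal is correct and follows essentially the same route as the paper: the first inequality is the pointwise sup-domination that the paper folds directly into the statement of Lemma \ref{Lemma: Bound of c_a}, and the second inequality is established, exactly as in the paper, by numerically evaluating the lemma's explicit bound at a fixed $k_0$ (the paper uses $k_0 \geq 6$ for $a=0.01$, and for $a=1$ simply cites Lemma 3 of \cite{katsevich2017multilayer}, whose constant your computation would recover). Your observations that the substitution $u = 2 - e^{\theta_t}$ is what makes the quadrature stable for small $a$, and that any fixed $k_0$ already certifies a valid constant, are precisely the points made in the paper's proof.
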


\begin{proof}
	The result for $a=1$ is given by the Lemma 3 in \cite{katsevich2017multilayer}. The result for $a=0.01$ can be derived as follows.
	
	By Lemma \ref{Lemma: Bound of c_a}, for any fixed $k_0 \geq 1$, we have
	\begin{align*}
	& \bbE \left[  \frac{ |\widehat{V}^{+}_{N_i}(\widehat{T}_i)| }{ 0.01 + |\widehat{V}^{-}_{N_i}(\widehat{T}_i)| } \right] 
	\leq \bbE \Bigg[ \underset{ T_i > 0 } {\sup} \frac{ |\widehat{V}^{+}_{N_i}(T_i)| }{ 0.01+|\widehat{V}^{-}_{N_i}(T_i)|} \Bigg]
	\leq 2^{-k_0} \sum_{x_1, \dots, x_{k_0}}
	\Bigg\{
	\max(R_{k_0}(x, 0.01),1) + \\
	&\int_{0}^{2 - \exp \big(\theta_{\max(R_{k_0}(x, 0.01),1)} \big) } u^{0.01 + k_0 - P_{k_0}(x) - 1} (2-u)^{P_{k_0}(x)} \frac{1}{\log (2-u)^2} \left[  (2-u) \log (2-u) + u \log u \right]  du
	\Bigg\} (\ast).
	\end{align*}
	Figure \ref{Plot: bound ca} shows the values of $(\ast)$ for $k_0=1,\dots,10$. The value is already below $102$ when $k_0 \geq 6$, which proves the proposition.
	\begin{figure}[t!]
		\centering
		\includegraphics[width=10cm,height=6cm]{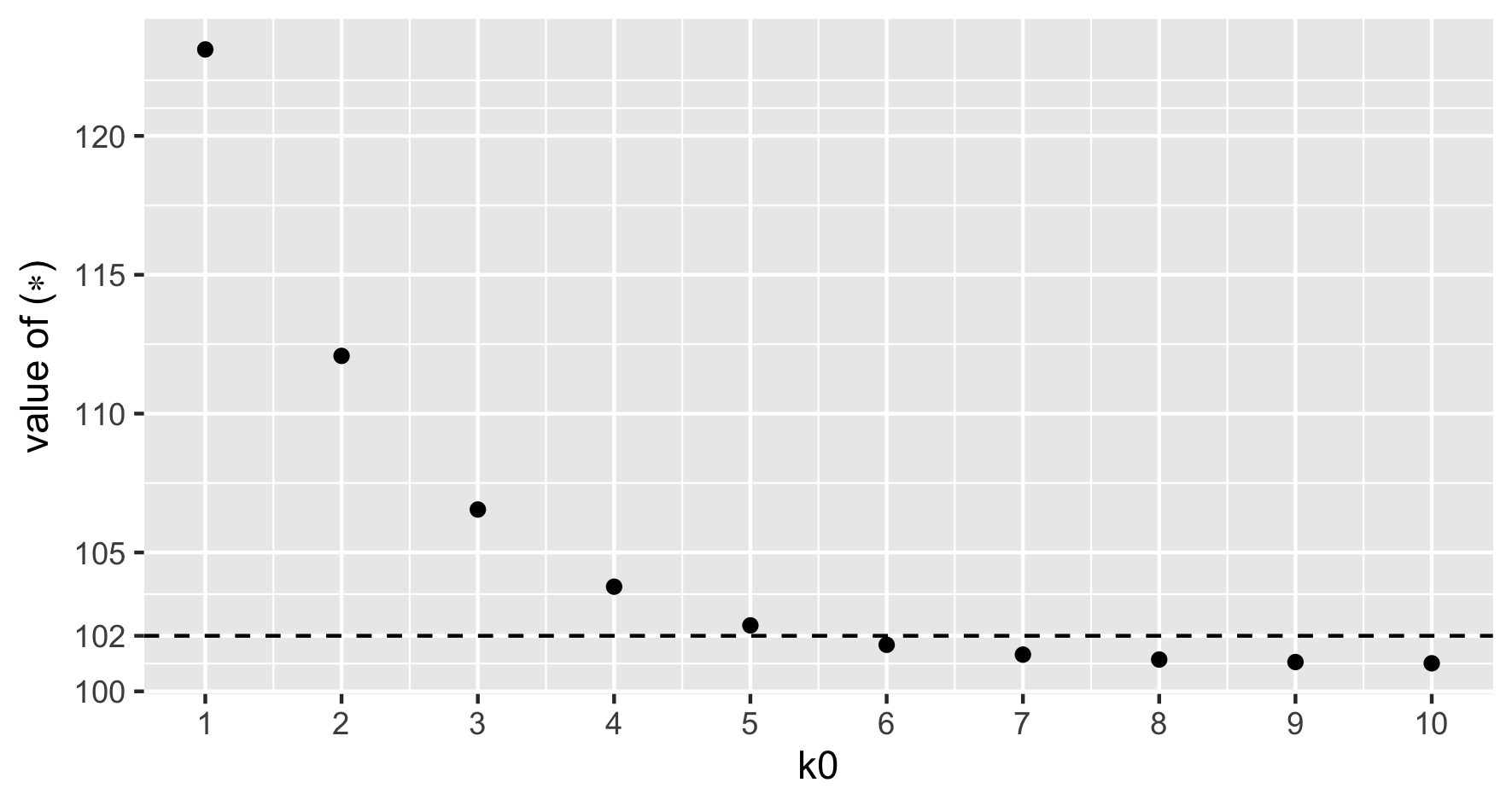}
		\caption{Values of $(\ast)$ for different $k_0$}
		\label{Plot: bound ca}
	\end{figure}
\end{proof}

\subsection{Proof of Theorem \ref{Thm: GGMfilter FDR}} \label{appendix: proofGGMfilter FDR}
\FDRcontrol*

\begin{proof}[Proof of Theorem \ref{Thm: GGMfilter FDR}]
	 Let $W=(W^{(1)}, \dots, W^{(p)})$ be the feature statistic matrix calculated using Algorithm \ref{Algo: GGMknock}. Then, by Lemma \ref{SignFlipRandomDesign}, $W^{(i)}$ possesses the sign-flip property on $NE_i^c$ for each $i\in[p]$, which implies that Lemma \ref{Lemma: Bound of c_a} can be used to obtain pair $(a,c_a)$ that satisfies inequality \eqref{SupBoundTerm}. We show that for both cases $\mR=\mR_{\AND}$ and $\mR=\mR_{\OR}$, the FDR of Algorithm \ref{Algo: GGMknock} with $\delta=1$ is controlled, and thus prove the theorem.
	
    In the case of $\mR=\mR_{\AND}$, we have 
	\begin{equation} \label{ProofFDRandRule}
	\begin{aligned}
	\FDR 
	&= \bbE \Bigg[ \frac{ |\widehat{F}_{\AND}| }{ |\widehat{E}_{\AND}| \vee 1} \Bigg]
	= \bbE \Bigg[ \frac{ |\widehat{F}_{\AND}| }{ |\widehat{E}_{\AND}| \vee 1} \bbone_{ \{ |\widehat{E}_{\AND}| \geq 1 \} } \Bigg] \\
	&=  \bbE \Bigg[ \frac{1}{2} \sum_{i=1}^p \frac{  |\widehat{V}^{+}_{N_i}| }{ |\widehat{E}_{\AND}| \vee 1} \bbone_{ \{ |\widehat{E}_{\AND}| \geq 1 \} } 
	- \frac{1}{2} \cdot \frac{ |\widehat{F}_{\OR} \backslash \widehat{F}_{\AND}| }{ |\widehat{E}_{\AND}| \vee 1} \bbone_{ \{ |\widehat{E}_{\AND}| \geq 1 \} }  \Bigg] \\
	&\leq \frac{1}{2} \bbE \Bigg[ \sum_{i=1}^p \frac{ |\widehat{V}^{+}_{N_i}| }{ |\widehat{E}_{\AND}| \vee 1} \bbone_{ \{ |\widehat{E}_{\AND}| \geq 1 \} } \Bigg] 
	= \frac{1}{2} \sum_{i=1}^p \bbE \Bigg[ \frac{ |\widehat{V}^{+}_{N_i}| }{ a+|\widehat{V}^{-}_{N_i}|} \frac{ a+|\widehat{V}^{-}_{N_i}|}{|\widehat{E}_{\AND}| \vee 1} \bbone_{ \{ |\widehat{E}_{\AND}| \geq 1 \} } \Bigg] \\
	&\leq \frac{1}{2} \sum_{i=1}^p \bbE \Bigg[ \frac{ |\widehat{V}^{+}_{N_i}| }{ a+|\widehat{V}^{-}_{N_i}|} \frac{ a+|\widehat{V}^{-}_{i}|}{|\widehat{E}_{\AND}| \vee 1} \bbone_{ \{ |\widehat{E}_{\AND}| \geq 1 \} } \Bigg] \\
	&\stackrel{\text{see } \eqref{ANDopt}}{\leq} \frac{1}{2} \cdot \frac{2q}{c_ap} \sum_{i=1}^p \bbE \Bigg[ \frac{ |\widehat{V}^{+}_{N_i}(\widehat{T}_i)| }{ a+|\widehat{V}^{-}_{N_i}(\widehat{T}_i)|} \Bigg]  \\
	& \leq \frac{q}{c_ap} \sum_{i=1}^p \bbE \Bigg[ \underset{ T_i > 0 } {\sup} \frac{ |\widehat{V}^{+}_{N_i}(T_i)| }{ a+|\widehat{V}^{-}_{N_i}(T_i)|} \Bigg]
	\stackrel{ \text{see } \eqref{SupBoundTerm}} {\leq} q,
	\end{aligned}
	\end{equation}
	where the indication function $\bbone_{ \{ |\widehat{E}_{\AND}| \geq 1 \} } $ serves as a restriction to the case where feasible thresholds exist, so that \eqref{ANDopt} can be used in the second to last line.
	
	In the case of $\mR=\mR_{\OR}$, we have
	\begin{equation*}
	\begin{aligned}
	\FDR
	&= \bbE \Bigg[ \frac{ |\widehat{F}_{\OR}| }{ |\widehat{E}_{\OR}| \vee 1} \Bigg]
	= \bbE \Bigg[ \sum_{i=1}^p \frac{ |\widehat{V}^{+}_{N_i}| }{ |\widehat{E}_{\OR}| \vee 1} 
	- \frac{ | \widehat{F}_{\AND} | }{ |\widehat{E}_{\OR}| \vee 1} \Bigg] \\
	&\leq \bbE \Bigg[ \sum_{i=1}^p \frac{ |\widehat{V}^{+}_{N_i}| }{ |\widehat{E}_{\OR}| \vee 1} \Bigg] 
	= \sum_{i=1}^p \bbE \Bigg[ \frac{ |\widehat{V}^{+}_{N_i}| }{ a+|\widehat{V}^{-}_{N_i}|} \cdot \frac{ a+|\widehat{V}^{-}_{N_i}|}{|\widehat{E}_{\OR}| \vee 1} \Bigg] \\
	&\leq \sum_{i=1}^p \bbE \Bigg[ \frac{ |\widehat{V}^{+}_{N_i}| }{ a+|\widehat{V}^{-}_{N_i}|} \cdot \frac{ a+|\widehat{V}^{-}_{i}|}{|\widehat{E}_{\OR}| \vee 1} \bbone_{ \{ |\widehat{E}_{\OR}| \geq 1 \} } \Bigg]
	\stackrel{\text{see } \eqref{ORopt}}{\leq} \frac{q}{c_ap} \sum_{i=1}^p \bbE \Bigg[ \frac{ |\widehat{V}^{+}_{N_i}(\widehat{T}_i)| }{ a+|\widehat{V}^{-}_{N_i}(\widehat{T}_i)|} \Bigg] \\
	& \leq \frac{q}{c_ap} \sum_{i=1}^p \bbE \Bigg[ \underset{ T_i > 0 } {\sup} \frac{ |\widehat{V}^{+}_{N_i}(T_i)| }{ a+|\widehat{V}^{-}_{N_i}(T_i)|} \Bigg]
	\stackrel{ \text{see } \eqref{SupBoundTerm}} {\leq} q.
	\end{aligned}
	\end{equation*}
\end{proof}

\begin{remark}\label{ReasonForNotUsingSummationInOpt}
	From the fourth to the fifth line in inequalities \eqref{ProofFDRandRule}, one can see that there are two terms indexed by $i$ inside of the summation and the expectation is taken with respect to them.
	This leads us to use separate constraints for each $i$ in \eqref{ANDopt} rather than one summed constraint (i.e., $\sum_{i=1}^{p} \frac{ a +|\widehat{V}^{-}_{i}(T_i)|}{|\widehat{E}_{\AND}(T)| \vee 1} \leq 2q/c_a$ ).
\end{remark}

\subsection{Proof of Theorem \ref{Thm: GGMfilter mFDR}}
\mFDRcontrol*

\begin{proof}[Proof of Theorem \ref{Thm: GGMfilter mFDR}]
	
	The proof of mFDR control is similar to the proof of FDR control.
	
	 Let $W=(W^{(1)}, \dots, W^{(p)})$ be the feature statistic matrix calculated using Algorithm \ref{Algo: GGMknock}. Then, by Lemma \ref{SignFlipRandomDesign}, $W^{(i)}$ possesses the sign-flip property on $NE_i^c$ for each $i\in[p]$, which implies that Lemma \ref{Lemma: Bound of c_a} can be used to obtain pair $(a,c_a)$ that satisfies inequality \eqref{SupBoundTerm}. We show that for  and for both cases $\mR=\mR_{\AND}$ and $\mR=\mR_{\OR}$, the mFDR of Algorithm \ref{Algo: GGMknock} with $\delta=0$ is controlled, and thus prove the theorem.
	
	In the case of $\mR=\mR_{\AND}$, we have
	\begin{equation*}
	\begin{aligned}
	\mFDR_{\AND} 
	&= \bbE \Bigg[ \frac{ |\widehat{F}_{\AND}| }{ |\widehat{E}_{\AND}| + a c_a p/(2q) } \Bigg]
	= \bbE \Bigg[ \frac{ |\widehat{F}_{\AND}| }{ |\widehat{E}_{\AND}| + a c_a p/(2q)} \bbone_{ \{ |\widehat{E}_{\AND}| \geq 1 \} } \Bigg] \\
	&\leq \frac{1}{2} \bbE \Bigg[ \sum_{i=1}^p \frac{ |\widehat{V}^{+}_{N_i}| }{ |\widehat{E}_{\AND}| + a c_a p/(2q)} \bbone_{ \{ |\widehat{E}_{\AND}| \geq 1 \} } \Bigg]  \\
	&\leq \frac{1}{2} \sum_{i=1}^p \bbE \Bigg[ \frac{ |\widehat{V}^{+}_{N_i}| }{ a+|\widehat{V}^{-}_{N_i}|} \cdot \frac{ a+|\widehat{V}^{-}_{i}|}{|\widehat{E}_{\AND}| + a c_a p/(2q) } \bbone_{ \{ |\widehat{E}_{\AND}| \geq 1 \} } \Bigg] \\
	& \stackrel{\text{see } (\ref{mFDRoptAND})}{\leq} \frac{1}{2} \sum_{i=1}^p \bbE \Bigg[ \frac{ |\widehat{V}^{+}_{N_i}| }{ a+|\widehat{V}^{-}_{N_i}|} \cdot \frac{ a+ 2q/(c_a p) \cdot |\widehat{E}_{\AND}|}{|\widehat{E}_{\AND}|  + a c_a p/(2q) } \bbone_{ \{ |\widehat{E}_{\AND}| \geq 1 \} } \Bigg] \\
	& = \frac{q}{c_a p}  \sum_{i=1}^p \bbE \Bigg[ \frac{ |\widehat{V}^{+}_{N_i}| }{ a +|\widehat{V}^{-}_{N_i}|} \cdot \frac{ a c_a p/(2q) + |\widehat{E}_{\AND}|}{|\widehat{E}_{\AND}|  + a c_a p/(2q) } \bbone_{ \{ |\widehat{E}_{\AND}| \geq 1 \} } \Bigg] \\
	&\leq \frac{q}{c_a p}  \sum_{i=1}^p  \bbE \Bigg[ \frac{ |\widehat{V}^{+}_{N_i}(\widehat{T}_i)| }{ a+|\widehat{V}^{-}_{N_i}(\widehat{T}_i)|} \Bigg]
	\leq \frac{q}{c_a p}  \sum_{i=1}^p \bbE \Bigg[ \underset{ T_i > 0 } {\sup} \frac{ |\widehat{V}^{+}_{N_i}(T_i)| }{ a+|\widehat{V}^{-}_{N_i}(T_i)|} \Bigg]
	\stackrel{ \text{see } \eqref{SupBoundTerm}} {\leq} q.
	\end{aligned}
	\end{equation*}
	
	In the case of $\mR=\mR_{\OR}$, we have
	\begin{equation*}
	\begin{aligned}
	\mFDR_{\OR} 
	&= \bbE \Bigg[ \frac{ |\widehat{F}_{\OR}| }{ |\widehat{E}_{\OR}| + a c_a p/q } \Bigg]
	\leq\sum_{i=1}^p \bbE \Bigg[ \frac{ |\widehat{V}^{+}_{N_i}| }{ |\widehat{E}_{\OR}| + a c_a p/q} \Bigg] \\
	&\leq \sum_{i=1}^p \bbE \Bigg[ \frac{ |\widehat{V}^{+}_{N_i}| }{ a +|\widehat{V}^{-}_{N_i}|} \cdot \frac{ a +|\widehat{V}^{-}_{i}|}{|\widehat{E}_{\OR}| + a c_a p/q } \bbone_{ \{ |\widehat{E}_{\OR}| \geq 1 \} } \Bigg] \\
	& \stackrel{\text{see } (\ref{mFDRoptOR})}{\leq} \sum_{i=1}^p \bbE \Bigg[ \frac{ |\widehat{V}^{+}_{N_i}| }{ a +|\widehat{V}^{-}_{N_i}|} \cdot \frac{ a + q/(c_a p) \cdot |\widehat{E}_{\OR}|}{|\widehat{E}_{\OR}|  + a c_a p/q } \bbone_{ \{ |\widehat{E}_{\OR}| \geq 1 \} } \Bigg] \\
	& = \frac{q}{c_a p} \sum_{i=1}^p \bbE \Bigg[ \frac{ |\widehat{V}^{+}_{N_i}| }{ a+|\widehat{V}^{-}_{N_i}|} \cdot \frac{ a c_a p/q + |\widehat{E}_{\OR}|}{|\widehat{E}_{\OR}|  + a c_a p/q } \bbone_{ \{ |\widehat{E}_{\OR}| \geq 1 \} } \Bigg] \\
	&= \frac{q}{c_ap} \sum_{i=1}^p \bbE \Bigg[ \frac{ |\widehat{V}^{+}_{N_i}(\widehat{T}_i)| }{ a+|\widehat{V}^{-}_{N_i}(\widehat{T}_i)|} \Bigg]
	\leq \frac{q}{c_ap}  \sum_{i=1}^p \bbE \Bigg[ \underset{ T_i > 0 } {\sup} \frac{ |\widehat{V}^{+}_{N_i}(T_i)| }{ a+|\widehat{V}^{-}_{N_i}(T_i)|} \Bigg]
	\stackrel{ \text{see } \eqref{SupBoundTerm}} {\leq} q.
	\end{aligned}
	\end{equation*}
\end{proof}

\subsection{Proof of Theorem \ref{Thm: FDRmFDRSplittingRecycling}}

\SampleSplittingRecyclinfFDRmFDRresuklts*

Compared to Algorithm \ref{Algo: GGMknock}, we first need to show that each column of the feature statistic matrix $W$ possesses the sign-flip property on $NE_i^c$ with the sample-splitting-recycling step, so that Lemma \ref{Lemma: Bound of c_a} can be applied and inequality \eqref{SupBoundTerm} holds. The remaining proofs are then similar.

\begin{lemma} (\textbf{Sign-flip property of each column of $W$ with sample-splitting-recycling} ) \label{Lemma: SplitRecyclingColumnSignFlip}
	~\\
	Let $\bX \in \bbR^{n \times p}$ be the original sample matrix, $\bX_{1} \in \bbR^{n_1 \times p}$ and $\bX_{2} \in \bbR^{n_2 \times p}$ be two subsamples obtained by randomly splitting $\bX$ with $n_1+n_2=n$, and $ \bX^{\re} = \begin{pmatrix} \bX_{1 } \\ \bX_{2 } \end{pmatrix} \in \bbR^{n \times p}$ be the collection of these two subsamples. Let $W=(W^{(1)}, \dots, W^{(p)})$ be the feature statistic matrix calculated using Algorithm \ref{Algo: GGMknockWithSplittingRecycling}. Then for each $i\in[p]$, conditional on $\bX_{1}$, $W^{(i)} $ possesses the sign-flip property on $NE_i^c$.
\end{lemma}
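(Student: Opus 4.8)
The plan is to follow the structure of the proof of Lemma \ref{SignFlipRandomDesign}, but to run the argument conditionally on $\bX_1$ (and on an auxiliary variable introduced below). Via the sufficiency and antisymmetry properties of $\mP^*$, the sign-flip property of $W^{(i)}$ on $NE_i^c$ reduces to two exchangeability statements for the pair $(\bX^{\re(-i)}, \widetilde{\bX}^{\re(-i)})$, analogous to Lemmas \ref{FirstLemmaFromBC} and \ref{SecondLemmaFromBC}: (a) the Gram matrix is invariant under $\swap(H)$, i.e. $[\bX^{\re(-i)} \ \widetilde{\bX}^{\re(-i)}]^T_{\swap(H)}[\bX^{\re(-i)} \ \widetilde{\bX}^{\re(-i)}]_{\swap(H)} = [\bX^{\re(-i)} \ \widetilde{\bX}^{\re(-i)}]^T[\bX^{\re(-i)} \ \widetilde{\bX}^{\re(-i)}]$ for all $H$; and (b) the feature--response inner product is exchangeable, i.e. $[\bX^{\re(-i)} \ \widetilde{\bX}^{\re(-i)}]^T_{\swap(H)} \bX^{\re(i)} \stackrel{d}{=} [\bX^{\re(-i)} \ \widetilde{\bX}^{\re(-i)}]^T \bX^{\re(i)}$ conditional on $\bX_1$, for every $H \subseteq NE_i^c$. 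Statement (a) is immediate: since the recycled knockoff $\widetilde{\bX}^{\re(-i)}$ of \eqref{ReKnockoffs} satisfies the knockoff relation \eqref{knockConstruct} (as already noted in the text), Lemma \ref{FirstLemmaFromBC} applies verbatim and yields (a) deterministically. The real work lies in (b).

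To prove (b) I would condition on $\bX_1$ together with $\bX_2^{(-i)}$ and whatever auxiliary randomness is used to build $\widetilde{\bX}_2^{(-i)}$; note that the selected hyperparameters $\mO^*, \mP^*$ are functions of $\bX_1$ alone and are therefore fixed under this conditioning. Using the node-wise model \eqref{relationOfGGMandLM}, write $\bX^{\re(i)} = \bX^{\re(-i)} \beta^{(i)} + \bz^{\re(i)}$ with $\bz^{\re(i)} \sim \mN(0,\sigma_i^2 I_n)$ independent of $\bX^{\re(-i)}$, and split it as $\bz^{\re(i)} = (\bz_1^{(i)T}, \bz_2^{(i)T})^T$. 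Once $\bX_1$ and $\bX_2^{(-i)}$ are fixed, the matrices $\bX^{\re(-i)}, \widetilde{\bX}^{\re(-i)}$ and the response $\bX_1^{(i)}$ are fixed, and the only remaining randomness is $\bz_2^{(i)} \sim \mN(0,\sigma_i^2 I_{n_2})$. A direct computation then shows that $[\bX^{\re(-i)} \ \widetilde{\bX}^{\re(-i)}]^T \bX^{\re(i)}$ is Gaussian with random part $[\bX_2^{(-i)} \ \widetilde{\bX}_2^{(-i)}]^T \bz_2^{(i)}$, hence covariance $\sigma_i^2 [\bX_2^{(-i)} \ \widetilde{\bX}_2^{(-i)}]^T[\bX_2^{(-i)} \ \widetilde{\bX}_2^{(-i)}]$, and with design-block and knockoff-block means $\bX_1^{(-i)T}\bX_1^{(i)} + \bX_2^{(-i)T}\bX_2^{(-i)}\beta^{(i)}$ and $\bX_1^{(-i)T}\bX_1^{(i)} + \widetilde{\bX}_2^{(-i)T}\bX_2^{(-i)}\beta^{(i)}$, respectively.

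The invariance of this Gaussian law under $\swap(H)$, $H \subseteq NE_i^c$, is precisely where the recycling construction pays off. First, because the leading blocks of $\bX^{\re(-i)}$ and $\widetilde{\bX}^{\re(-i)}$ coincide (both equal $\bX_1^{(-i)}$), the swap acts only on the second ($\bX_2$) blocks, and the covariance $\sigma_i^2 [\bX_2^{(-i)} \ \widetilde{\bX}_2^{(-i)}]^T[\bX_2^{(-i)} \ \widetilde{\bX}_2^{(-i)}]$ is invariant by Lemma \ref{FirstLemmaFromBC} applied to the genuine pair $(\bX_2^{(-i)}, \widetilde{\bX}_2^{(-i)})$. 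Second, the common term $\bX_1^{(-i)T}\bX_1^{(i)}$ cancels when comparing the design and knockoff mean components, so the swap preserves the mean iff $(\bX_2^{(-i)T}\bX_2^{(-i)}\beta^{(i)})_j = (\widetilde{\bX}_2^{(-i)T}\bX_2^{(-i)}\beta^{(i)})_j$ for $j \in H$; by \eqref{knockConstruct} this difference equals $s^{(i)}_j \beta^{(i)}_j$, which vanishes since $\beta^{(i)}_j = 0$ for $j \in H \subseteq NE_i^c$. Thus the inner-product vector is exchangeable conditional on $(\bX_1, \bX_2^{(-i)})$, and marginalizing over $\bX_2^{(-i)}$ gives (b) conditional on $\bX_1$.

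Finally, combining (a) and (b) yields joint exchangeability of $\big([\bX^{\re(-i)} \ \widetilde{\bX}^{\re(-i)}]^T[\bX^{\re(-i)} \ \widetilde{\bX}^{\re(-i)}],\, [\bX^{\re(-i)} \ \widetilde{\bX}^{\re(-i)}]^T \bX^{\re(i)}\big)$ under $\swap(H)$ conditional on $\bX_1$, for each $H \subseteq NE_i^c$. Invoking sufficiency (so that $W^{(i)}$ is a fixed measurable function of these two quantities given $\bX_1$) and then repeating the Rademacher-sign argument from the proof of Lemma \ref{SignFlipRandomDesign} — introduce $\epsilon$ with $K = \{j : \epsilon_j = -1\} \subseteq NE_i^c$, use antisymmetry to write $W^{(i)}_{\swap(K)} = (W^{(i)}_j \epsilon_j)_j$, and average over the independent $K$ — delivers the claimed sign-flip property of $W^{(i)}$ on $NE_i^c$ conditional on $\bX_1$. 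I expect the main obstacle to be the bookkeeping in step (b): choosing the correct conditioning ($\bX_1$ together with $\bX_2^{(-i)}$) that collapses the statement to a Gaussian mean/covariance check, and recognizing that the recycled block $\bX_1^{(-i)}$ contributes identically to the design and knockoff inner products (so it drops out of the mean comparison) while contributing nothing to the covariance.
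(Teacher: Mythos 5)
Your proposal is correct and takes essentially the same route as the paper's proof: the paper likewise conditions on $\bX_{1}$ together with the design (so that $\mO^*,\mP^*$ are fixed), exploits that the recycled block $\bX_{1}^{(-i)}$ appears identically in the original and knockoff matrices so that $\swap(H)$ acts only on the $\bX_{2}$ blocks, and then runs the sufficiency/antisymmetry--Rademacher argument before marginalizing out the design. The only cosmetic difference is in step (b): where you re-derive the exchangeability of the fresh-block inner product by an explicit Gaussian mean/covariance computation (showing the mean shift on a swapped null coordinate is $s^{(i)}_j \beta^{(i)}_j = 0$), the paper obtains the same statement by citing Lemma \ref{SecondLemmaFromBC} applied to the genuine knockoff pair $(\bX_{2}^{(-i)}, \widetilde{\bX}_{2}^{(-i)})$ with response $\bX_{2}^{(i)}$.
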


\begin{proof}[Proof of Lemma \ref{Lemma: SplitRecyclingColumnSignFlip}]
	
	For any $i\in [p]$, we first show that conditional on $\bX_{1}$ and $\bX^{\re(-i)} $, $W^{(i)}$ possesses the sign-flip property on $NE_i^c$ using the same idea as the proof of Lemma 1 in \cite{barber2015controlling}. Note that in this case, the only random term is $\bX^{(i)}_{2}$. Let $H \subseteq NE_i^c$ be any subset. Because $\widetilde{\bX}^{\re(-i)} $ is a knockoff matrix of $\bX^{\re(-i)}$, we have 
	\begin{equation*}
	[ \bX^{\re(-i)} \ \widetilde{\bX}^{\re(-i)} ]_{\swap(H)}^T  [ \bX^{\re(-i)} \ \widetilde{\bX}^{\re(-i)} ]_{\swap(H)}
	= [ \bX^{\re(-i)} \ \widetilde{\bX}^{\re(-i)} ]^T [ \bX^{\re(-i)} \ \widetilde{\bX}^{\re(-i)} ].
	\end{equation*}
	Now we show that
	\begin{equation*}
	[ \bX^{\re(-i)} \ \widetilde{\bX}^{\re(-i)} ]_{\swap(H)}^T  \bX^{\re(i)} \stackrel{d}{=} [ \bX^{\re(-i)} \ \widetilde{\bX}^{\re(-i)} ]^T  \bX^{\re(i)}.
	\end{equation*}
	This is true because
	\begin{equation*}
	\begin{aligned}
	& \quad \ [ \bX^{\re(-i)} \ \widetilde{\bX}^{\re(-i)} ]_{\swap(H)}^T  \bX^{\re(i)}  \\ 
	&= \begin{pmatrix} [\bX^{(-i)}_{1} \ \bX^{(-i)}_{1}]_{\swap(H)} \\ [\bX^{(-i)}_{2} \ \widetilde{\bX}^{(-i)}_{2}]_{\swap(H)} \end{pmatrix}^T 
	\begin{pmatrix} \bX^{(i)}_{1} \\ \bX^{(i)}_{2} \end{pmatrix} \\
	&=[\bX^{(-i)}_{1} \ \bX^{(-i)}_{1}]_{\swap(H)}^T \bX^{(i)}_{1} + [\bX^{(-i)}_{2} \ \widetilde{\bX}^{(-i)}_{2}]_{\swap(H)}^T \bX^{(i)}_{2} \\
	&\stackrel{d}{=} [\bX^{(-i)}_{1} \ \bX^{(-i)}_{1}]^T \bX^{(i)}_{1} +  [\bX^{(-i)}_{2} \ \widetilde{\bX}^{(-i)}_{2}]^T \bX^{(i)}_{2} \\
	&= [ \bX^{\re(-i)} \ \widetilde{\bX}^{\re(-i)} ]^T  \bX^{\re(i)}.
	\end{aligned}
	\end{equation*}
	The third equation holds because $ [\bX^{(-i)}_{1} \ \bX^{(-i)}_{1}]_{sw\textbf{}ap(H)}^T \bX^{(i)}_{1} = [\bX^{(-i)}_{1} \ \bX^{(-i)}_{1}]^T \bX^{(i)}_{1} $ and \\ $ [\bX^{(-i)}_{2} \ \widetilde{\bX}^{(-i)}_{2}]_{\swap(H)}^T \bX^{(i)}_{2} \stackrel{d}{=} [\bX^{(-i)}_{2} \ \widetilde{\bX}^{(-i)}_{2}]^T \bX^{(i)}_{2}$ by Lemma \ref{SecondLemmaFromBC}. Combining the above results we have
	\begin{equation}\label{EqualInDistri}
	\begin{aligned}
	&\quad \left( [ \bX^{\re(-i)} \ \widetilde{\bX}^{\re(-i)} ]_{\swap(H)}^T  [ \bX^{\re(-i)} \ \widetilde{\bX}^{\re(-i)} ]_{\swap(H)}, 
		[ \bX^{\re(-i)} \ \widetilde{\bX}^{\re(-i)} ]_{\swap(H)}^T  \bX^{\re(i)} \right) \\
	&\stackrel{d}{=} \left( [ \bX^{\re(-i)} \ \widetilde{\bX}^{\re(-i)} ]^T [ \bX^{\re(-i)} \ \widetilde{\bX}^{\re(-i)} ], 
		[ \bX^{\re(-i)} \ \widetilde{\bX}^{\re(-i)} ]^T  \bX^{\re(i)} \right) 
	\end{aligned}
	\end{equation}
	
	Let $\epsilon=(\epsilon_1, \dots, \epsilon_p)$ be a sign sequence independent of $W^{(i)}$ with $\epsilon_j = +1$ for $j \not\in NE_i^c$ and $\epsilon_j$ i.i.d.\ from a Rademacher distribution for $j \in NE_i^c$. Let $K = \{ j \in [p]: \epsilon_j=-1 \}$ be a set depending on $\epsilon$. We have that $K \subseteq NE_i^c$. Then, conditional on $\bX_{1}$ (so $\mathcal{P}(\bX_{1}) $ is fixed) and $\bX^{\re(-i)}$, we have
	\begin{equation*}
	\begin{aligned}
	&\quad (W^{(i)}_1 \cdot \epsilon_1, \cdots, W^{(i)}_p\cdot \epsilon_p) \\
	&= {W^{(i)}}_{\swap(K)} \\
	&= f \left(  [ \bX^{\re(-i)} \ \widetilde{\bX}^{\re(-i)} ]_{\swap(K)}^T  [ \bX^{\re(-i)} \ \widetilde{\bX}^{\re(-i)} ]_{\swap(K)}, 
		[ \bX^{\re(-i)} \ \widetilde{\bX}^{\re(-i)} ]_{\swap(K)}^T  \bX^{\re(i)},  \mathcal{P}(\bX_{1}) \right)  \\
	&\stackrel{d}{=} f \left( [ \bX^{\re(-i)} \ \widetilde{\bX}^{\re(-i)} ]^T [ \bX^{\re(-i)} \ \widetilde{\bX}^{\re(-i)} ], 
	   [ \bX^{\re(-i)} \ \widetilde{\bX}^{\re(-i)} ]^T  \bX^{\re(i)},  \mathcal{P}(\bX_{1}) \right)  \\
	&= W^{(i)},
	 \end{aligned}
	\end{equation*}
	where the first equation follows from the antisymmetry property of $W^{(i)}$ and the definition of $K$, and the third equation follows from \eqref{EqualInDistri}.
	
	Therefore, $W^{(i)}$ possesses the sign-flip property on $NE_i^c$ conditional on $\bX_{1}$ and $\bX^{\re(-i)} $. Then, similar to the proof of Lemma \ref{SignFlipRandomDesign}, we can show that $ {W^{(i)}}_{\swap(K)} \stackrel{d}{=} W^{(i)}$ when $\bX^{\re(-i)} $ is treated as random, and finally conclude that $W^{(i)}$ possesses the sign-flip property on $NE_i^c$ conditional on $\bX_{1}$.
\end{proof}

We now prove Theorem \ref{Thm: FDRmFDRSplittingRecycling}.
\begin{proof}[Proof of Theorem \ref{Thm: FDRmFDRSplittingRecycling}]
	
	By Lemma \ref{Lemma: SplitRecyclingColumnSignFlip} and Lemma \ref{Lemma: Bound of c_a}, for a fixed $a>0$, we can obtain $c_a$ such that
	\begin{equation}\label{BoundForProofRecycle}
	\bbE \left[ \underset{ T_i > 0 } {\sup} \left. \frac{ |\widehat{V}^{+}_{N_i}(T_i)| }{ a+|\widehat{V}^{-}_{N_i}(T_i)|} \rgiven \bX_{1} \right]  \leq c_a.
	\end{equation}
	where $\widehat{V}^{+}_{N_i}$ and $\widehat{V}^{-}_{N_i}$ are based on the feature statistic matrix $W$ obtained from Algorithm \ref{Algo: GGMknockWithSplittingRecycling}. In particular, pairs $(a=1,c_a=1.93)$ and $(a=0.01,c_a=102)$ satisfy inequality \eqref{BoundForProofRecycle}.
	
	Conditional on $\bX_{1}$, $(a,c_a)$ is fixed as it is chosen based on $\bX_{1}$, hence inequality \eqref{BoundForProofRecycle} can be used. By using the same arguments as the proof of Theorem \ref{Thm: GGMfilter FDR} with the only modification that invoking inequality \eqref{BoundForProofRecycle}) instead of \eqref{SupBoundTerm} in the last step, we have
	\begin{equation} \label{FDRcontrolCondiX1}
	\bbE \left[ \left. \frac{ |\widehat{F}^{\re}_{\AND}| }{ |\widehat{E}^{\re}_{\AND}| \vee 1} \rgiven \bX_{1} \right] \leq q
	\quad \text{and} \quad
	\bbE \left[ \left. \frac{ |\widehat{F}^{\re}_{\OR}| }{ |\widehat{E}^{\re}_{\OR}| \vee 1} \rgiven \bX_{1} \right] \leq q,
	\end{equation}
	where $\widehat{E}^{\re}_{\AND}$ is the estimated edge set using $\mR_{\AND}$ and $W$ obtained from Algorithm \ref{Algo: GGMknockWithSplittingRecycling}, $\widehat{F}^{\re}_{\AND}$ is the set of falsely discovered edges in $\widehat{E}^{\re}_{\AND}$, and similarly for $\widehat{E}^{\re}_{\OR}$ and $\widehat{F}^{\re}_{\OR}$.
	
	Therefore, for Algorithm \ref{Algo: GGMknockWithSplittingRecycling} with $\delta=1$, we have
	\begin{equation*}
	\begin{aligned}
	\FDR 
	&= \bbE \left[ \frac{ |\widehat{F}^{\re}| }{ |\widehat{E}^{\re}| \vee 1} \right]  \\
	&= \bbE  \left[ \bbE \left[  \left. \frac{ |\widehat{F}^{\re}| }{ |\widehat{E}^{\re}| \vee 1} \rgiven \bX_{1} \right] \right] \\
	&= \bbE \Bigg[ \bbE \left[ \left. \frac{ |\widehat{F}^{\re}_{\AND}| }{ |\widehat{E}^{\re}_{\AND}| \vee 1} \bbone_{ \{ \mR_{\AND}\text{ is chosen based on } \bX_{1} \} }  \rgiven \bX_{1} \right] \\
	& \quad	+ \bbE \left[ \left. \frac{ |\widehat{F}^{\re}_{\OR}| }{ |\widehat{E}^{\re}_{\OR}| \vee 1} \bbone_{ \{ \mR_{\OR} \text{ is chosen based on } \bX_{1} \} }  \rgiven \bX_{1} \right]  \Bigg] \\
	&= \bbE \Bigg[ \bbone_{ \{ \mR_{\AND}\text{ is chosen based on } \bX_{1} \} } \bbE \left[ \left. \frac{ |\widehat{F}^{\re}_{\AND}| }{ |\widehat{E}^{\re}_{\AND}| \vee 1} \rgiven \bX_{1} \right] \\
	& \quad	+ \bbone_{ \{ \mR_{\OR} \text{ is chosen based on } \bX_{1} \} } \bbE \left[  \left. \frac{ |\widehat{F}^{\re}_{\OR}| }{ |\widehat{E}^{\re}_{\OR}| \vee 1} \rgiven \bX_{1} \right]  \Bigg] \\
	& \stackrel{\text{see } (\ref{FDRcontrolCondiX1})}{\leq} q \cdot \bbE [ \bbone_{ \{ \mR_{\AND}\text{ is chosen based on } \bX_{1} \} } + \bbone_{ \{ \mR_{\OR} \text{ is chosen based on } \bX_{1} \} } ] \\
	&= q.
	\end{aligned}
	\end{equation*}
	
	The proof about the $\mFDR^{\re}$ control (corresponding to Algorithm \ref{Algo: GGMknockWithSplittingRecycling} with $\delta=0$) is similar:

	Conditional on $\bX_{1}$, $(a,c_a)$ is fixed. By using the same arguments as the proof of Theorem \ref{Thm: GGMfilter mFDR} with the only modification that invoking inequality \eqref{BoundForProofRecycle}) instead of \eqref{SupBoundTerm} in the last step, we have
	\begin{equation} \label{mFDRcontrolCondiX1}
	\bbE \left[ \left. \frac{ |\widehat{F}^{\re}_{\AND}| }{ |\widehat{E}^{\re}_{\AND}| + a c_a p/(2q)} \rgiven \bX_{1} \right] \leq q
	\quad \text{and} \quad
	\bbE \left[ \left. \frac{ |\widehat{F}^{\re}_{\OR}| }{ |\widehat{E}^{\re}_{\OR}| + a c_a p/q} \rgiven \bX_{1} \right] \leq q.
	\end{equation}
	Therefore, we have
	\begin{equation*}
	\begin{aligned}
	&\quad \mFDR^{\re} 
	= \bbE  \left[ \frac{ |\widehat{F}^{\re}| }{ |\widehat{E}^{\re}| + a c_a p/(q\bbone_{ \{ \mR_{\AND}\text{ is chosen based on } \bX_{1} \} } + q)} \right] \\
	&= \bbE  \Bigg[ \frac{ |\widehat{F}^{\re}_{\AND}| }{ |\widehat{E}^{\re}_{\AND}| + a c_a p/(2q)} \bbone_{ \{ \mR_{\AND}\text{ is chosen based on } \bX_{1} \} } \\
	& \quad	+  \frac{ |\widehat{F}^{\re}_{\OR}| }{ |\widehat{E}^{\re}_{\OR}| + a c_a p/q} \bbone_{ \{ \mR_{\OR} \text{ is chosen based on } \bX_{1} \} } \Bigg]  \\
	&= \bbE \Bigg[ \bbE \Bigg[ \left. \frac{ |\widehat{F}^{\re}_{\AND}| }{ |\widehat{E}^{\re}_{\AND}| + a c_a p/(2q)} \bbone_{ \{ \mR_{\AND}\text{ is chosen based on } \bX_{1} \} }  \rgiven \bX_{1} \Bigg] \\
	&\quad	+ \bbE \Bigg[ \left. \frac{ |\widehat{F}^{\re}_{\OR}| }{ |\widehat{E}^{\re}_{\OR}| + a c_a p/q} \bbone_{ \{ \mR_{\OR} \text{ is chosen based on } \bX_{1} \} }  \rgiven \bX_{1} \Bigg] \Bigg ] \\
	&= \bbE \Bigg [ \bbone_{ \{ \mR_{\AND}\text{ is chosen based on } \bX_{1} \} } \bbE \Bigg[ \left. \frac{ |\widehat{F}^{\re}_{\AND}| }{ |\widehat{E}^{\re}_{\AND}| + a c_a p/(2q)} \rgiven \bX_{1} \Bigg]  \\
	&\quad	+ \bbone_{ \{ \mR_{\OR} \text{ is chosen based on } \bX_{1} \} } \bbE \Bigg[ \left. \frac{ |\widehat{F}^{\re}_{\OR}| }{ |\widehat{E}^{\re}_{\OR}| + a c_a p/q} \rgiven \bX_{1}  \Bigg]  \Bigg] \\
	& \stackrel{\text{see } (\ref{mFDRcontrolCondiX1})}{\leq} q \cdot \bbE [ \bbone_{ \{ \mR_{\AND}\text{ is chosen based on } \bX_{1} \} } + \bbone_{ \{ \mR_{\OR} \text{ is chosen based on } \bX_{1} \} } ]\\
	&= q.
	\end{aligned}
	\end{equation*}	
	\end{proof}

\section{Details of the simulations in Section \ref{sec: sample-splitting-recycling}} \label{appendix: SimulationSettingSection4}

Here, we give the details of the two settings related to Figure \ref{OppoPower} in Section \ref{sec: sample-splitting-recycling}. 
Both settings use the band graph. Specifically, we first generate $\Omega^{o}$ via letting $\Omega^{o}_{i,i}=1$ for $i=1,\dots, p$ and $\Omega^{o}_{i,j} = sign(b)\cdot |b|^{|i-j|/10} \cdot \bbone_{|i-j|\leq 10}$ for all $i \neq j$. We use $b=-0.1$ in setting 1 and $b=0.1$ in setting 2. Then, we randomly permute $\Omega^{o}$ by rows and columns to break the pattern of the matrix. The finally used precision matrix $\Omega$ is obtained by $\Omega = \Omega^{o} +(|\lambda_{\min}(\Omega^{o} )| + 0.5)I $, which ensures that $\Omega$ is positive definite. Here, $\lambda_{\min} (\Omega^{o} )$ denotes the minimal eigenvalue of $\Omega^{o} $. The samples are generated independently from $\mN_p(0,\Omega^{-1})$. For both setting, we use sample size $n=3000$, dimension $p=200$ and the nominal FDR level $q=0.2$. 
The results are based on $100$ replications. 
For each replication, we store the resulting TPP of the estimated graph $\widehat E$. We then take average of the $100$ TPP realizations to obtain the empirical power.

The gray and red points in Figure \ref{OppoPower} represent the empirical power of $880$ different FDR control procedures in these two settings. They are obtained by running Algorithm \ref{Algo: GGMknock} with $\delta=1$  and $880$ different combinations of $ ((a,c_a), \mO, \mP, \mR) $ as described in Section \ref{sec:simu5.1}. The blue point in Figure \ref{OppoPower} represents the empirical power of the GGM knockoff filter with sample-splitting-recycling in these two settings. It is obtained by running Algorithm \ref{Algo: GGMknockWithSplittingRecycling} with $\delta=1$ and $(\bA, \bmO, \bmP, \bmR)$ as described in Section \ref{sec:simu5.1}.

\section{Additional simulation results}
\subsection{Power gain of the sample-splitting-recycling approach} \label{appendix: RecyclingPowerGain}

In this section, we show by simulations that the sample-splitting-recycling approach indeed gains more power compared to the vanilla sample-spitting. We consider the same four graph types as in Section \ref{sec:simu} with the number of variable $p=200$, the edge parameter $b=-0.6$ for band and block graphs, the nominal FDR level $q=0.2$, and the sample size $n$ varied between $1500$ and $4000$. 
We use $100$ replications for each setting.
For each replication, we store the resulting FDP and TPP of the estimated graph $\widehat E$. 
We then take average of the $100$ FDP and TPP realizations to obtain the empirical FDR and power.
The GGM knockoff filter with sample-splitting-recycling and with sample-splitting are denoted by GKF-Re+ and GKF-NoRe+, respectively. 
The simulation results are shown in Figure \ref{RecyclepowerGain}. As expected, GKF-Re+ is always more powerful than GKF-NoRe+.
\begin{figure}[t!]
	\centering
	\includegraphics[width=13cm,height=13cm]{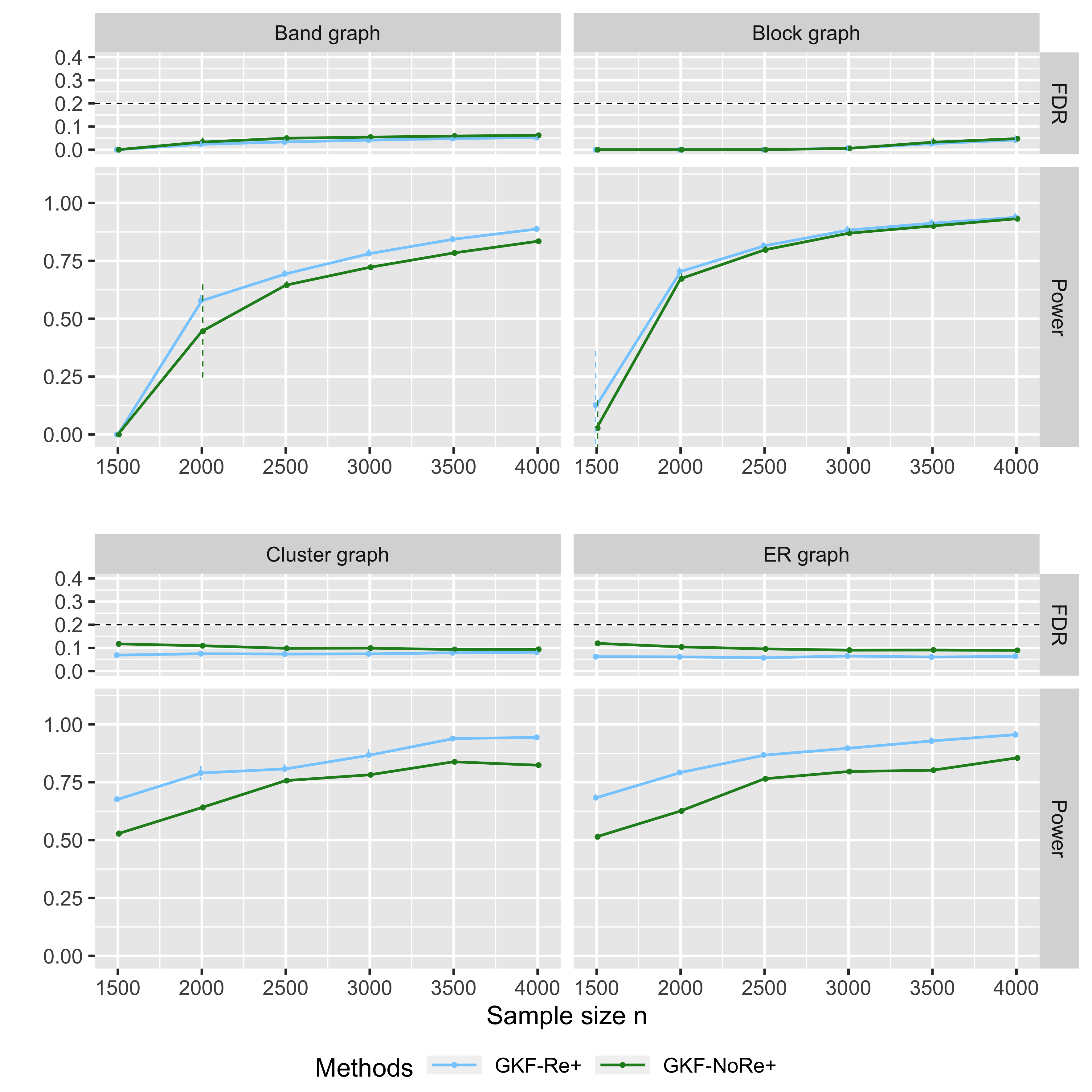}
	\caption{Simulation results: the empirical FDR and power of GKF-Re+ and GKF-NoRe+ on the four graph types when varying the sample size $n$, while the nominal FDR level $q=0.2$ and the edge parameter $b=-0.6$ for band and block graphs. The dashed vertical bars indicate plus/minus one empirical standard deviation of the FDP and TPP.}
	\label{RecyclepowerGain}
\end{figure}

\subsection{Oracle hyperparameters v.s. sample-splitting-recycling} \label{appendix: Oracle}

In this section, we compares the performance of GKF+ (Algorithm \ref{Algo: GGMknock} with $\delta=1$) with oracle hyperparameters (denoted by GKF-Oracle+) to that of GKF-Re+. 
We consider the same four graph types as in Section \ref{sec:simu} with the number of variable $p=200$, the edge parameter $b=-0.6$ for band and block graphs, the nominal FDR level $q=0.2$, and the sample size $n$ varied between $1500$ and $4000$. 
We use $100$ replications for each setting.
For each replication, we store the resulting FDP and TPP of the estimated graph $\widehat E$. 
We then take average of the $100$ FDP and TPP realizations to obtain the empirical FDR and power.
GKF-Oracle+ is GKF+ with the hyperparameters (among the $880$ choices, see Section \ref{sec:simu5.1} for details) that lead to the largest empirical powers calculated using all samples.  
The simulation results are shown in Figure \ref{OracleFig:VaryN}. 

As expected, GKF-Oracle+ is always better than GKF-Re+, because it uses the best hyperparameters for the given setting and it uses the full sample. In particular, when the sample size $n$ is not large enough, the empirical power of GKF-Re+ dramatically decreases while GKF-Oracle+ is still good. 
Knowing the oracle hyperparameters is generally impossible in practice, but from Figure \ref{OracleFig:VaryN}, we can see that the empirical power of GKF+ can be largely improved, especially in small $n$ csae, if a better solution than sample-splitting-recycling can be proposed to choose the hyperparameters.
\begin{figure}[t!]
	\centering
	\includegraphics[width=13cm, height=13cm]{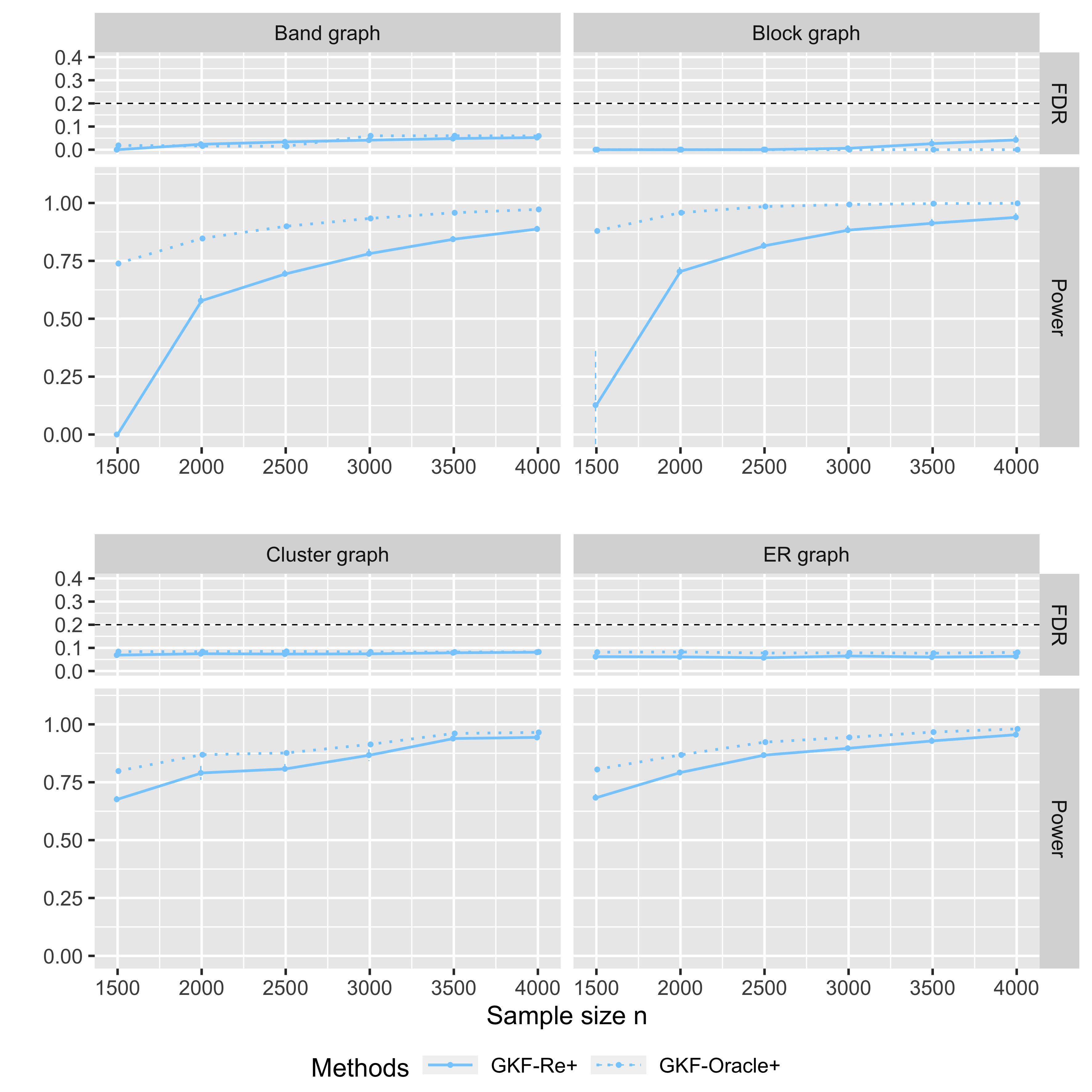}
	\caption{Simulation results: the empirical FDR and power of GKF-Re+ and GKF-Oracle+ on the four graph types when varying the sample size $n$, while the nominal FDR level $q=0.2$ and the edge parameter $b=-0.6$ for band and block graphs. The dashed vertical bars indicate plus/minus one empirical standard deviation of the FDP and TPP.}
	\label{OracleFig:VaryN}
\end{figure}

\subsection{Performance of the mFDR control methods} \label{appendix: mFDR}

In this section, we investigate the performance of the mFDR controlled counterpart of GKF-Re+ (denoted by GKF-Re), which is the Algorithm \ref{Algo: GGMknockWithSplittingRecycling} with $\delta=0$ and the hyperparameter space as described in Section \ref{sec:simu5.1}. We consider the same four graph types as in Section \ref{sec:simu} with the number of variable $p=200$, the edge parameter $b=-0.6$ for band and block graphs, the nominal FDR level $q=0.2$, and the sample size $n$ varied between $1500$ and $4000$. 
We use $100$ replications for each setting.
For each replication, we store the resulting FDP and TPP of the estimated graph $\widehat E$. 
We then take average of the $100$ FDP and TPP realizations to obtain the empirical FDR and power.
The simulation results are shown in Figure \ref{mFDRFig:VaryN}.
\begin{figure}[t!]
	\centering
	\includegraphics[width=13cm, height=13cm]{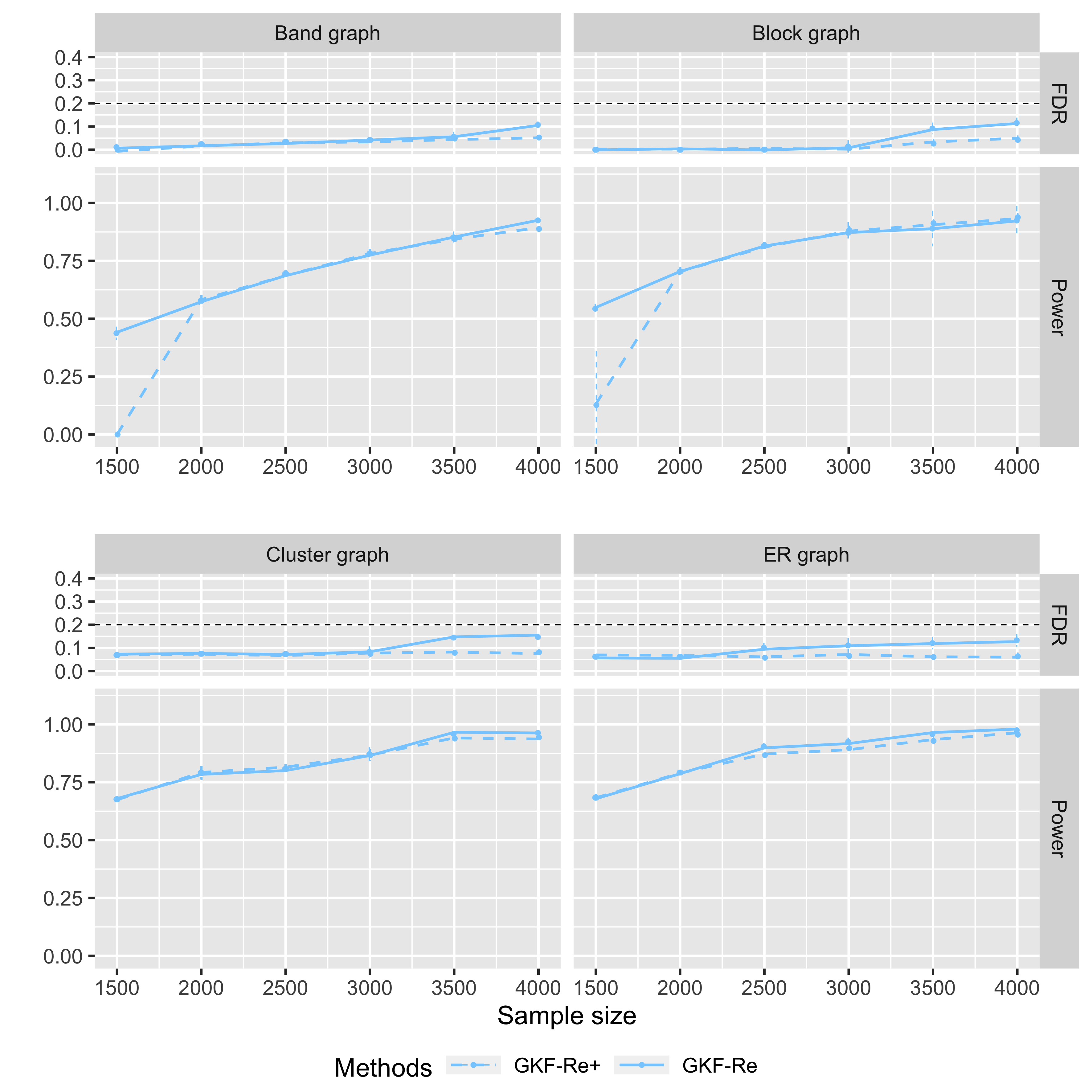}
	\caption{Simulation results: the empirical FDR and power of GKF-Re+ and GKF-Re on the four graph types when varying the sample size $n$, while the nominal FDR level $q=0.2$ and the edge parameter $b=-0.6$ for band and block graphs. The dashed vertical bars indicate plus/minus one empirical standard deviation of the FDP and TPP.}
	\label{mFDRFig:VaryN}
\end{figure}

We can see from Figure \ref{mFDRFig:VaryN} that GKF-Re and GKF-Re+ have similar empirical FDR and power when the sample size $n$ is large enough. This is because that  in this case, the extra $a$ in the optimization problems \eqref{mFDRoptAND} and \eqref{mFDRoptOR} does not matter a lot. However, when the the sample size $n$ is not large enough, the empirical power of GKF-Re+ dramatically decreases whereas GKF-Re is still good. In such case, the extra $a$ in the optimization problems \eqref{mFDRoptAND} and \eqref{mFDRoptOR} really matters, because a vector $\widehat{T}$ with $\widehat{V}_i^{-}(\widehat{T}_i)=0$ for all $i\in[p]$ is always feasible without $a$ in the left side of the optimization problem, but it may be infeasible when $a$ is in the optimization problem.

GKF-Re successfully controls the FDR in the above simulations. However, we emphasize that it only possesses the mFDR control guarantee, and it can lose FDR control in some settings. We now present an example showing that GKF-Re loses FDR control. The graph used in this simulation is an empty graph. Specifically, we let $\Omega$ be the identity matrix with dimension $p=20$. We set the nominal FDR level $q=0.2$ and the sample size $n = 200 $. The empirical FDR are computed based on different number of replications in order to show the convergence. The simulation results are presented in Figure \ref{mFDREmptyGraph}. One can see that GKF-Re loses FDR control whereas GKF-Re+ still controls the FDR.
\begin{figure}[t!]
	\centering
	\includegraphics[width=8cm, height=8cm]{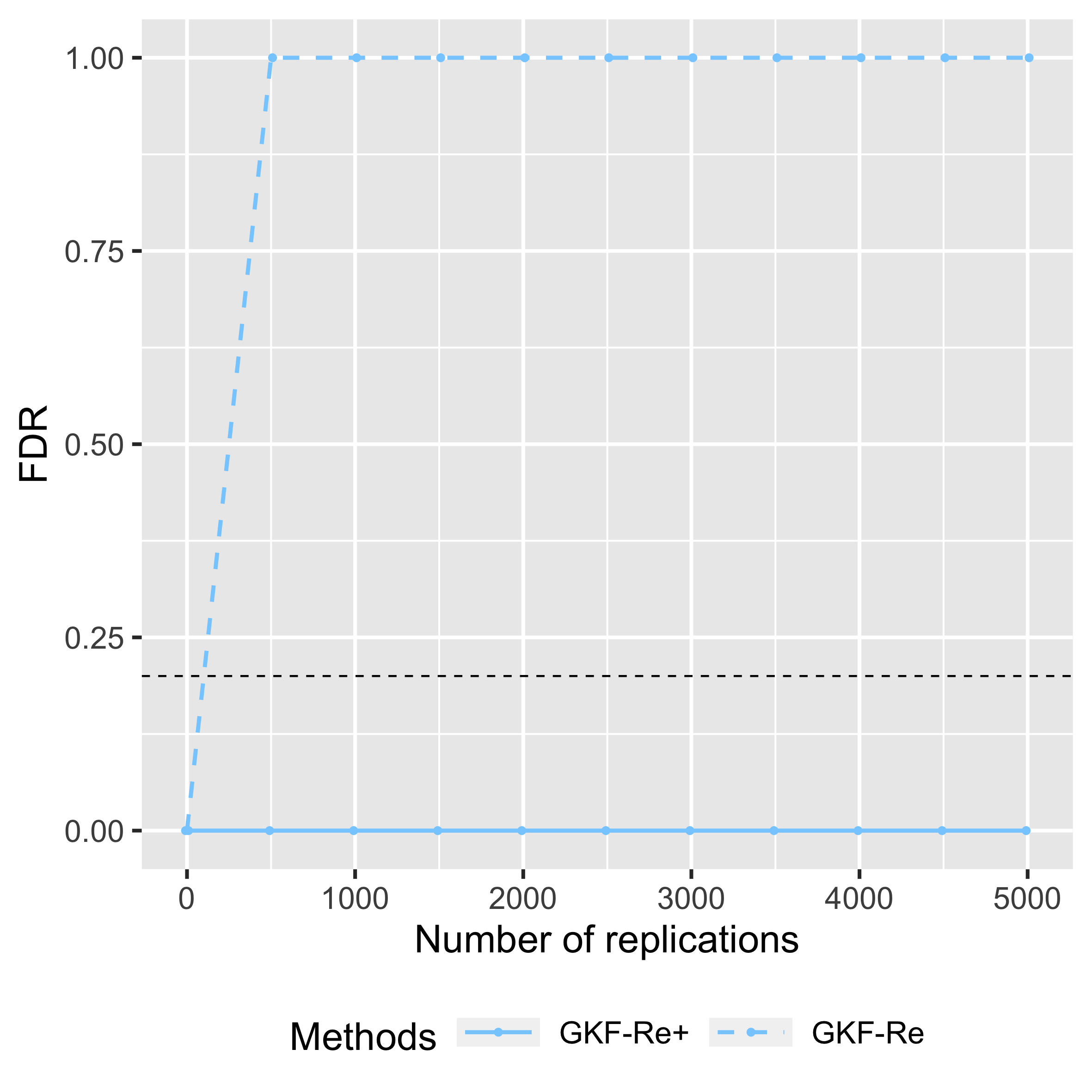}
	\caption{Simulation results: the empirical FDR of GKF-Re+ and GKF-Re on the empty graph in setting where the dimension $p=20$, the sample size $n=200$ and the nominal FDR level $q=0.2$. }
	\label{mFDREmptyGraph}
\end{figure}

\subsection{Normal QQ-plot for real data} \label{appendix: QQplot}

In this section, we show the corresponding normal QQ-plot for the real data used in Section \ref{SecRealData}. The original real data can be downloaded from \citep[S2 Appendix: \url{https://journals.plos.org/ploscompbiol/article/file?type=supplementary&id=info:doi/10.1371/journal.pcbi.1006369.s002}]{zheng2017massively}. As described in Section \ref{SecRealData}, we first preprocess the gene data by performing a $\log_2$(counts+1) transformation and a nonparanormal transformation \citep{liu2009nonparanormal}, then we take the $50$ genes with the largest sample variances. As it is hard to verify multivariate Gaussianity, we look at the normal QQ-plots of the $50$ genes to verify marginal Gaussianity. We emphasize that this is only an implication of the multivariate Gaussian assumption, and it does not fully verify multivariate Gaussianity. To avoid too many plots, we only present the QQ-plots of genes $1,10,20,30,40$ and $50$ (the indices are determined by the variance) in Figure \ref{QQ-plot}. One can see that the marginal Gaussianity seems reasonable here.
\begin{figure}[t!]
	\centering
	\includegraphics[width=8cm, height=8cm]{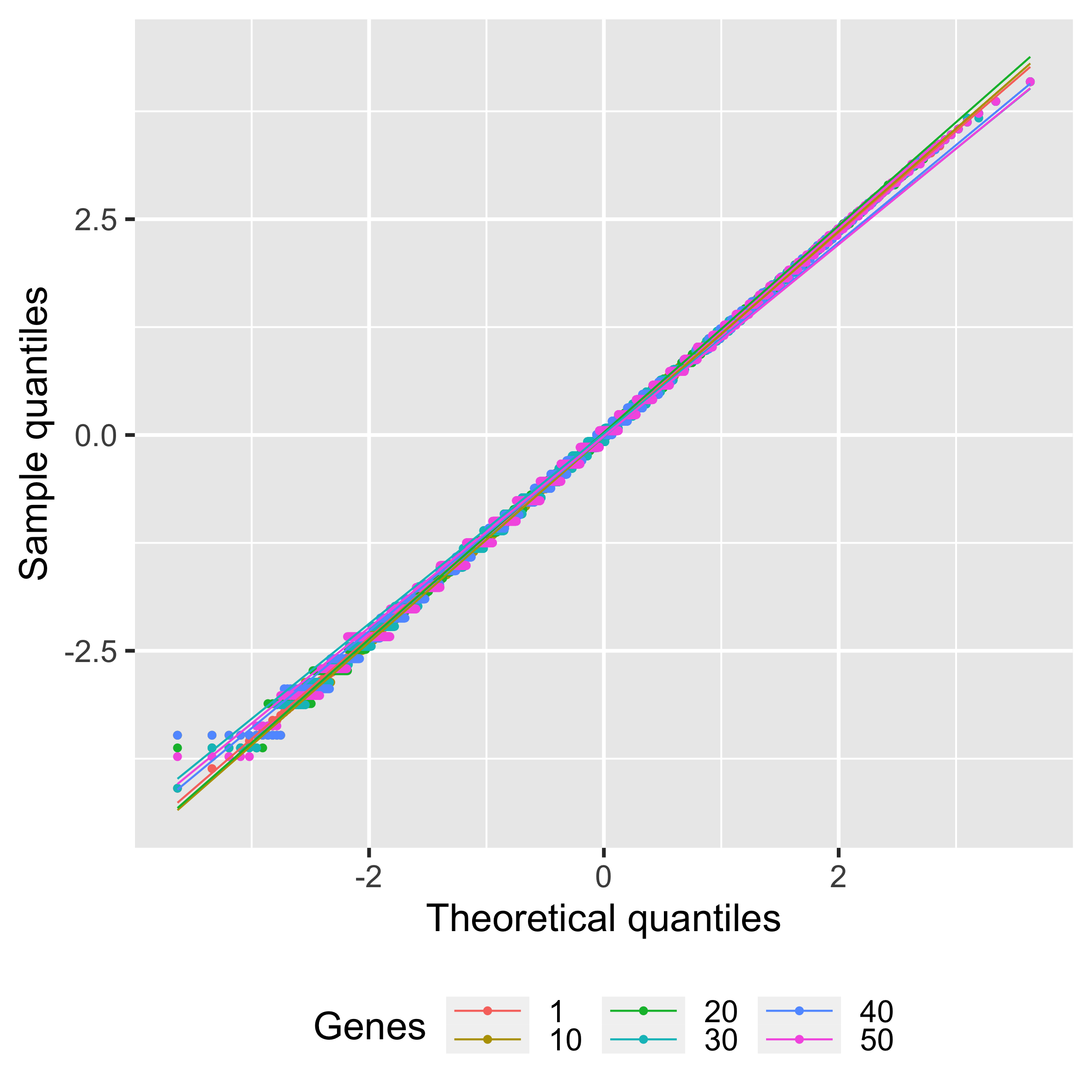}
	\caption{Real data: normal QQ-plots of gene $1,10,20,30,40$ and $50$.}
	\label{QQ-plot}
\end{figure}

\end{document}